\newcommand{\N}{{\mathbb{N}}}
\newcommand{\R}{{\mathbb{R}}}
\newcommand{\al}{\alpha}
\newcommand{\ga}{\gamma}
\newcommand{\ka}{\kappa}
\newcommand{\om}{\omega}
\newcommand{\Om}{\Omega}
\newcommand{\beq}{\begin{equation}}
\newcommand{\eeq}{\end{equation}}
\newcommand{\bdm}{\begin{displaymath}}
\newcommand{\edm}{\end{displaymath}}
\newcommand{\ba}{\begin{align}}
\newcommand{\ea}{\end{align}}
\newcommand{\bpf}{\begin{proof}}
\newcommand{\epf}{\end{proof}}
\newcommand{\eps}{\epsilon}
\newcommand{\Fumn}{F^{\mu \nu}}
\newcommand{\Fdmn}{F_{\mu \nu}}
\newcommand{\Furt}{F^{\rho \theta}}
\newcommand{\Fdrt}{F_{\rho \theta}}
\newcommand{\sph}{\mathbb{S}^2}
\newcommand{\calD}{\mathcal{D}}
\newcommand{\cF}{\mathcal{F}}
\newcommand{\calK}{\mathcal{K}}
\newcommand{\cJ}{\mathcal{J}}
\newcommand{\calO}{\mathcal{O}}
\newcommand{\calC}{\mathcal{C}}
\newcommand{\wa}{\Box\,}
\newcommand{\del}{\partial}
\newcommand{\half}{\frac{1}{2}}
\newcommand{\fq}{\mathcal{q}}
\newcommand{\SWL}{\mathfrak{Q}}
\newcommand{\Addresses}{
\bigskip
  \footnotesize

\noindent  V.~Hoang, M. Radosz, A. Harb,  \textsc{Department of Mathematics, University of Texas at San Antonio,
    San Antonio, Texas 78249 (USA)}\par\nopagebreak
\noindent A. DeLeon, \textsc{Department of Physics, University of Texas at San Antonio,
    San Antonio, Texas 78249 (USA)}\par\nopagebreak
\noindent  A. Baza, \textsc{Department of Mechanical Engineering, University of Texas at San Antonio,
    San Antonio, Texas 78249 (USA)}\par\nopagebreak
 \vspace{0.3cm}   
  \noindent \textit{E-mail address}, V.~Hoang: \texttt{duynguyenvu.hoang@utsa.edu}\\
  \textit{E-mail address}, M.~Radosz: \texttt{maria\_radosz@hotmail.com}\\
  \textit{E-mail address}, A.~DeLeon: \texttt{aarond.3044@gmail.com}\\
  \textit{E-mail address}, A.~Harb: \texttt{angel-harb@outlook.com}\\
  \textit{E-mail address}, A.~Baza: \texttt{alan.baza@my.utsa.edu}
}
\newtheorem{theorem}{Theorem}
\newtheorem{proposition}[theorem]{Proposition}
\newtheorem{lemma}[theorem]{Lemma}
\theoremstyle{definition}
\newtheorem{definition}[theorem]{Definition}
\newtheorem{remark}[theorem]{Remark}
\newcounter{assumptions}
\begin{document}

\title{Radiation reaction in higher-order electrodynamics}
\author{Alan Baza, Aaron DeLeon, Angel Harb, Vu Hoang, Maria Radosz}
\maketitle 

\begin{abstract}\noindent
This paper considers the relativistic motion of charged particles coupled with electromagnetic fields in the higher-order theory proposed by Bopp, Land\'e--Thomas, and Podolsky.
We rigorously derive a world-line integral expression for the self-force of the charged particle from a distributional equation for the conservation of four-momentum only. This naturally leads to an equation of motion for charged particles that incorporates a history-dependent self-interaction. We show additionally that the same equation of motion follows from a variational principle for retarded fields. The self-force coincides with an expression proposed by Zayats and Gratus--Perlick--Tucker on the basis of an averaging procedure. 
\end{abstract}


%
%

\section{Introduction}
Finding a consistent and well-posed set of dynamical equations for a system of charged particles and their electromagnetic fields is  a classical problem of relativistic physics. A major source of conceptual and mathematical trouble is the strong divergence of the electromagnetic Maxwell-Lorentz fields at the location of point charges. It causes an infinite energy-momentum of the electromagnetic field of a charged point particle, and an a-priori ill-defined Lorentz self-force. 

These problems led Lorentz, Abraham, and many of their contemporaries to conclude that electrons must have a finite size. This \emph{classical electron theory} had to make arbitrary assumptions about the shape and structure of their charge 
and bare mass distributions. For instance, the charged particles were modeled as balls of finite radius, assumed to be rigid in either in their rest frame (Lorentz) or in the frame of the hypothetical ether (Abraham). Yet on certain space and time scales the motion of such extended particles is independent of their size and shape; see \cite{Spohn2004} for a modern mathematical discussion.

In 1938, Dirac \cite{Dirac} reconsidered the motion of charged point particles in the spirit of Lorentz electrodynamics. He heuristically derived an equation of motion by energy-momentum considerations involving an ad-hoc averaging of the electromagnetic fields around a point singularity. A crucial element for Dirac's derivation is the \emph{negative bare mass renormalization} to cancel the infinite electromagnetic field energy of the point charges. The resulting Abraham-Lorentz-Dirac equation of motion is of third order in the time derivative of the particle's position. As a result, almost all of its solutions exhibit an unphysical run-away effect. This deficiency was ameliorated by Landau--Lifshitz who argued that the triple-derivative term must be treated as a small perturbation of the external forces. They thus arrived at a second order equation of motion which is free of run-away solutions. On certain space and time scales, the Landau-Lifshitz equation can be derived rigorously as an effective approximation to the Abraham-Lorentz equation for finite-size charged particles with positive bare mass (see \cite{Spohn2004}).

Another approach to the problem started in 1933 with a paper by Born \cite{Born33}. He argued that to avoid the infinite field energy problem of point charges in Lorentz electrodynamics one should instead work with the nonlinear theory of the classical electromagnetic field started by Mie \cite{Mie}. This avoids all the questions encountered in classical electron theory, i.e. the size, shape, and the charge and mass distributions of an electron but faces the challenge of choosing the correct nonlinearity for the field equations. Born and Infeld \cite{BornInfeld34} proposed a modification of Born's Lagrangian which yields the Maxwell--Born--Infeld field equations. Coupling these MBI field equations consistently with a classical theory of point-charge motion then results in the \emph{Maxwell-Born--Infeld (MBI) electrodynamics}. 

The MBI field equations are nonlinear and hence difficult to handle. For example, there seems to be no known way to obtain solutions of the field equations for arbitrarily prescribed point charge motions. In order to make progress towards understanding the coupling of point particles to the electromagnetic field, Bopp \cite{BoppA,BoppB}, Land\'e--Thomas \cite{Lande41_part1, Lande41_part2}, and Podolsky \cite{Podo42, PodoSchweb48} 
went into a different direction by proposing linear, but \emph{higher-order derivative} field equations to remove the infinite field energy problems. As with the MBI field equations, these linear field equations can be derived from a Lagrangian, which we term BLTP equations. The linearity of the BLTP field equations allows to solve them in great generality for \emph{prescribed motions} of their point sources. Going further to a dynamical description of point charge motion and their associated fields would now require an equation of motion containing a self-interaction term.

Defining the self-interaction or self-force for charged point particles is a highly non-trivial problem. The self-force problem in BLTP electrodynamics was studied in quite some generality by J. Gratus, W. Tucker and V. Perlick \cite{GratusPerlickTucker15}, following earlier studies in \cite{Lande41_part2} and \cite{Zayats2014}. To arrive at a definition of the self-force, the authors of \cite{GratusPerlickTucker15} invoke a \emph{averaging axiom}: the self field should be a suitable average of the field tensor away from the particle position (see also \cite{PPV}). The averaging has to be defined specifically, since in general the averaging of a discontinuous function could produce arbitrary values. 

The authors of \cite{GratusPerlickTucker15} conclude that
the self-force is given by the following integral in flat space-time:
\begin{align}\label{def_self2}
f^{a}_{\text{self}}(\tau) = 
e^2\ka^2 u_b(\tau) \int_{-\infty}^{\tau} \frac{J_2(\ka D(\tau, \tau')) (R^{a} u^{b} - R^{b} u^{a})}{D^2(\tau, \tau')}~d \tau'.
\end{align}
 Here, $J_2$ is a Bessel function, $u^a$ is the four-velocity, $q^a(\tau)$ the position of the charge, $R^a = q^a(\tau) - q^a(\tau')$ and $D=D(\tau, \tau')$ is 
the Minkowski distance between $q^a(\tau)$ and $q^a(\tau')$. The parameter $\ka > 0$ is \emph{Bopp's wave number}, contained 
in the BLTP field equations. Special cases of the self-force \eqref{def_self2} appeared in \cite{Lande41_part2} and \cite{Zayats2014}.

In this paper, we show that energy-momentum conservation, without additional axioms, naturally leads to the explicit expression (\ref{def_self2}) for the self-force on a moving point particle. The strength of our approach consists in showing that \eqref{def_self2} is in fact a consequence of energy-momentum conservation only. To provide a rigorous framework for point particles, we will regard the energy momentum tensor and its divergence as a distribution on space-time. In addition, we derive the equations of motion for $N$ charged particles involving their self-interaction. In the second part of the paper, we show that the equation of motions also follow from a variational principle that couples particles to their retarded fields from the outset. Finally, we show that the equation of motion for a single particle coupled to its retarded field and an external force has a global solution. 

Our work was inspired by a fundamental breakthrough in the formulation of the joint initial-value problem for 
charged point particles and their electromagnetic fields, achieved by M. Kiessling and A.S. Tahvildar-Zadeh in 
\cite{KiesslingTahvildarZadeh18} (see also \cite{Kiessling1, Kiessling2}), where a system of $N$ charged point particles together with their associated BLTP self-fields is considered. Starting from conservation of energy and momentum for the whole system, the authors of \cite{KiesslingTahvildarZadeh18} formulate the initial-value problem as a fixed-point equation on a space of suitable particle trajectories and prove local in-time well-posedness.

There exists a large body of literature on electrodynamic self-force and we do not attempt to give a comprehensive overview. A detailed discussion of the topic's history can be found in \cite{Spohn2004}. Issues arising from consideration of Maxwell-Lorentz theory without self-forces are discussed in \cite{Deckert}. Other aspects of BLTP electrodynamics and possible implications for the gravitational self-force problem can be found in \cite{Perlick15}. The particle dynamics considered in this paper generally lead to an integro-differential equation for the particle motion involving the past history of the particle. This is different from Wheeler-Feynman electrodynamics \cite{WF45, WF49}, which involves integro-differential equations containing the past \emph{and} future of the particle motion. For a mathematical study of the Wheeler-Feynman equations, we refer the reader to \cite{DeckertFW}. 

%
%

\section{Field equations}


\subsection{Notation and conventions}

\textbf{Space-time conventions.} Our setting will be flat Minkowski space-time $\mathbb{M}^4$ with signature $- +++$, $(x^\alpha)$ referring to events measured with respect to an inertial Lorentz frame. Units are chosen such that $c=1$. The field tensor $F_{a b}$ is related to the fields $E, B$ via
\begin{align*}
F_{i 0} = E_i, ~~ F_{i j} = \eps_{ijk} B_k
\end{align*}
where indices $i,j,k$ run over $1,\ldots, 3$.
Note that $B_i = \frac{1}{2} \eps_{i j k} F_{j k}$. Latin indices $a,b,c,d$ and greek indices run over $0, \ldots, 3$.

$g_{ab}$ and $g^{ab}$ will denote the Minkowski metric throughout. Differential operators are defined by $\del_b = (\del_0, \del_i), \del^b = g^{b a}\del_a$. To enhance the readability of formulas we will sometimes write $v \cdot w = v_\al w^\al$. 

The BLTP field equations read
\begin{align}\label{eq_field}
\begin{split}
(I - \ka^{-2} \wa) \del^{c} F_{c d} &= - 4 \pi j_d\\
\del_{a} F_{bc} + \del_b F_{ca} + \del_{c} F_{a b} &= 0
\end{split}
\end{align}
where $\wa = \del^b \del_b$ and $\ka > 0$ is a fixed parameter throughout. $\ka \to \infty$ recovers the Maxwell-Lorentz equations.
We notice that the field equations arise from variation of the following Lagrangian density:
\begin{align}\label{eq_Lagrangian}
\widehat{\mathcal{L}} = - \frac{1}{16\pi} F_{ab} F^{ab} - \frac{1}{8\pi \kappa^2} \del_c F^{ca} \del^d F_{da} + j_b A^b    
\end{align}
where $F_{cd}=\del_c A_d - \del_{d} A_c$ and $A^d$ is the vector potential.

Writing \eqref{eq_field} out in terms of $E$ and $B$ gives the equations
\begin{align*}
\begin{split}
\nabla \times E = - \dot B, ~~\nabla \times H - 4\pi j = \dot D\\
\nabla \cdot B = 0, ~~\nabla \cdot D = 4\pi \rho\\
D = (I + \ka^{-2}(\del_t^2 - \Delta))E, ~~
H = (I + \ka^{-2}(\del_t^2 - \Delta))B.
\end{split}
\end{align*}
Note that $\wa = \del^b \del_b = -\del_0^2 + \del_i \del^i$. 

In the following, $D(x^\alpha)$ denotes the Lorentz distance, i.e.
$$
D(x^a) = \sqrt{- g_{cd} x^c x^d}
$$
and we generally use the symbol $R^a$ for the expression
$$
R^a = x^a - q^a(\tau)
$$
where $q^a$ refers to a specific particle world-line (see below). Vector- and tensor-valued distributions on $\mathbb{M}^4$ will be identified with collections of distributions $\mathcal{D}'(\R^4)$, where $\mathcal{D}(\R^4)$ denotes the space of compactly supported $C_0^\infty$-functions. For example, a space-time tensor field $T^{cd}$ will be defined by a rule that associates a given Lorentz frame with a collection of 16 component distributions $T^{cd}\in \mathcal{D}'(\R^4)$ such that in under a change of Lorentz frame $(x^\alpha) \rightarrow (\bar x^\alpha)$, the components transform as
\begin{align}\label{eq_transform_tensor}
{\bar T}^{cd} = \Lambda^{c}_{r} \Lambda^{d}_{s} T^{rs}
\end{align}
where $\Lambda^c_r$ denotes the usual Lorentz transformation. Here the meaning of the right-hand side of \eqref{eq_transform_tensor} is to be understood in the following sense via the action
$$
\left(\Lambda^{c}_{r} \Lambda^{d}_{s} T^{rs}\right)[\varphi_{cd}] := T^{rs}[\Lambda^{c}_{r} \Lambda^{d}_{s}\varphi_{rs}] 
$$
on a smooth tensor field $\varphi_{cd}$.

The following definition delineates the class of particle trajectories we consider in this paper:
\begin{definition}\label{def_wl}
A subluminal world-line $\fq$ is the image of a $C^2$-mapping $\lambda \mapsto q^\alpha(\lambda)$ with the following properties:
\begin{enumerate}
    \item $q^\alpha(\lambda)$ is defined for all $\lambda \in \R$ and $\frac{d q^\alpha}{d\lambda}$ is a time-like, future pointing and nonzero for all $\lambda$
    \item For the spatial velocity vector $v = (v^1, v^2, v^3)$ defined by
    \begin{align}\label{def_spat_vel}
    v^i = \left(\frac{dq^0}{d\lambda}\right)^{-1} \frac{dq^i}{d\lambda}
    \end{align}
    the bound 
    \begin{align}\label{eq_bound_vel1}
    \sup_{\lambda \leq A} |v(\lambda)| < 1
    \end{align}
    holds for all $A\in \R$, $|v|$ being the Euclidean norm of $v$.
\end{enumerate}
The class of all subluminal world-lines is denoted by $$\SWL.$$
\end{definition}

For a given world-line $\fq$, we shall use $q^\alpha(\tau)$ to denote a particular representation using proper time $\tau$ as a parameter, i.e. $u^\alpha := \frac{d q^\alpha(\tau)}{d\tau}$ has the property $u^a u_a = -1$ and $u^\alpha$ is the usual four-velocity of the particle. In that case, we also use $a^c$ to denote the four-acceleration
\begin{align*}
a^c(\tau) &=  \ddot{q}^c
\end{align*}
of the particle. Moreover, we use $$q^\alpha(t)$$ for a parameterization with respect to coordinate time $t$.  Recall also that the Lorentz factor is defined by
\begin{align*}
    \gamma(t) = \frac{1}{\sqrt{1-v^2(t)}}
\end{align*}
along the world line. 

\begin{remark} 
\begin{enumerate}
\item[(a)] Equation \eqref{def_spat_vel} defines the spatial velocity with respect to any given Lorentz frame and \eqref{eq_bound_vel1} is equivalent to the statement that for any $t$ there exists a $K=K(t) < 1$ with
\begin{align}\label{eq_bound_vel}
|v(t')|\leq K(t) < 1 \quad (t'\leq t).
\end{align}
By the relativistic addition of velocities, the fact that \eqref{eq_bound_vel} holds in one Lorentz frame implies that a bound of the form \eqref{eq_bound_vel} holds in all Lorentz frames.
\item[(b)] Given a $\fq\in \SWL$ with parametrization $q^\alpha(\tau)$ and a $x^\alpha \in \R^4\setminus \fq$, there exists a unique retarded parameter value $\tau_{ret}=\tau_{ret}(x^\alpha)$ such that
$$
x^\alpha - q^\alpha(\tau_{ret})
$$
is light-like and $x^0 - q^0(\tau_{ret}) > 0$. The retarded position of the particle $q^a(\tau_{ret})$ is the intersection of the world-line with the backwards light-cone with apex at the event $(x^\alpha)$; depending on the context, it will also be useful to write $q^a(\tau_{ret}) = (t_{ret}, q(t_{ret}))$ where $t_{ret}$ is the retarded coordinate time (see Figure \ref{fig1}).
\end{enumerate}
\end{remark}

\begin{figure}
    \centering
    \includegraphics[scale=0.5]{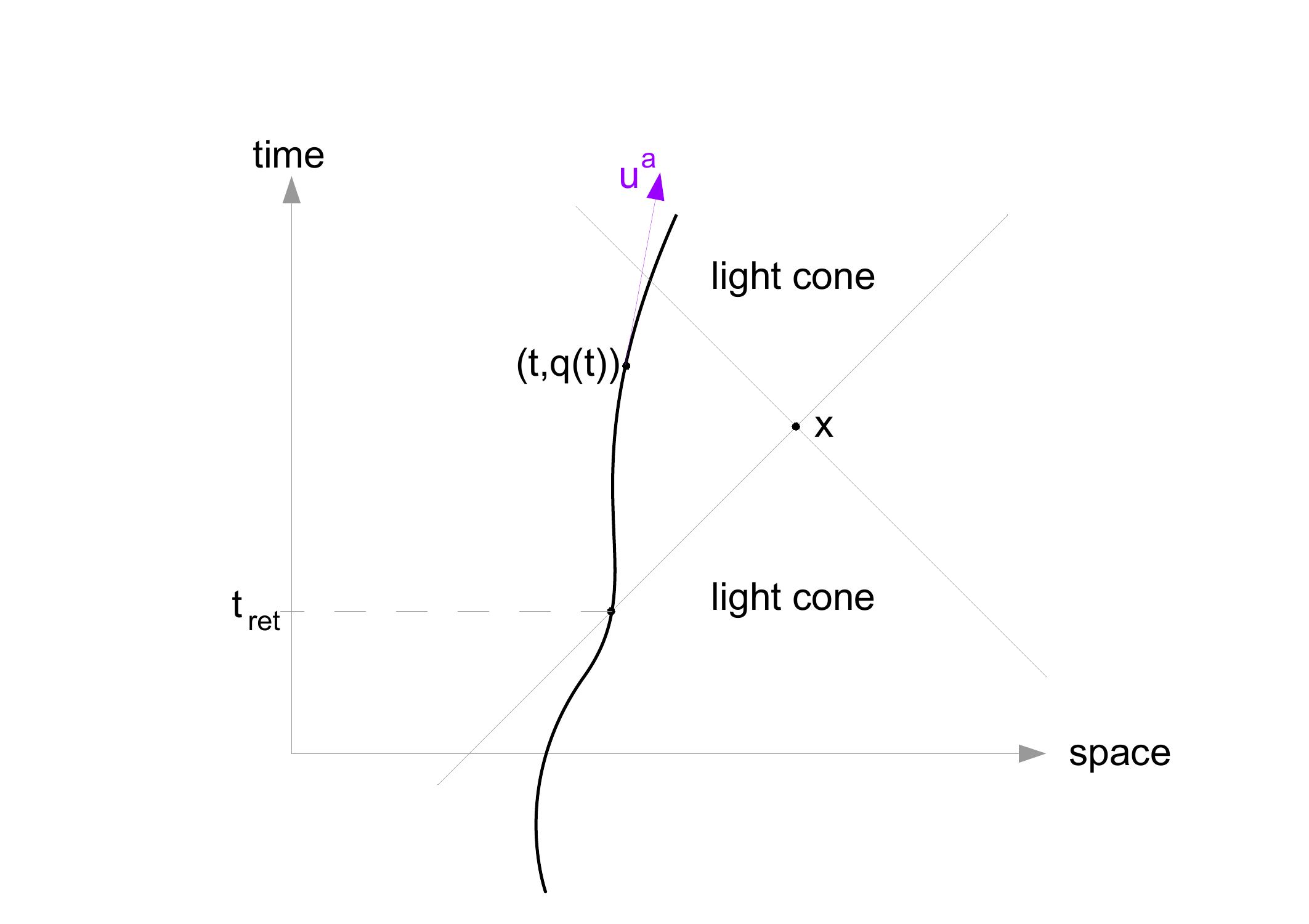}
    \caption{Intersection between particle world-line and backwards light cone}
    \label{fig1}
\end{figure}

For a given $\fq\in \SWL$, we also define curve integrals of the following types:
\begin{align}
\begin{split}\label{world_line_integrals}
    \int_{\fq} f~d\tau &:= \int_{\R} f(q^\alpha(\lambda)) \sqrt{g_{ab}~ \frac{dq^a}{d\lambda} \frac{dq^a}{d\lambda}}d\lambda\\
    \int_{\fq} f~dX^a &:= \int_{\R} f(q^\alpha(\lambda)) \frac{dq^a}{d\lambda} d\lambda\\
    \int_{\fq} F^a ~dX_a &:= \int_{\R} F^a(q^\alpha(\lambda))~g_{ab}\frac{dq^b}{d\lambda} d\lambda.
\end{split}
\end{align}

\noindent \textbf{Conventions for estimates.} Below we often need to bound the components of tensorial expressions. In general, these estimates will be valid for an arbitrary but given Lorentz frame, with constants depending on quantities computed from the velocities and accelerations of particles as referred to that particular Lorentz frame. Inequalities such as
$$
|F_{ca}(x^\alpha)|\leq C(\ldots)
$$
state a bound on all components of the tensor field $F_{ca}$.


\subsection{Retarded solution of the field equations}

In this subsection, we discuss retarded solutions of the field equations corresponding to moving point particles. These are well-known (see \cite{GratusPerlickTucker15, Podo42, PodoSchweb48}), nevertheless it will be useful to collect all formulas for easier reference. 
We start with the case of a single charged particle described by its world line $\fq\in \SWL$.
The current vector is the space-time distribution defined by
\begin{align}\label{eq_current}
j^a = e\int_{\fq} \delta^4(x^a-q^a(\tau))~d X^a= e \int_{-\infty}^\infty  u^a(\tau) \delta^4(x^a-q^a(\tau))~d\tau
\end{align}
where $e$ denotes the particle's charge.

The solution to the BLTP field equation for the vector four-potential $A_a$ can be obtained by the classical observation \cite{Podo42} that $A_a$ can be found by the ansatz $A_a = U_a - W_a$ where $U_a$ satisfies non-massive and $W_a$ satisfies a massive wave equation, i.e. 
\begin{align}\label{eq_diff_two_wav}
\wa U_a = -4\pi j_a, ~~ (\wa - \ka^{2}) W_a = -4\pi j_a.
\end{align}
Here, the potentials $U_a, W_a$ satisfying \eqref{eq_diff_two_wav} are assumed to satisfy the Lorentz gauge condition.

The world-lines we are considering here are subluminal in the sense of Definition \ref{def_wl}, so that for every $x^\alpha$ off the world-line, there exists a unique retarded proper time $\tau_{ret}=\tau(x^\alpha)$ so that
\begin{align}\label{eq_def_tau_ret}
0 = D(x^\al - q^\al(\tau_{ret})) = - g_{\al\beta} (x^\al-q^\al)(x^\beta-q^\beta).
\end{align}
The retarded Green's function for 
\begin{align}\label{eq:klein_gordon}
\wa u - \ka^2 u = - f
\end{align}
reads
\begin{align}\label{eq_Greens}
G^{\ka}(x^{\al}) = \frac{1}{2\pi} \delta (D(x^{\al})^2)\theta(x^0) - \ka \frac{J_1(\ka D(x^{\al}))}{4 \pi D(x^a)}\mathbf{1}_{\{ |x| < x^0 \}}
\end{align}
where $\theta$ denotes the Heaviside function (see \cite{Franklin} for more details).
Using \eqref{eq_Greens} this, we obtain
\begin{align}\label{eq_pot}
A^a(x^\alpha) = e\ka\int_{-\infty}^{\tau_{ret}(x^\alpha)} \frac{J_1(\ka D(x^\mu-q^\mu(\tau)))}{D(x^\mu-q^\mu(\tau))}  u^a(\tau)~d\tau
\end{align}
where $\tau_{ret}=\tau_{ret}(x^\al)$ is the retarded proper time. Observe that the potential $A^a$ can also be written in parametrization-independent form
\begin{align}\label{eq_pot11}
A^a(x^\alpha) = e\ka\int_{\fq\cap \cJ_-(x^\alpha)} \frac{J_1(\ka D(x^\mu-X^\mu))}{D(x^\mu-X^\mu)} ~dX^a
\end{align}
where $\cJ_-$ is the backwards light-cone with apex $(x^\alpha)$.
Before dealing with the convergence of \eqref{eq_pot} and the structure of the field tensors, we define
\begin{align*}
    R^\al = x^\al - q^\al(\tau), ~~ S = u_\ga R^\ga.
\end{align*}
The following relationships can be easily verified:
\begin{align}
\del_c (D(x^\alpha-q^\alpha)) &= - \frac{x_c-q_c}{D(x^\alpha - q^\alpha)}=-\frac{R_c}{D} \nonumber \\
\del_c \tau_{ret}(x^\alpha) &= \left.\frac{x_c-q_c}{u^a (x_a-q_a)}\right|_{ret} = \left.\frac{R_c}{S}\right|_{ret} \nonumber\\
\frac{d}{dz}\left(\frac{J_{n}}{z^n}\right) &= - \frac{J_{n+1}(z)}{z^n}\label{eq_someRel} \\
\del_c R^a(x^\alpha, \tau_{ret}(x^\alpha)) &= \left.\delta^a_c - u^a \frac{R_c}{S}\right|_{ret}\nonumber \\
\del_c R^a(x^\alpha, \tau) &= \delta_c^a\nonumber\\
\del_c S(x^\alpha, \tau_{ret}(x^\alpha)) &= \left.u_c + S^{-1}(1+ a_\ga R^\ga)R_c\right|_{ret}\nonumber
\end{align}
for any positive integer $n$ (see Appendix C for the third line of \eqref{eq_someRel}). In the first line of \eqref{eq_someRel} we refer to the derivative of $D$ with respect to $x^\alpha$ with $\tau$ \emph{held fixed}.
Also, square brackets will denote the following anti-symmetrization operation:
$$
T_{[a}U_{b]} = T_a U_b - T_b U_a.
$$
Note that we don't multiply by the customary factor of $\half$. We also note the following useful inequality
(this can be deduced from standard properties of Bessel functions, see e.g. \cite{Szeg}):
\begin{align}\label{eq_Kiess}
\frac{|J_\nu(x)|}{x^\nu} \leq \frac{C_\nu}{(1+x^2)^{\frac{\nu}{2}+\frac{1}{4}}}
\end{align}
holding for all $x\geq 0,  \nu \in \N_0$.

The following proposition turns out to be useful:
\begin{proposition}\label{prop_LW}
Let $F^{ab}$ be the field tensor for a single particle world-line \eqref{eq_field_tensor}. Then we have
$$
-\ka^{-2} \wa F^{ab} = -F^{ab}+\del^{[a} U^{b]}
$$
where $U^a$ is the Li\'enard-Wiechert potential given by
\begin{align}
U^a = -\left.\frac{e u^a}{u_\alpha R^\alpha}\right|_{ret}. \label{LienardWiechertU}
\end{align}
\end{proposition}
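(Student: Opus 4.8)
The plan is to establish the identity at the level of the four-potential $A^a$, where the Podolsky splitting makes it immediate, and then to differentiate.

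First I would recall the decomposition $A^a = U^a - W^a$ from \eqref{eq_diff_two_wav}, in which $U^a$ is the retarded, Lorenz-gauge solution of the massless wave equation $\wa U^a = -4\pi j^a$ and $W^a$ the retarded, Lorenz-gauge solution of the massive equation $(\wa - \ka^2)W^a = -4\pi j^a$, with $j^a$ the point-charge current \eqref{eq_current}. The first equation is solved by convolving $4\pi j^a$ with the massless retarded Green's function $\frac{1}{2\pi}\delta(D(x^\alpha)^2)\theta(x^0)$ (the first term of \eqref{eq_Greens}); collapsing the $\delta$ along the world-line by means of $\frac{d}{d\tau}D(x^\alpha - q^\alpha(\tau))^2 = 2\, u_\gamma R^\gamma$ --- only the retarded root surviving because of $\theta(x^0)$, and that root being simple since $\fq \in \SWL$ forces $u_\gamma R^\gamma \neq 0$ at retarded time --- reproduces exactly the Li\'enard--Wiechert expression \eqref{LienardWiechertU}. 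This identifies $U^a$.

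Next I would eliminate $j^a$ between the two wave equations. Subtracting them gives, as an identity in $\mathcal{D}'(\R^4)$,
\[
\wa A^a = \wa U^a - \wa W^a = -4\pi j^a - \bigl(\ka^2 W^a - 4\pi j^a\bigr) = -\ka^2 W^a = -\ka^2\,(U^a - A^a),
\]
that is, $(\wa - \ka^2) A^a = -\ka^2 U^a$. Since $F^{ab} = \del^{[a} A^{b]}$ by \eqref{eq_field_tensor}, and $\wa$, $\del^c$ and the antisymmetrisation bracket commute as operations on distributions, applying the map $B^b \mapsto \del^{[a} B^{b]}$ to the previous display yields
\[
\wa F^{ab} = \del^{[a}\bigl(\wa A^{b]}\bigr) = \del^{[a}\bigl(-\ka^2 U^{b]} + \ka^2 A^{b]}\bigr) = -\ka^2\, \del^{[a} U^{b]} + \ka^2 F^{ab},
\]
and dividing by $-\ka^2$ is precisely the assertion.

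I do not anticipate a genuine obstacle. The two points that deserve care are: that all the steps above are carried out in $\mathcal{D}'(\R^4)$, which is legitimate because $U^a$, $W^a$ and $A^a$ are locally integrable --- indeed $A^a$ is bounded, by \eqref{eq_Kiess} applied to \eqref{eq_pot} --- so that $\wa$ and $\del^c$ act weakly and commute freely; and the classical $\delta$-function calculus that produces \eqref{LienardWiechertU}, where the subluminality hypothesis of Definition \ref{def_wl} is exactly what guarantees the existence, uniqueness and non-degeneracy of the retarded contribution.
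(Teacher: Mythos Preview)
Your proof is correct and is precisely the ``direct calculation from the classical observation \eqref{eq_diff_two_wav}'' that the paper invokes; you have simply made explicit the elimination of $j^a$ between the two wave equations and the subsequent antisymmetrisation that the paper leaves to the reader. The additional care you take with the distributional framework and the identification of $U^a$ via the $\delta$-collapse is more than the paper provides, but entirely in line with its argument.
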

\begin{proof}
This follows by direct calculation from the classical observation \eqref{eq_diff_two_wav} of Podolsky and others that the solution of the BLTP equations can be represented as the difference between two vector potentials. 
\end{proof}

\begin{proposition}\label{prop_conv}
Suppose that $\fq\in \SWL$ is a subluminal world-line and that $\tau\mapsto q^\al(\tau)$ denotes a parametrization with respect to proper time. 
Then \eqref{eq_pot} converges absolutely for all $x^\alpha$ off the world-line of the particle and defines a $C^2$-function of $x^\alpha$ away from the world-line. The field tensor is given by the absolutely convergent expression 
\begin{align}\label{eq_field_tensor}
F_{ca}(x^\alpha) &= e\ka^2 \left.\frac{ R_{[c}u_{a]}}{2 u^b R_b}\right|_{ret} 
+e \ka^2  \int_{-\infty}^{\tau_{ret}(x^\alpha)}
\frac{J_2(\ka D) R_{[c}u_{a]}}{D^2}d\tau\\
& = - \left.\frac{\ka^2}{2} R_{[c} U_{a]}\right|_{ret} + e \ka^2  \int_{-\infty}^{\tau_{ret}(x^\alpha)}
\frac{J_2(\ka D) R_{[c}u_{a]}}{D^2}d\tau
\end{align}
Moreover, the integral in \eqref{eq_field_tensor} is bounded by a constant depending only on $\ka$ and $K(t_{ret})$, where $t_{ret}$ is the retarded coordinate time of $x^\al$ and $K$ is defined in \eqref{eq_bound_vel}. 
The derivatives of $F^{ab}$ are given by
\begin{align}\label{eq_dev_field_tensor}
\begin{split}
\frac{1}{e\ka^2} \del_c F^{ab} &= \left.\half\left[- S^{-3} (1+a_\alpha R^\alpha) R_c - S^{-2}u_c \right] R^{[a} u^{b]}\right|_{ret} + \left.\half S^{-2} R_c R^{[a} a^{b]}\right|_{ret}\\
&\quad + \left.\half S^{-1} \delta_{c}^{[a} u^{b]}\right|_{ret} + \left.\frac{\ka^2}{8}  S^{-1} R_c R^{[a} u^{b]}\right|_{ret}\\
&\quad + \ka \int_{-\infty}^{\tau_{ret}(x^\alpha)}  \frac{J_3(\ka D)}{D^3} R_c R^{[a} u^{b]}~d\tau
+ \int_{-\infty}^{\tau_{ret}(x^\alpha)} \frac{J_2(\ka D)}{D^2} \delta_c^{[a} u^{b]}~d\tau
\end{split}
 \end{align} 
where all the integrals are absolutely convergent. \eqref{eq_field_tensor} solves the field equations \eqref{eq_field} with distributional right-hand side \eqref{eq_current}.
\end{proposition}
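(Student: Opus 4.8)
The plan is to obtain every claim by direct estimation and differentiation of the retarded world-line integral \eqref{eq_pot}, together with the representation $A_a = U_a - W_a$ of \eqref{eq_diff_two_wav} for the distributional field equations. The single geometric input I will use everywhere is a lower bound for the Minkowski distance $D$ along the integration range. Fix $x^\alpha$ off the world-line, set $R^\alpha(\tau) = x^\alpha - q^\alpha(\tau)$, $\rho := x^0 - q^0(\tau_{ret}) > 0$, and for $\tau \le \tau_{ret}$ write $R(\tau) = R(\tau_{ret}) + \Delta(\tau)$ with $\Delta(\tau) = \int_\tau^{\tau_{ret}} u^\beta(s)\,ds$. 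Here $R(\tau_{ret})$ is null future-pointing while $\Delta(\tau)$ is timelike future-pointing with $-\Delta(\tau)\cdot\Delta(\tau) \ge (\tau_{ret}-\tau)^2$ (timelike geodesics maximize proper time), so expanding $D(\tau)^2 = -R(\tau)\cdot R(\tau) = -2\,R(\tau_{ret})\cdot\Delta(\tau) - \Delta(\tau)\cdot\Delta(\tau)$ and noting that both terms on the right are nonnegative gives $D(\tau) \ge \tau_{ret}-\tau$. Invoking in addition the uniform bound $|v(s)|\le K := K(t_{ret})$ for $s\le\tau_{ret}$ from \eqref{eq_bound_vel} sharpens this to $D(\tau)^2 \ge c(K)\,\rho\,(\tau_{ret}-\tau) + (\tau_{ret}-\tau)^2$, and yields the companion bound $\norm{R(\tau)} \le C(K)\,(\rho + \tau_{ret}-\tau)$ for the Euclidean norm. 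Finally, $\tau_{ret}(x^\alpha)$ is a $C^2$ function off the world-line by the implicit function theorem applied to $D(x - q(\tau))^2 = 0$, since $\partial_\tau D^2 = 2S$ and $S = u\cdot R \ne 0$ there.

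With this in hand, absolute convergence of \eqref{eq_pot} follows at once: by \eqref{eq_Kiess} with $\nu = 1$ the integrand is $O(|\tau|^{-3/2})$ as $\tau \to -\infty$, while near $\tau = \tau_{ret}$ one has $D\to 0$ but $J_1(\ka D)/D \to \ka/2$, so the integrand extends continuously to the endpoint. Since all these bounds are locally uniform in $x^\alpha$, I would differentiate \eqref{eq_pot} under the integral sign: the moving upper limit contributes a boundary term at $\tau_{ret}$, computed from $\lim_{D\to0} J_1(\ka D)/D = \ka/2$ and $\partial_c\tau_{ret} = R_c/S|_{ret}$ (from \eqref{eq_someRel}), while inside the integral $\partial_c D = -R_c/D$ together with the Bessel recursion $\frac{d}{dz}(J_1(z)/z) = -J_2(z)/z$ (the $n=1$ case of \eqref{eq_someRel}) produces the $J_2$-kernel. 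Antisymmetrizing $\partial_c A_a$ in $(c,a)$ gives exactly the first line of \eqref{eq_field_tensor}; the second line is then just the substitution $U_a = -e\,u_a/S|_{ret}$ from \eqref{LienardWiechertU}. For the stated bound on the integral term, I would split the $\tau$-integral at $\tau_{ret}-\tau = \rho$: on $\{\tau_{ret}-\tau \le \rho\}$ use $\norm{R}\le C(K)\rho$ and $D^2 \ge c(K)\rho(\tau_{ret}-\tau)$, and on $\{\tau_{ret}-\tau \ge \rho\}$ use $\norm{R}\le C(K)(\tau_{ret}-\tau)$ and $D^2 \ge (\tau_{ret}-\tau)^2$; in each piece the change of variable linearizing the argument of the Bessel bound makes the powers of $\rho$ (and of $\ka$) cancel, leaving a bound depending only on $\ka$ and $K(t_{ret})$.

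The derivative formula \eqref{eq_dev_field_tensor} comes from differentiating \eqref{eq_field_tensor} once more. The explicit ``instantaneous'' term $\tfrac{e\ka^2}{2}S^{-1}R_{[c}u_{a]}|_{ret}$ is handled with the $\tau_{ret}$-versions of \eqref{eq_someRel} for $\partial_c R^a$ and $\partial_c S$ together with the chain rule $\partial_c u^b = a^b R_c/S|_{ret}$, the identity $u^{[a}u^{b]} = 0$ killing one term; this produces the first three terms on the right of \eqref{eq_dev_field_tensor}. Differentiating the integral term gives a boundary term via $\lim_{D\to0}J_2(\ka D)/D^2 = \ka^2/8$ (this is the $\tfrac{\ka^2}{8}S^{-1}R_c R^{[a}u^{b]}|_{ret}$ term) plus two integrals: the $J_3$-kernel term via the $n=2$ case $\frac{d}{dz}(J_2(z)/z^2) = -J_3(z)/z^2$ of \eqref{eq_someRel} and $\partial_c D = -R_c/D$, and the $J_2$-kernel term via $\partial_c R^a = \delta_c^a$ (with $\tau$ held fixed inside the integral). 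The Step-1 estimate with \eqref{eq_Kiess} for $\nu = 2,3$ again makes both new integrands $O(|\tau|^{-3/2})$, so they converge absolutely and the differentiation is justified; since $q^\alpha$ is only $C^2$ the acceleration $a^b$ here is merely continuous, consistent with $A^a\in C^2$, $F^{ab}\in C^1$, $\partial_c F^{ab}\in C^0$ off the world-line. For the field equations \eqref{eq_field}: the homogeneous (Bianchi) part holds because $A_a$ is locally integrable (indeed bounded across the world-line) and $F_{ca} = \partial_c A_a - \partial_a A_c$, so $\partial_{[a}F_{bc]} = 0$ in $\mathcal{D}'(\R^4)$. For the inhomogeneous part, recall from the Podolsky construction that $A_a = U_a - W_a = (G^0 - G^\ka)\ast 4\pi j_a$ with $G^0$, $G^\ka$ the massless and massive retarded Green's functions — this is exactly how \eqref{eq_pot} arises, since $G^0 - G^\ka = \ka\frac{J_1(\ka D)}{4\pi D}\mathbf{1}_{\{|x|<x^0\}}$ by \eqref{eq_Greens} — so $\wa U_a = -4\pi j_a$ and $(\wa - \ka^2)W_a = -4\pi j_a$ hold in $\mathcal{D}'(\R^4)$ with $U_a, W_a$ in Lorentz gauge. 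Then $\partial^c F_{cd} = \wa A_d$, and a short computation using $\wa U_d = -4\pi j_d$ and $\wa W_d = \ka^2 W_d - 4\pi j_d$ gives $(I - \ka^{-2}\wa)\wa A_d = -4\pi j_d$, which is \eqref{eq_field}.

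I expect the main obstacle to be the bound on the integral term in \eqref{eq_field_tensor}: showing it is controlled by $\ka$ and $K(t_{ret})$ \emph{alone}, uniformly in how far $x^\alpha$ lies from its retarded position, needs the sharp lower bound $D(\tau)^2 \ge c(K)\rho(\tau_{ret}-\tau) + (\tau_{ret}-\tau)^2$ rather than merely $D(\tau) \ge \tau_{ret}-\tau$, together with the two-region splitting that converts the apparent linear-in-$\rho$ growth of $\norm{R(\tau)}$ into a convergent, $\rho$-independent integral. The remaining points — justifying differentiation under the integral with a moving endpoint, the Bessel-function identities, and the distributional computation — are routine once the geometric estimate and the Green's-function representation are in place.
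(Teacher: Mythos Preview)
Your proposal is correct and follows essentially the same route as the paper: a lower bound on $D$ along the past world-line (the paper isolates this as Proposition~\ref{prop_lower_bound_D}), the Bessel decay estimate \eqref{eq_Kiess}, justification of differentiation under the integral with a moving endpoint, and the $A=U-W$ splitting for the distributional field equations. One point is worth flagging: for the claim that the integral in \eqref{eq_field_tensor} is bounded by a constant depending only on $\kappa$ and $K(t_{ret})$, the paper's stated bound $D^2\ge(1-K^2)(\tau_{ret}-\tau')^2$ alone leaves a residual factor of $\rho=t-t_{ret}$ after integration (since $|R|\lesssim \rho+(\tau_{ret}-\tau')$), so your sharper estimate $D^2\ge c(K)\rho(\tau_{ret}-\tau')+(\tau_{ret}-\tau')^2$ and the two-region split are exactly what is needed and in fact make this step more explicit than the paper's own treatment. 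Conversely, the paper checks the Lorentz gauge $\partial_aA^a=0$ as a separate distributional computation (Proposition~\ref{prop_A_div}), whereas you assert it for $U_a,W_a$; this does follow from $\partial_aj^a=0$ applied to the retarded convolutions, but you should say so.
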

The proof can be found in Appendix A.
\begin{remark}
The question of absolute convergence of \eqref{eq_field_tensor} was first addressed in \cite{GratusPerlickTucker15}, where it was shown that \eqref{eq_field_tensor} is absolutely convergent for all world-lines that are bounded away from the past light cone. A world-line is not bounded away from the light-cone if and only if there exists a sequence of proper times $\tau_k \to -\infty$ such that for some $x^\al$, 
$$
\frac{x^\al - q^\al(\tau_k)}{x^0 - q^0(\tau_k)}
$$
approaches a light-like vector. \cite{GratusPerlickTucker15} also discusses an example of a world-line with diverging integral \eqref{eq_field_tensor}.
Throughout this work, we will impose a more restrictive condition on the world-lines $\fq$, namely that $\fq\in \SWL$. As shown in Proposition \ref{prop_conv}, this implies the convergence of \eqref{eq_field_tensor} and guarantees that $F^{cd}\in C^2$ away from the world-line.
\end{remark}

\begin{remark} \eqref{eq_field_tensor} is the analogue of the field tensor derived from the well--known Li\'{e}nard--Wiechert potentials of Maxwell--Lorentz electrodynamics. We follow the usual convention in the physics literature and call \eqref{eq_field_tensor} the field \emph{associated with the particle with world-line} $\fq$. In particular, when $N$ particles with world-lines $\fq_N$ are present, we will refer to the superposition of the $N$ fields of the type \eqref{eq_field_tensor} as the field produced by the $N$ point charges.  
From the point of view of the field equations \eqref{eq_field}, such an association is to a certain extent arbitrary because one can always add a source-free solution to a field representing $N$ point charges and still obtain a solution with the same source. In this sense the ``field of the $N$ point charges'' is not an unambiguously defined term.
\end{remark}

\begin{remark}
In contrast to Maxwell-Lorentz electrodynamics, the retarded field strengths of a particle in BLTP electrodynamics do not diverge to infinity as $x^a$ approaches the world line of the particle. This, however, does not mean that the limit of the field tensor exists on the world line, since the first term in \eqref{eq_field_tensor} has a strong directional dependence. See the discussion in \cite{GratusPerlickTucker15, Perlick15}.
\end{remark}


\subsection{Retarded light-cone coordinates}

It will useful to use retarded light-cone coordinates \cite{Perlick15}, in combination with a Fermi-propagated frame. First recall the concept of Fermi derivative \cite{HawkingEllisBook}:
\begin{definition}
For a vector field $V^\al$ defined along a world-line $\fq\in \SWL$, the Fermi derivative is given by
\begin{align}
    D_F V^\al = \frac{d V^\al}{d\tau} - (V\cdot \Ddot{q}) u^\al + (V\cdot u) \Ddot{q}^\al
\end{align}
where $V\cdot u = V^\ga u_\ga, V\cdot \Ddot{q} = V^\ga \Ddot{q}_\ga$.
\end{definition}
It has the following property:  if $V^\al, W^\al$ are two vector fields along $q^\al(\tau)$, then:
\begin{align}\label{eq_fermi}
    D_F V = 0, ~~D_F W = 0~\text{implies that $V\cdot W$ is constant along $\fq$}.
\end{align}
For the calculations below it will be useful to introduce a Fermi-transported tetrad. We write
\begin{align}
    e_0^\al(\tau) = u^\al(\tau) 
\end{align}
and define vectors $e_i^\al$ as follows. Pick an arbitrary $\tau_0$ and a positively oriented orthonormal set of vectors 
\begin{align}
    e_1^\al(\tau_0), e_2^\al(\tau_0), e_3^\al(\tau_0)
\end{align}
so that $e_i(\tau_0)$ are orthogonal to $e_0(\tau_0)$. Now define $e_1^\al(\tau), e_2^\al(\tau), e_3^\al(\tau)$ on the entire world-line by Fermi transport, i.e. by solving the differential equation
\begin{align}
    0 = D_F e_i^\al, ~~\text{i.e.}~~\dot{e}_i^\al = (e_i \cdot \ddot{q}) u^{\al} - (e_i \cdot u) \Ddot{q}^\al.
\end{align}
Note that the Fermi-transported vectors remain orthonormal for all $\tau$ as a consequence of \eqref{eq_fermi}. The tetrad is not uniquely defined, but depends on the choice of the $e_i^\al(\tau_0)$ at $\tau_0$. As a useful fact, we record
\begin{align}
e_1^\al e_1^\beta + e_2^\al e_2^\beta + e_3^\al e_3^\beta= g^{\al\beta} + u^\al u^\beta.
\end{align}
Next we define retarded light-cone coordinates $(\tau, r, \theta, \varphi)$ associated to every world-line as follows: $(\tau, r, \theta, \varphi)$ is mapped into Minkowski coordinates $(x^\al)$ via
\begin{align}\label{light_cone_coord}
\begin{split}
    N^\al(\tau, \theta, \varphi) &= \cos\varphi \sin \theta ~e_1^\al(\tau) + \sin\varphi \sin\theta ~e_2^\al(\tau) + \cos \theta ~e_3^\al(\tau)\\
    X^\al(\tau, r, \theta, \varphi) &= q^\al(\tau) + r(u^\al(\tau) + N^\al(\tau, \theta, \varphi)). 
\end{split}
\end{align}
\begin{definition}
A world-tube of size $\eps$ around $q^\al(\tau)$ is defined as 
\begin{align*}
    W_\eps = \{ X^\al(\tau, r', \theta, \varphi) : \tau \in (-\infty, \infty), 0\leq r' \leq \eps, \theta\in [0,\pi], \varphi\in [0, 2\pi] \}.
\end{align*}
For an illustration, see Figure \ref{figWT}.
\begin{figure}[h]
    \centering
    \includegraphics[scale=0.4]{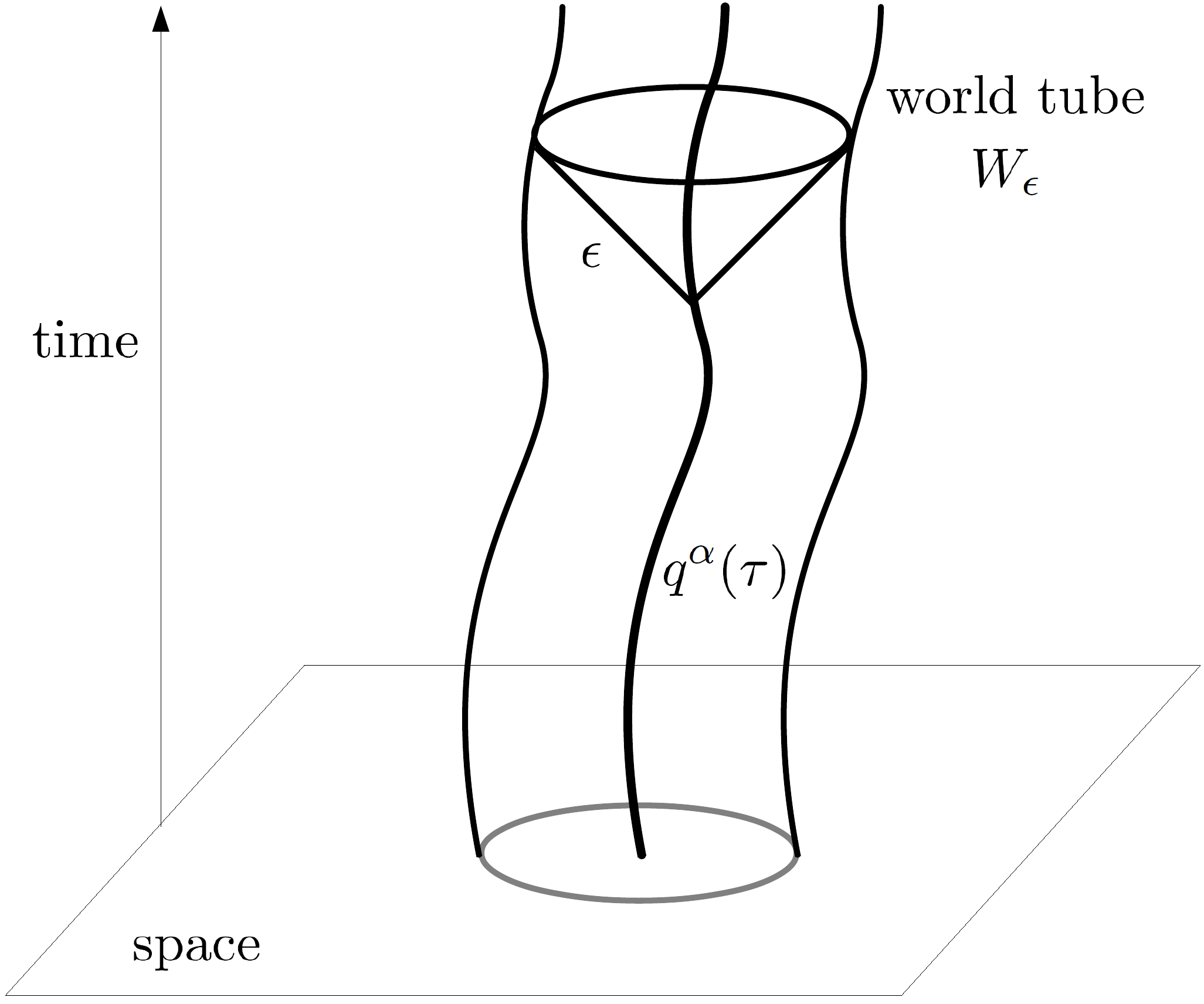}
    \caption{World-tube}
    \label{figWT}
\end{figure}
\end{definition}

Associated to the coordinate transformation $(\tau, r, \theta, \varphi) \mapsto X^\al(\tau, r, \theta, \varphi)$ we have the vector fields $\del_\tau, \del_r, \del_\theta, \del_\varphi$. These form a frame at each point $x^\al\notin \fq$ and are given by
\begin{align}\label{def_vectors_E}
\begin{split}
    \del_\tau & = u + r( \ddot{q} + (N \cdot \ddot{q}) u)\\
    \del_{r} &= u + N\\
    \del_{\theta} &= r \left(\cos \theta \cos \varphi~ e_1 + \cos \theta \sin \varphi~ e_2 - \sin \theta ~e_3\right) \\
    \del_{\varphi} &= r(-\sin\varphi\sin\theta ~e_1 + \cos\varphi\sin\theta~ e_2). \end{split}
\end{align}
\begin{remark} Using the covectors $d\tau, dr, d\theta, d\varphi$, we can express the canonical volume form as
$\sqrt{-g}~ d\tau \wedge dr \wedge d\theta \wedge d\varphi$, where in the $(\del_\tau, \del_r\, \del_\theta, \del_\varphi)$ frame the metric determinant $g$ is given by
\begin{align}
    g = \det{[g(\del_\al, \del_\beta)]} = -r^4 \sin^2 \theta. 
\end{align}
\end{remark}
The following Lemma takes care of integration on $\del W_\eps$:
\begin{lemma}\label{lem_integral_over_world_tube}
On the boundary $\del W_\eps$, we have an outward-oriented three-dimensional surface element $d\Sigma_d$ (see e.g. \cite{Poisson2004}) given by 
\begin{align}\label{eq_surf_elem}
d\Sigma_d =  \om_d\,\eps^2 \sin \theta\, d\tau d\theta d\varphi
\end{align}
where
\begin{align}
    \om_d = \eps (\Ddot{q}\cdot N) u_d + (1+\eps \Ddot{q}\cdot N) N_d
\end{align}
where $N$ is defined in \eqref{light_cone_coord}.
\end{lemma}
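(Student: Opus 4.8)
The plan is to recognize $\del W_\eps$ as the level hypersurface $\{r=\eps\}$ of the retarded light-cone coordinate $r$ and to apply the standard formula for the outward-directed surface element of such a hypersurface, namely $d\Sigma_d = \sqrt{-g}\,(dr)_d\,d\tau\,d\theta\,d\varphi$ evaluated at $r=\eps$, where $\sqrt{-g}$ is the metric determinant in the $(\tau,r,\theta,\varphi)$ chart and $(dr)_d$ denotes the ambient Minkowski-frame components of the coordinate differential $dr$. This is the normalization for which the divergence theorem $\int_{W_\eps}\nabla_a V^a\,\sqrt{-g}\,d^4x=\oint_{\del W_\eps}V^a\,d\Sigma_a$ holds (cf. \cite{Poisson2004}): in the $y=(\tau,r,\theta,\varphi)$ chart one writes $\sqrt{-g}\,\nabla_a V^a=\del_{y^b}(\sqrt{-g}\,V^{(b)})$, integrates in $r$ from $0$ to $\eps$, and observes that the inner face $\{r=0\}$ contributes nothing since $\sqrt{-g}=r^2\sin\theta$ vanishes there, while increasing $r$ is the outward direction. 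Since $\sqrt{-g}=r^2\sin\theta$ has already been recorded, on $\{r=\eps\}$ the scalar prefactor is $\eps^2\sin\theta$, and the whole problem reduces to computing the covector $dr$ in the Minkowski frame.

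For that I would use that $dr$ is the member of the coframe dual to $(\del_\tau,\del_r,\del_\theta,\del_\varphi)$ singled out by $\langle dr,\del_r\rangle=1$ and $\langle dr,\del_\tau\rangle=\langle dr,\del_\theta\rangle=\langle dr,\del_\varphi\rangle=0$. From \eqref{def_vectors_E} and the relations $u\cdot u=-1$, $N\cdot N=1$, $u\cdot N=0$, $u\cdot\ddot{q}=0$, $e_i\cdot e_j=\delta_{ij}$, $e_i\cdot u=0$, one checks that both $\del_\theta$ and $\del_\varphi$ are orthogonal to $u$ and to $N$. Consequently the ansatz $(dr)_d=a\,u_d+b\,N_d$ automatically annihilates $\del_\theta$ and $\del_\varphi$, and only the two scalar conditions $\langle dr,\del_r\rangle=1$ and $\langle dr,\del_\tau\rangle=0$ remain. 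A short computation from \eqref{def_vectors_E} gives $\del_r\cdot u=-1$, $\del_r\cdot N=1$, $\del_\tau\cdot u=-1-r(\ddot{q}\cdot N)$ and $\del_\tau\cdot N=r(\ddot{q}\cdot N)$, so the system becomes $b-a=1$ and $-a\bigl(1+r(\ddot{q}\cdot N)\bigr)+b\,r(\ddot{q}\cdot N)=0$. Solving gives $a=r(\ddot{q}\cdot N)$ and $b=1+r(\ddot{q}\cdot N)$, that is $(dr)_d=r(\ddot{q}\cdot N)\,u_d+\bigl(1+r\,\ddot{q}\cdot N\bigr)N_d$. By uniqueness of the dual coframe, this covector in $\mathrm{span}\{u_d,N_d\}$ really is $dr$.

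Evaluating at $r=\eps$ produces exactly the vector $\om_d$ of the statement, and multiplying by the prefactor $\eps^2\sin\theta$ gives $d\Sigma_d=\om_d\,\eps^2\sin\theta\,d\tau\,d\theta\,d\varphi$, which is \eqref{eq_surf_elem}. The argument is essentially a bookkeeping exercise; the only points that need a little care are conventional — pinning down the overall sign and normalization of $d\Sigma_d$ so that it is outward-oriented in the sense of Gauss's theorem, and invoking uniqueness of the dual coframe to conclude that the covector found within $\mathrm{span}\{u_d,N_d\}$ is genuinely $dr$ rather than merely agreeing with it on part of the frame. There is no analytic obstacle; the substantive content is the small linear system solved above.
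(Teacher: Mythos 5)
Your argument is correct and fills in the computation the paper only asserts. The paper's proof invokes the definition $d\Sigma_d = \sqrt{-g}\,[d\,\al\,\be\,\ga]\,\del_\tau X^\al\,\del_\theta X^\be\,\del_\varphi X^\ga\,d\tau\,d\theta\,d\varphi$, i.e.\ the Levi-Civita contraction of the three tangent vectors to $\{r=\eps\}$, and declares the rest a straightforward computation; you instead use the equivalent identity $d\Sigma_d=\sqrt{-g}\,(dr)_d\,d\tau\,d\theta\,d\varphi$ and determine $(dr)_d$ from the duality conditions against the frame \eqref{def_vectors_E}. The two formulas describe the same object (the epsilon-contraction of three frame vectors equals, up to orientation, $\sqrt{-g}$ times the dual covector of the fourth), so this is not a different method so much as a different organization of the same bookkeeping: your route trades the determinant-style expansion for a $2\times 2$ linear system in the coefficients of $a\,u_d+b\,N_d$, after observing that $\del_\theta$ and $\del_\varphi$ are orthogonal to both $u$ and $N$ so that the remaining two duality conditions close. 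The inner products you compute ($\del_r\cdot u=-1$, $\del_r\cdot N=1$, $\del_\tau\cdot u=-1-r(\ddot q\cdot N)$, $\del_\tau\cdot N=r(\ddot q\cdot N)$) are right, and the solution $a=r(\ddot q\cdot N)$, $b=1+r(\ddot q\cdot N)$ reproduces $\om_d$ at $r=\eps$. A small bonus of your version is that the outward orientation and normalization are pinned down explicitly through the coordinate form of the divergence theorem (with the inner face $r=0$ contributing nothing because $\sqrt{-g}=r^2\sin\theta$ vanishes there), a point the paper leaves to the citation of \cite{Poisson2004}.
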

\begin{proof}
This is a straightforward computation using the definition of 
\begin{align}
d\Sigma_d = \sqrt{-g}\,[d~\al~\beta~\ga]~\del_\tau X^\al \,\del_\theta X^\beta\, \del_\varphi X^\ga \, d\tau d\theta d\varphi
\end{align}
where $[d~\al~\beta~\ga]$ is the totally antisymmetric permutation symbol (see \cite{Poisson2004}).
\end{proof}


\subsection{Energy-momentum tensor and fields}

The energy-momentum tensor compatible with the BLTP field equations is given by
\begin{align}
\begin{split}\label{def_EM}
-4\pi T^{c d}(F) &:=  g^{c \rho} F_{\rho \mu} F^{\mu d} + \frac{1}{4} g^{cd} \Fdrt \Furt - \ka^{-2}( g^{c\rho} F_{\rho \mu} \wa F^{\mu d} + g^{c \rho} F^{d\mu} \wa F_{\mu \rho}\\
&\quad + g^{c \rho} \del_\mu F^{\mu d} \del^\beta F_{\rho \beta}) - \half \ka^{-2} ( \Fdrt \wa \Furt + \del_\rho \Furt \del^{\beta} F_{\beta \theta})g^{cd}.
\end{split}
\end{align}
This was first given Podolsky \cite{Podo42} in somewhat different notation. The next two propositions describe important properties of $T^{cd}(F)$:

\begin{proposition}\label{prop:EM_tens_cons} Suppose $F^{cd}\in C^2(\R^4)$ solves the BLTP field equations with some current $j^a\in C^1(\R^4)$. Then we have
$$
-\del_d T^{c d}(F) = F^{c d} j_d.
$$
\end{proposition}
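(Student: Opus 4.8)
The plan is to verify the identity $-\del_d T^{cd}(F) = F^{cd} j_d$ by a direct computation, differentiating the explicit expression \eqref{def_EM} term by term and systematically using the two BLTP field equations \eqref{eq_field} together with the algebraic Bianchi-type identity $\del_a F_{bc} + \del_b F_{ca} + \del_c F_{ab} = 0$. Since $F^{cd}\in C^2$ and $j^a \in C^1$, all the derivatives appearing below are classical and the manipulations are legitimate pointwise. It is convenient to group the terms of $-4\pi T^{cd}$ into the ``Maxwell part'' $M^{cd} = g^{c\rho}F_{\rho\mu}F^{\mu d} + \tfrac14 g^{cd}F_{\rho\theta}F^{\rho\theta}$ and the ``higher-order part'' $P^{cd}$ consisting of the $\ka^{-2}$ terms, and to treat $-\del_d M^{cd}$ and $-\del_d P^{cd}$ separately.

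First I would handle $-\del_d M^{cd}$. This is the standard Maxwell-stress computation: using the Bianchi identity to rewrite $\del_d F_{\rho\mu}$ and the inhomogeneous equation in the form $\del^c F_{cd} = -4\pi j_d - \ka^{-2}\wa \del^c F_{cd}$ — wait, more precisely $(I - \ka^{-2}\wa)\del^c F_{cd} = -4\pi j_d$, so $\del_\mu F^{\mu d}$ is not simply proportional to $j^d$ here. This is the crucial structural point: in plain Maxwell theory $-\del_d M^{cd} = F^{cd}j_d$ exactly, but in BLTP the divergence of the Maxwell part produces $F^{cd}\del_\mu F^{\mu d}$-type terms which equal $F^{cd}(-4\pi j_d + \ka^{-2}\wa\, \del^\mu F_{\mu d})$. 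So from $-\del_d M^{cd}$ I expect to get $4\pi F^{cd} j_d$ plus a leftover term proportional to $\ka^{-2} F^{c}{}_{d}\, \wa\,\del_\mu F^{\mu d}$. The whole point of the higher-order stress terms in $P^{cd}$ is that their divergence cancels exactly this leftover (and all other spurious terms). So the second step is to compute $-\del_d P^{cd}$, again using the Bianchi identity to convert terms like $\wa F^{\mu d}$ and $\del^\mu F_{\mu d}$ and to symmetrize/antisymmetrize, and check that it precisely cancels the leftover from step one, leaving $-\del_d(-4\pi T^{cd}) = 4\pi F^{cd} j_d$, i.e. the claim.

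The main obstacle will be the bookkeeping in $-\del_d P^{cd}$: there are six distinct tensor structures, each producing several terms under $\del_d$, and one must repeatedly use $\del_d F^{\mu d} = \del^\mu F^{d}{}_{d}$-type rearrangements, the antisymmetry of $F$, and the Bianchi identity in the form $\del^\mu F^{\rho}{}_{\mu} = \del^\rho(\text{something})$ is \emph{not} available — rather one uses $\del_\mu F_{\rho\theta} = \del_\rho F_{\mu\theta} - \del_\theta F_{\mu\rho}$ to move derivatives around, plus the commuting of partial derivatives so that $\wa$ commutes with $\del_c$. A clean way to organize this: note that $\del_\mu \wa F^{\mu d} = \wa \del_\mu F^{\mu d} = \wa(-4\pi j^d + \ka^{-2}\wa\del_\mu F^{\mu d})$ relates the ``div of box'' terms back to the current, and that the $\del^\beta F_{\rho\beta}$ factors also satisfy the field equation. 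The computation is essentially the one Podolsky carried out in \cite{Podo42}; since the statement asserts exact cancellation with no assumption beyond the field equations and smoothness, I would present it as a direct but careful calculation, flagging that it parallels \cite{Podo42} and the analogous derivation in \cite{KiesslingTahvildarZadeh18}, and relegate the full index manipulation to a remark or to the observation that each $\ka^{-2}$ term in $T^{cd}$ was reverse-engineered precisely to produce the needed cancellation.
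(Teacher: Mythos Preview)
Your overall strategy is sound --- the identity does reduce to verifying that the divergence of the higher-order piece cancels the leftover $\ka^{-2} F^{c}{}_{\mu}\,\wa\,\del_d F^{\mu d}$ coming from the Maxwell part --- but the proposal has a genuine gap: you explicitly plan to ``relegate the full index manipulation to a remark'' and to appeal to the fact that $T^{cd}$ was ``reverse-engineered precisely to produce the needed cancellation.'' That is not a proof; the entire content of the proposition \emph{is} that index manipulation. The mid-stream self-corrections in your write-up (``wait, more precisely\ldots'', ``is \emph{not} available --- rather one uses\ldots'') also indicate that you have not actually carried the $P^{cd}$ computation through, and the cancellations there are delicate enough (six tensor structures, each spawning several terms under $\del_d$) that a sketch is not convincing.

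It is worth noting that the paper organizes the computation differently from your plan, and the difference is instructive. Rather than differentiating $T^{cd}$ and chasing cancellations, the paper starts from the field equation contracted with $F_{\al\mu}$,
\[
F_{\al\mu}\,\del_\nu F^{\mu\nu} - \ka^{-2} F_{\al\mu}\,\wa\,\del_\nu F^{\mu\nu} = 4\pi\, F_{\al\mu}\, j^\mu,
\]
and then rewrites the left-hand side as a total divergence $\del_\nu(-4\pi T_\al{}^{\nu})$. The first term is the standard Maxwell identity. The second term, $F_{\al\mu}\,\wa\,\del_\nu F^{\mu\nu}$, is processed in five explicit steps: at each stage the product rule and the Bianchi identity peel off a divergence and leave a remainder, and the remainders are successively reduced until everything is in divergence form. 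This ``build $T^{cd}$ from the equation of motion'' direction has the advantage that the target term $4\pi F_{\al\mu} j^\mu$ is isolated from the outset; in your decomposition the cancellation between $-\del_d M^{cd}$ and $-\del_d P^{cd}$ must be discovered rather than constructed. Either route works, but you will need to actually execute the equivalent of those five steps --- citing \cite{Podo42} or asserting that the tensor was designed to work is not a substitute.
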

The proof is given in Appendix B. In order to exploit the bilinear structure of the energy-momentum tensor, we define the bilinear form
\begin{align}\label{def_EM2}
\begin{split}
-8\pi T^{c d}(F, \widehat{F}) := & g^{c \rho} F_{\rho \mu} \widehat{F}^{\mu d} + \frac{1}{4} g^{cd} \Fdrt \widehat{F}^{\rho\theta} - \ka^{-2}( g^{c\rho} F_{\rho \mu} \wa \widehat{F}^{\mu d} + g^{c \rho} F^{d\mu} \wa \widehat{F}_{\mu \rho}\\
& + g^{c \rho} \del_\mu F^{\mu d} \del^\beta \widehat F_{\rho \beta}) - \half \ka^{-2} ( \Fdrt \wa \widehat F^{\rho \theta} + \del_\rho \Furt \del^{\beta} \widehat F_{\beta \theta})g^{cd} + (F \leftrightarrow \widehat F)
\end{split}
\end{align}
where $(F \leftrightarrow \widehat F)$ indicates the a repetition of the same expression with the roles of $F$ and $\widehat F$ interchanged. By slightly generalizing the calculation in Appendix B, we obtain
\begin{proposition}
For $C^2$-solutions $F, \widehat F$ of the field equations we have
\begin{align}
-\del_d T^{c d}(F, \widehat F) = \half( F^{c d} \widehat{j_d} + \widehat F^{c d} j_d)
\end{align}
where $j_d, \widehat{j_d}$ are the currents corresponding to the fields $F^{cd}, \widehat F^{cd}$.
\end{proposition}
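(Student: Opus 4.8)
The plan is to sidestep a term-by-term repetition of the Appendix~B computation and instead obtain the identity by polarization, using only the linearity of the BLTP field equations together with Proposition~\ref{prop:EM_tens_cons}.

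First I would record two structural facts about \eqref{def_EM2}. Inspecting the definition summand by summand, $-8\pi T^{cd}(F,\widehat F)$ is manifestly symmetric under $F\leftrightarrow\widehat F$ and is separately linear in each of its two tensor slots (every term is a product of one factor built linearly from $F$, possibly with a $\del$ or $\wa$ applied, and one built linearly from $\widehat F$, plus the mirrored term). Moreover, setting $\widehat F=F$ simply doubles the expression \eqref{def_EM}, so that $T^{cd}(F,F)=T^{cd}(F)$. Hence $T^{cd}(\cdot,\cdot)$ is precisely the symmetric bilinear form obtained by polarizing the quadratic map $F\mapsto T^{cd}(F)$, which gives the purely algebraic identity
\[
T^{cd}(F,\widehat F)=\half\bigl(T^{cd}(F+\widehat F)-T^{cd}(F)-T^{cd}(\widehat F)\bigr).
\]

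Next I would invoke linearity of \eqref{eq_field}: if $F$ solves the field equations with current $j$ and $\widehat F$ with current $\widehat j$, then $F+\widehat F$ is again a $C^2$-solution, now with current $j+\widehat j$, since both equations in \eqref{eq_field} are linear and hence preserved under superposition. Applying Proposition~\ref{prop:EM_tens_cons} to the three fields $F$, $\widehat F$ and $F+\widehat F$, then taking $\del_d$ of the polarization identity and expanding $(F+\widehat F)^{cd}(j_d+\widehat j_d)$, the diagonal contributions $F^{cd}j_d$ and $\widehat F^{cd}\widehat j_d$ cancel against the subtracted terms, leaving
\[
-\del_d T^{cd}(F,\widehat F)=\half\bigl(F^{cd}\widehat j_d+\widehat F^{cd}j_d\bigr),
\]
which is the assertion.

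Because Proposition~\ref{prop:EM_tens_cons} is already established, this route carries essentially no analytic content; the only point requiring a genuine — but entirely routine — check is the bookkeeping claim that \eqref{def_EM2} really is the polarization of \eqref{def_EM}, i.e.\ that each of its terms is bilinear in $(F,\widehat F)$ and collapses correctly on the diagonal. I expect this to be the sole ``obstacle,'' and a minor one. Should an explicit derivation be preferred instead, one can mimic the Appendix~B argument with $F$ in one slot and $\widehat F$ in the other, applying the Leibniz rule to each bilinear term and invoking the field equations for $F$ and for $\widehat F$ separately; this is the ``slight generalization'' referred to in the text, but it is strictly longer, and I would only include it if a component-level proof is wanted.
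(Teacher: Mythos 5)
Your proof is correct, and it takes a genuinely different route from the paper. The paper obtains the proposition by ``slightly generalizing the calculation in Appendix B,'' i.e.\ by rerunning the seven-step divergence computation with $F$ in one slot and $\widehat F$ in the other, invoking the field equations and the Bianchi identity for each field separately. You instead observe that \eqref{def_EM2} is symmetric, separately linear in each slot, and collapses to \eqref{def_EM} on the diagonal, so that
\begin{align}
T^{cd}(F,\widehat F)=\half\bigl(T^{cd}(F+\widehat F)-T^{cd}(F)-T^{cd}(\widehat F)\bigr),
\end{align}
and then you exploit the linearity of \eqref{eq_field} (so that $F+\widehat F$ solves the field equations with current $j+\widehat j$) together with Proposition \ref{prop:EM_tens_cons} applied to all three fields. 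The diagonal check does work out: each of the seven summands in the first half of \eqref{def_EM2} reproduces the corresponding summand of \eqref{def_EM} when $\widehat F=F$, the mirrored copy doubles it, and the factor $8\pi$ versus $4\pi$ absorbs exactly that doubling. What your route buys is that the mixed-slot identity comes for free from the already-proven quadratic case, with no new tensor manipulation; what the paper's route buys is that it never needs to verify that the bilinear form is literally the polarization of the quadratic one (which is the one bookkeeping point your argument hinges on, and which you correctly identify and which does hold). Both arguments share the paper's implicit regularity convention that $C^2$ solutions suffice to make sense of the terms involving $\wa F$ and of $\del_d T^{cd}$, so no new gap is introduced there.
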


%
%

\section{The principle of Energy-Momentum conservation}


\subsection{Distributional formulation}

In the following, we derive the relativistic equations of motion from conservation of energy and momentum. In particular we derive rigorously the self forces on the particle. For simplicity, we will neglect source-free (incident) fields and assume that particles are neither created nor annihilated, so that the world-lines of the particles extend both into the infinite past and infinite future. The basic setup is as follows: let $q_n^\al(\tau)$ be smooth, non-intersecting world-lines of $N$ particles ($n=1, \ldots, N$), parametrized using proper times. Let $u_n^\al$ denote the four-velocities. We define the \emph{bare mass} energy-momentum tensors:
\begin{align}
T_{kin,n}^{a b} = m_n \int_{-\infty}^\infty u^{a}_n u^{b}_n \delta^{(4)}(x^\alpha-q_n^\alpha(\tau_n))~d\tau_n.
\end{align}
Note that these are distributions with support on the world-lines.

Further let $F^{ab} = \sum_{n=1}^N F_{n}^{a b}$ be the total BLTP field associated to all the particles and let $T_{field}^{cd} = T^{cd}(F)$ be the associated energy-momentum tensor of the field. The principle of energy-momentum conservation is simply
\begin{align}\label{cons}
\del_c\left( \sum_{n=1}^N T_{kin, n}^{ac} + T_{field}^{ac}\right) = 0.
\end{align}
First, it is necessary to show that $T_{field}^{ac}$ is a well-defined distribution. Secondly, we would like to derive the equations of motions rigorously from \eqref{cons}. Before we come to our main result, we need to explain how \eqref{cons} defines the self-field in a natural way. Consider for the moment the case of a single charged particle. 
From Proposition \ref{prop:EM_tens_cons}, we know that 
\begin{align}\label{e1}
-\del_d T^{cd} = F^{c d} j_d
\end{align}
for smooth currents $j_d$. For the fields generated by a single particle, \eqref{e1} does not make classical sense, because $j_d$ is concentrated on the world-line of the particle and $F^{c d}$ does not make sense there because of the first term of \eqref{eq_field_tensor}. However, we will show below that the divergence of $T^{cd}$ is a well-defined distribution with support on the world-line of the particle and that it exactly has the form $e \cF^{ab} u_b$ with an explicitly known field $\cF^{cd}$. It is therefore natural to regard $\cF^{cd}$ as the self-field of the particle.

Our main result is:
\begin{theorem}\label{thm_main}
Suppose $\{q_n^\al(\tau_n)\}_{n=1}^N$ is a collection of non-intersecting world-lines in $\SWL$ and let the field associated to the $n$-th particle be defined by 
\begin{align}\label{eq_field_tensor2}
(F_n)_{ca}(x^\alpha) &= e_n\ka^2 \left.\frac{ R_{[c}u_{a]}}{2 u^b R_b}\right|_{ret} 
+e_n \ka^2  \int_{-\infty}^{\tau_{n,ret}(x^\alpha)}
\frac{J_2(\ka D) R_{[c}u_{a]}}{D^2}d\tau
\end{align}
where $e_n$ is the charge of the $n$-th particle and the integral extends over the world-line $q_n(\tau)$. $R^a, u^a$ refer to the $n$-th particle. Then the following conclusions hold:
\begin{itemize}
\item[(A)]
The energy momentum tensor $T_{field}^{ab}$ of the total retarded field of the $N$ particles is in $L^1_{loc}(\mathbb{M}^4)$ and hence defines a distribution in $\mathcal{D}'(\mathbb{M}^4)$, i.e. acting on compactly supported $C^\infty$-functions $\phi$ via
\begin{align*}
T^{ab}_{field}[\phi_{ab}] = \int_{\R^4} T^{ab}_{field}\phi_{ab}\sqrt{-g}~d^4 x.
\end{align*}
\item[(B)]
Suppose moreover that the energy-momentum equation \eqref{cons} holds in the distributional sense. Then the following equations of motions hold:
\begin{align}\label{eq_motion}
m_n \frac{d^2 q_n^a}{d\tau_n^2} = e_n\left(\mathcal{F}_n^{ab} + \sum_{m\neq n} F_m^{ab}\right)(\dot{q}_n)_b.
\end{align}
Note in particular that the force on the $n$-th particle splits into a self-force $e_n \cF^{ab}u_b$ and the Lorentz-force exerted by all other particles.
The self-field $\cF^{ab}$ is given by the absolutely convergent integral 
\begin{align}\label{eq_self1}
\cF_n^{ca}(\tau) = e_n \ka^2  \int_{-\infty}^{\tau}
\frac{J_2(\ka D) R^{[c}u^{a]}}{D^2}d\tau'
\end{align}
where $D = D(q_n(\tau) - q_n(\tau')), R^a = q_n^a(\tau)-q_n^a(\tau')$.
\end{itemize}
\end{theorem}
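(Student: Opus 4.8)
The strategy is to work with the distributional divergence of $T_{field}^{ab}$ directly, localizing near a single world-line $\fq_n$ and using the world-tube $W_\eps$ to excise the singular support. First I would prove part (A): away from the union of the world-lines, $F^{ab}$ is $C^2$ by Proposition~\ref{prop_conv}, so $T_{field}^{ab}$ is smooth there; the only issue is local integrability across each $\fq_n$. By the field-tensor formula \eqref{eq_field_tensor2}, $F_n^{ab}$ is bounded near $\fq_n$ up to the bounded integral term, while the worst contributions to $T^{cd}(F)$ come from the $\ka^{-2}\wa F$ terms; but by Proposition~\ref{prop_LW}, $\ka^{-2}\wa F_n^{ab} = F_n^{ab} - \del^{[a}U_n^{b]}$, which reduces every $\wa F$ in \eqref{def_EM} to (bounded terms) plus derivatives of the Li\'enard--Wiechert potential $U_n^a = -e_n u^a/(u_\al R^\al)|_{ret}$. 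Near the world-line $U_n$ and its derivatives behave like the classical Maxwell--Lorentz Li\'enard--Wiechert field, which is $O(\dist^{-2})$ in $\R^4$, and $(\dist)^{-2}$ is in $L^1_{loc}(\R^4)$ (the codimension is $1$, but one should use light-cone / world-tube coordinates where the relevant scaling is $r^{-2}\cdot r^2\,dr = dr$, integrable). Cross terms between distinct particles are products of an $O(\dist_n^{-2})$ factor with a smooth factor and are handled the same way. This gives $T_{field}^{ab}\in L^1_{loc}$.

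For part (B), the heart is to compute $\del_d T_{field}^{cd}$ as a distribution. I would test against $\phi\in C_0^\infty$, write $T_{field}^{ab} = \sum_{n,m} T^{ab}(F_n, F_m)$ using the bilinear form \eqref{def_EM2}, and split into diagonal ($n=m$) and off-diagonal terms. For the off-diagonal and for the smooth region, Proposition~\ref{prop:EM_tens_cons} (and its bilinear version) applies pointwise and contributes the classical $F_m^{cd}j_{n,d}$ terms, producing the Lorentz forces $\sum_{m\neq n}F_m^{ab}(\dot q_n)_b$. For the diagonal self-term I would use the standard excision argument: $-\int \del_d T^{cd}(F_n)\,\phi_c = \lim_{\eps\to 0}\int_{\R^4\setminus W_\eps}(\del_d\phi_c)\,T^{cd}(F_n)$, integrate by parts on $\R^4\setminus W_\eps$ where $F_n$ is smooth (so $\del_d T^{cd}(F_n)=0$ there since $j_n$ vanishes off $\fq_n$), and pick up a boundary integral over $\del W_\eps$ with the surface element $d\Sigma_d$ from Lemma~\ref{lem_integral_over_world_tube}. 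Thus $-\del_d T^{cd}(F_n)[\phi_c] = \lim_{\eps\to0}\int_{\del W_\eps}\phi_c\,T^{cd}(F_n)\,d\Sigma_d$, and everything reduces to evaluating this limit.

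The main obstacle — and the step I expect to be most delicate — is this boundary-limit computation: expanding $T^{cd}(F_n)$ on $X^\al(\tau,r,\theta,\varphi)$ as $r\to 0$ using \eqref{eq_field_tensor2}, \eqref{eq_dev_field_tensor}, Proposition~\ref{prop_LW}, and the relations \eqref{eq_someRel}, then multiplying by $\om_d\,\eps^2\sin\theta$ and integrating over the sphere in $(\theta,\varphi)$. The leading Maxwell--Lorentz-like piece of $T^{cd}(F_n)$ scales like $r^{-4}$, against $\om_d\eps^2$ this is $O(\eps^{-2})$, so one needs the angular integral of the leading term to vanish (it is the divergence-free, conformally covariant Maxwell stress of the pure $1/S$ field, whose flux through a small sphere is the known ``Schott''/radiation split); the genuinely convergent part of the limit, after the angular average, should collect precisely into $e_n\,\mathcal{F}_n^{ca}(\dot q_n)_a$ with $\mathcal{F}_n$ as in \eqref{eq_self1}. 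Absolute convergence of the integral \eqref{eq_self1} itself follows from the bound \eqref{eq_Kiess} with $\nu=2$, which makes $J_2(\ka D)/D^2$ integrable in $\tau'$ uniformly (using the subluminal bound \eqref{eq_bound_vel} to control $D$ from below linearly in $|\tau-\tau'|$ for $\tau'$ near $\tau$ and to control the large-$|\tau-\tau'|$ tail). I would organize the angular bookkeeping by using the Fermi tetrad so that $N^\al$ has constant components $(\cos\varphi\sin\theta,\sin\varphi\sin\theta,\cos\theta)$ and the only $\tau$-dependence sits in $u,\ddot q, e_i$, reducing each angular integral to elementary moments of $N^\al$ on $S^2$.
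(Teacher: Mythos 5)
Your overall architecture is exactly the paper's: establish $L^1_{loc}$ from the $\calO(r^{-2})$ bound obtained by substituting the Li\'enard--Wiechert potential for the $\ka^{-2}\wa F$ terms via Proposition \ref{prop_LW}, then compute $\del_d T^{cd}_{field}$ by excising world-tubes, applying Gauss--Stokes where the field is smooth, and evaluating the flux through $\del W_\eps$ as $\eps\to 0^+$ (this is the content of Lemmas \ref{lemma_distr_div} and \ref{lem_part_int}, built on the asymptotics of Proposition \ref{prop:as1}).

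However, your description of the step you flag as most delicate contains a substantive error that contradicts your own part-(A) analysis. You assert that the leading piece of $T^{cd}(F_n)$ scales like $r^{-4}$, so that after multiplying by $\om_d\,\eps^2$ the flux integrand is $\calO(\eps^{-2})$, and that one must show the angular integral of this divergent term vanishes (you describe it as the Maxwell stress of the pure $1/S$ field and invoke a Schott/radiation split). In BLTP there is no $r^{-4}$ term: by \eqref{eq_field_tensor2} the field tensor itself is \emph{bounded} at the world-line, $F^{ca} = -\tfrac{e\ka^2}{2}N^{[c}u^{a]} + \cF^{ca}(\tau) + \calO(\eps)$, and after the substitution $\ka^{-2}\wa F^{ab} = F^{ab}-\del^{[a}U^{b]}$ every term of \eqref{def_EM} is either a product of bounded factors, a bounded factor times $\del^{[a}U^{b]}=\calO(\eps^{-2})$, or a product of two $\calO(\eps^{-1})$ factors $\del_\mu F^{\mu d}$ --- all $\calO(\eps^{-2})$, which is precisely what makes $T\in L^1_{loc}$ in part (A). Consequently the flux $\eps^2\int T^{cd}\om_d\,\phi$ is $\calO(1)$ and converges directly; no cancellation of a divergent angular average is required, and the leading term $\tfrac{e}{\eps^2}\,\cF^{c\mu}u_\mu$ is extracted by a purely algebraic simplification using $N\cdot u=0$, $N\cdot N=1$, the $\calO(\eps^{-1})$ remainder being killed by the $\eps^2$ in the surface measure. (Were your $r^{-4}$ picture correct, part (A) would fail and a mass renormalization would be forced --- the absence of both is the point of the higher-order theory, and the reason the self-force \eqref{eq_self1} emerges finitely.) A smaller gloss: the off-diagonal terms $T(F_n,F_m)$ cannot be handled by applying Proposition \ref{prop:EM_tens_cons} ``pointwise,'' since $j_n$ is a distribution supported on $\fq_n$ and $T(F_n,F_m)$ is itself only $\calO(r^{-2})$ there; the paper runs the same excision/flux argument around both world-lines (Lemma \ref{lem_part_int}), using that $F_m$ is smooth near $\fq_n$ because the world-lines do not intersect.
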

\begin{remark}
Theorem \ref{thm_main} consists of two parts: The point of statement $(A)$ is that $T_{field}$ defines a distribution for arbitrary, non-intersecting motions $q^\alpha_n(\tau)$. This allows us to interpret equation \eqref{cons} in the sense of distributions. 
In the second part $(B)$ of the Theorem, we assume that the conservation equation \eqref{cons} holds for the field associated to the $N$ particle world-lines. \emph{As a consequence, the world-lines are no longer arbitrary and necessarily satisfy the equations of motions \eqref{eq_motion}.} 
Note that $\cF^{ab}$ is exactly the field proposed in \cite{GratusPerlickTucker15} as a suitable expression for the self-force.
\end{remark}
\begin{remark}
Results similar to those presented in Theorem \ref{thm_main} hold also for the case of scalar fields $\phi$ with Lagrangian
\begin{align*}
    \mathcal{L} = - \half g^{\al\beta}\del_\al \phi \del_\beta \phi + \frac{1}{2\ka^2} (\wa \phi)^2
\end{align*}
\end{remark}

The proof of our central result Theorem \ref{thm_main} in Section \ref{sec::main_result_pf} requires substantial preparation, to which we turn next. 


\subsection{Properties of self-field energy-momentum tensor}

Our goal is now to study the energy-momentum tensor of the self-field of a single particle moving on an arbitrary world-line $\fq$ satisfying the conditions formulated in Theorem \ref{thm_main}. In the following, $T = T(F)$ will always denote the energy-momentum tensor associated to the field of the world line given by \eqref{eq_field_tensor}. 
The following proposition serves as a technical preparation and describes the asymptotics of the field tensor on the boundary of a world-tube of radius $\eps > 0$. 
\begin{proposition}\label{prop:as1} Let the world-line $\fq\in \SWL$. On $\del W_\eps$, i.e. evaluated at $x^\al = X^\al(\tau, \eps, \theta, \varphi)$ we have as $\eps\to 0^+$
\begin{align}
    R^\al &= \eps (u^\al + N^\al)\label{expr1}\\
    R_{[c} u_{a]} &= \eps N_{[c} u_{a]}\label{expr2}\\
    S &= u^\al R_\al = - \eps\label{expr3}\\
    a^\al R_\al &= \eps a^\al N_\al = \eps (a\cdot N)\label{expr4}\\
    U^a &= \frac{e u^a}{\eps} \label{expr5}\\
    \del^c U^b &= - \frac{e (u^c+N^c)(a^b + (a\cdot N)u^b)}{\eps} - \frac{e u^b N^c}{\eps^2} \label{expr6}\\
    \del^{[c} U^{b]} &= -\frac{e N^{[c} u^{b]}}{\eps^2} + \calO(\eps^{-1})\label{expr7}\\
    F^{ca} &= -\frac{e \ka^2}{2} N^{[c} u^{a]} + \mathcal{F}^{ca}(\tau) + \calO(\eps) \label{expr8}\\
    \frac{1}{e\ka^2} \del_\mu F^{\mu d} &= - \frac{u^d}{\eps} + \calO(1)  
\end{align}
where $U$ is the potential from \eqref{LienardWiechertU} and 
\begin{align}
(\cF)_{ca}(\tau) = e_n \ka^2  \int_{-\infty}^{\tau}
\frac{J_2(\ka D) R_{[c}u_{a]}}{D^2}d\tau'
\end{align}
where $R=q^\al(\tau)-q^\al(\tau')$
Furthermore, 
\begin{align}
    \om_d &= N_d + \calO(\eps)\\
    \cF^{d\mu}\om_d &= \calO(1) \label{expr9}\\
    \del^{[\mu} U^{d]} \om_d &= e \frac{u^\mu}{\eps^2} + \calO(\eps^{-1}) \label{expr10}\\
    F^{d\mu} \om_d &= -\frac{e \ka^2}{2} u^\mu + \cF^{d\mu}N_d + \calO(\eps) \label{expr11}\\
    \del_\mu F^{\mu d} \om_d &= \calO(1).\label{expr12}
\end{align}
All terms on the right-hand sides of the above asymptotic expansions are evaluated at $\tau, \theta, \varphi$. All the $\calO$ terms admit estimates that are uniform in $\tau \theta, \varphi$, as long as $\tau$ varies in a compact interval. 
\end{proposition}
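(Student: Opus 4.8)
The plan is to prove all the asymptotic expansions in Proposition \ref{prop:as1} by substituting the retarded-light-cone parametrization $x^\al = X^\al(\tau,\eps,\theta,\varphi) = q^\al(\tau) + \eps(u^\al + N^\al)$ into the closed-form expressions for $R^\al$, $S$, $U^a$, $\del^c U^b$, $F^{ca}$, $\del_\mu F^{\mu d}$ from Section 2, and then Taylor-expanding in the small parameter $\eps$. The first observation is that along $\del W_\eps$ the relevant reference point on the world line is exactly $q^\al(\tau)$ itself (not some retarded time $\tau_{ret}(x^\al)$ distinct from $\tau$): since $X^\al - q^\al(\tau) = \eps(u^\al+N^\al)$ is null (because $u\cdot u=-1$, $N\cdot N=1$, $u\cdot N=0$) and future-pointing, the retarded proper time of $X^\al(\tau,\eps,\theta,\varphi)$ is precisely $\tau$. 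Hence $R^\al = \eps(u^\al+N^\al)$ exactly, which is \eqref{expr1}; \eqref{expr2}, \eqref{expr3}, \eqref{expr4} then follow by plain algebra from $u\cdot u = -1$, $u\cdot N = 0$, and the definitions $S = u_\ga R^\ga$, $a^\al R_\al = a^\al N_\al \eps$ (using $a\cdot u = 0$). These are exact identities, with no error term.

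Next I would handle the potential and its derivatives. Formula \eqref{LienardWiechertU} gives $U^a = -e u^a/(u_\al R^\al)|_{ret} = -e u^a/S = e u^a/\eps$, which is \eqref{expr5} (again exact). For \eqref{expr6} I would differentiate $U^b = -e u^b(\tau_{ret})/S$ using the chain rule together with the identities collected in \eqref{eq_someRel}: $\del_c\tau_{ret} = R_c/S|_{ret}$, $\del_c S = u_c + S^{-1}(1+a_\ga R^\ga)R_c|_{ret}$, and $\dot u^b = a^b$. Evaluating on $\del W_\eps$, where $S = -\eps$, $R_c = \eps(u_c+N_c)$, $a_\ga R^\ga = \eps(a\cdot N)$, gives $\del^c U^b = -e(u^c+N^c)(a^b + (a\cdot N)u^b)/\eps - e u^b N^c/\eps^2$ after collecting the two contributions (derivative hitting $u^b$, derivative hitting $1/S$); the $u^c u^b/\eps^2$ piece cancels against part of the $\del_c S$ term, leaving exactly \eqref{expr6}. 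Antisymmetrizing kills the symmetric $\eps^{-2}$ term $u^b N^c$ down to $-eN^{[c}u^{b]}/\eps^2$, and the $\eps^{-1}$ terms are collected into $\calO(\eps^{-1})$, giving \eqref{expr7}. The field tensor asymptotics \eqref{expr8} follow by writing \eqref{eq_field_tensor} as $F^{ca} = -\tfrac{\ka^2}{2}R^{[c}U^{a]}|_{ret} + e\ka^2\int_{-\infty}^{\tau_{ret}} \tfrac{J_2(\ka D)R^{[c}u^{a]}}{D^2}\,d\tau'$: on $\del W_\eps$, $\tau_{ret}=\tau$ so the integral is literally $\mathcal{F}^{ca}(\tau)$, while $-\tfrac{\ka^2}{2}R^{[c}U^{a]} = -\tfrac{\ka^2}{2}\,\eps(u+N)^{[c}\cdot \tfrac{e u^{a]}}{\eps} = -\tfrac{e\ka^2}{2}N^{[c}u^{a]}$ exactly, so the $\calO(\eps)$ in \eqref{expr8} is actually zero — but stating it as $\calO(\eps)$ is harmless and matches what is needed later. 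For $\del_\mu F^{\mu d}$ I would use Proposition \ref{prop_LW}, $-\ka^{-2}\wa F^{ab} = -F^{ab} + \del^{[a}U^{b]}$, or directly contract \eqref{eq_dev_field_tensor}; the $\eps^{-1}$ coefficient is $-u^d$ as claimed.

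The remaining expansions \eqref{expr9}--\eqref{expr12} are obtained by contracting the above with $\om_d$. From Lemma \ref{lem_integral_over_world_tube}, $\om_d = \eps(\ddot q\cdot N)u_d + (1+\eps\ddot q\cdot N)N_d = N_d + \calO(\eps)$, which is the first displayed line. Then \eqref{expr9} is $\cF^{d\mu}\om_d = \cF^{d\mu}N_d + \calO(\eps) = \calO(1)$ since $\cF$ is bounded (Proposition \ref{prop_conv}); \eqref{expr10} comes from $\del^{[\mu}U^{d]}\om_d = -eN^{[\mu}u^{d]}N_d/\eps^2 + \calO(\eps^{-1})$ and $N^{[\mu}u^{d]}N_d = (N^\mu u^d - N^d u^\mu)N_d = N^\mu(u\cdot N) - u^\mu(N\cdot N) = -u^\mu$, giving $e u^\mu/\eps^2 + \calO(\eps^{-1})$; \eqref{expr11} similarly uses $N^{[c}u^{a]}N_c = -u^a$ in \eqref{expr8}; \eqref{expr12} is immediate from $\del_\mu F^{\mu d} = e\ka^2(-u^d/\eps + \calO(1))$ and $u^d\om_d = \eps(\ddot q\cdot N)(u\cdot u) + (1+\eps\ddot q\cdot N)(u\cdot N) = -\eps(\ddot q\cdot N) = \calO(\eps)$, so the $\eps^{-1}\cdot\eps$ product is $\calO(1)$. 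The only genuine subtlety — and the step I expect to require the most care — is the bookkeeping in \eqref{expr6}--\eqref{expr7}: one must track which $\eps^{-2}$ and $\eps^{-1}$ terms survive differentiation of $1/S$ versus differentiation of $u^b(\tau_{ret})$, and verify the cancellation of the symmetric $u^bu^c/\eps^2$ contribution, using $\del_c S$ and $\del_c\tau_{ret}$ correctly; everything else is routine algebra with $u\cdot u=-1$, $u\cdot N=0$, $N\cdot N=1$, $a\cdot u=0$. Finally, uniformity of all $\calO$-terms for $\tau$ in a compact interval follows because the coefficients are continuous functions of $\tau$ (built from $u(\tau)$, $a(\tau)$, $\dot a(\tau)$, $N(\tau,\theta,\varphi)$, and the bounded integral $\cF(\tau)$), hence bounded on compacta, and the bound $K(t_{ret})<1$ from \eqref{eq_bound_vel} controls the integral term uniformly as in Proposition \ref{prop_conv}.
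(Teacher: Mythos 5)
Most of your proposal is sound and follows the same route as the paper: the observation that $X^\al(\tau,\eps,\theta,\varphi)-q^\al(\tau)=\eps(u^\al+N^\al)$ is null and future-pointing, so that $\tau_{ret}=\tau$ on $\del W_\eps$, gives \eqref{expr1}--\eqref{expr5} exactly; \eqref{expr6}--\eqref{expr7} follow from \eqref{eq_someRel} as you describe; and the contractions with $\om_d$ in \eqref{expr9}--\eqref{expr12} are routine once the earlier expansions are in place.

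There is, however, a genuine gap in your treatment of \eqref{expr8}, and it is precisely at the one step that carries the technical weight of the proposition. You assert that on $\del W_\eps$ the integral term in \eqref{eq_field_tensor} ``is literally $\mathcal{F}^{ca}(\tau)$'' and that the $\calO(\eps)$ in \eqref{expr8} ``is actually zero.'' This is false. In \eqref{eq_field_tensor} the integrand is built from $\tilde R^c = x^c - q^c(\tau')$ and $\tilde D = D(x^\al - q^\al(\tau'))$ with $x^\al = X^\al(\tau,\eps,\theta,\varphi)$ the field point \emph{off} the world-line, whereas $\mathcal{F}^{ca}(\tau)$ is defined with $R^c = q^c(\tau)-q^c(\tau')$ and $D = D(q^\al(\tau)-q^\al(\tau'))$ evaluated \emph{on} the world-line. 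The two integrands differ because $\tilde R^c = R^c + \eps(u^c+N^c)$ and $\tilde D \neq D$, and the content of \eqref{expr8} is exactly that this difference, integrated over the entire past history $\tau'\in(-\infty,\tau]$, is $\calO(\eps)$ uniformly. Establishing that requires inserting cross-terms and controlling
\begin{align}
\int_{-\infty}^{\tau}\left(\frac{J_2(\ka \tilde D)}{\tilde D^2}\tilde R_{[c} - \frac{J_2(\ka D)}{D^2} R_{[c}\right)u_{a]}\,d\tau'
\end{align}
via the Lipschitz-type bound on $z\mapsto J_2(z)/z^2$ (Lemma \ref{lem_besseldifference}), the decay estimate \eqref{eq_Kiess}, and the uniform lower bound $D^2 \geq (1-K^2)|\tau-\tau'|^2$ from Proposition \ref{prop_lower_bound_D}, which is what makes the tail of the integral summable. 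Without this argument your proof of \eqref{expr8} — and hence of \eqref{expr11} and of the uniformity claim for the $\calO(\eps)$ terms, which you justify only by continuity of coefficients in $\tau$ — is incomplete.
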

\begin{proof}
As a useful fact, note that $u_\al N^\al = 0$, $N_\al N^\al = 1, u_\al u^\al = -1$ and $u_\al a^\al = 0$. \eqref{expr1}--\eqref{expr5} follow directly from the definitions of $R^a, S, U^a$ and $X^\al(\tau, \eps, \theta, \varphi)$. Also observe that the retarded time $\tau_{ret}$ corresponding to the point $x^\al=X^\al$ is simply $\tau$.
Using \eqref{eq_someRel}, the expressions \eqref{expr6} and \eqref{expr7} can be calculated directly. To show \eqref{expr8}, we note that from \eqref{eq_field_tensor},
\begin{align*}
    F_{ca}(X^\al(\tau, \eps, \theta, \varphi)) = - \frac{\ka^2}{2} e N_{[c} u_{a]} + e \ka^2  \int_{-\infty}^{\tau}
\frac{J_2(\ka \tilde D) \tilde{R}_{[c}u_{a]}}{\tilde{D}^2}d\tau'
\end{align*}
where $\tilde R_c = X_c(\tau, \eps, \theta, \varphi) - q_c(\tau'), \tilde D = D(X^\al(\tau, \eps, \theta, \varphi)-q^\al(\tau'))$. Next, we note that with $R_c = q_c(\tau)-q_c(\tau'), D = D(q^\al(\tau)-q^\al(\tau'))$ we have
\begin{align*}
    \frac{1}{e\ka^2}\left( e\ka^2 \int_{-\infty}^{\tau}
\frac{J_2(\ka \tilde D) \tilde{R}_{[c}u_{a]}}{\tilde{D}^2}~d\tau' - \cF_{ca}(\tau) \right)
= \int_{-\infty}^{\tau}
\left(\frac{J_2(\ka \tilde D)}{\tilde{D}^2} \tilde{R}_{[c} - \frac{J_2(\ka D)}{D^2} R_{[c}\right)u_{a]} ~d\tau'.
\end{align*}
Inserting cross-terms and using inequality \eqref{eq_Kiess}, Proposition \ref{prop_lower_bound_D} in Appendix A and Lemma \ref{lem_besseldifference}, we can estimate the absolute value of this expression by $\eps$ times constant that does not depend on $X^\al$.  

\eqref{expr9}-\eqref{expr11} are consequences of \eqref{expr7} and \eqref{expr8}. To show \eqref{expr12}, we calculate $\del_\mu F^{\mu d}$ by using the representation in Proposition \ref{prop_conv} and apply \eqref{expr1}-\eqref{expr3}. 
\end{proof}

\begin{lemma}[Distributional divergence of the energy-momentum tensor]\label{lemma_distr_div}  Let $F$ be the field tensor of a single world-line $q^\alpha(\tau)$ and let $T = T(F)$ be the corresponding energy-momentum tensor. Then:
\begin{enumerate}
\item[(a)] $T\in \mathcal{D}'(\mathbb{M}^4)$
\item[(b)]
The distributional divergence of the energy-momentum tensor is given by
\begin{align}\label{eq_action_of_divT}
-\del_d T^{c d}[\phi] = e\int_{\fq}  \cF^{c\mu} u_\mu \phi~d\tau
\end{align}
for any smooth test function $\phi$ with compact support on flat space-time. In particular, the distributional divergence of the energy-momentum tensor is supported on the world-line of the particle.
\end{enumerate}
\end{lemma}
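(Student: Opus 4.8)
The plan is to prove (a) by a singularity count near the world-line and (b) by integrating by parts on the exterior of a world-tube $W_\eps$ and letting $\eps\to 0$, identifying the boundary term with the self-force.

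\textbf{Part (a): $T$ is locally integrable.} Away from $\fq$ the field $F$ is $C^2$ (Proposition~\ref{prop_conv}), so $T(F)$ is continuous there and the only question is integrability near $\fq$. In retarded light-cone coordinates $(\tau,r,\theta,\varphi)$ I would read off the blow-up rates as $r\to 0$: $F=\calO(1)$ by Proposition~\ref{prop_conv}; $\wa F=\calO(r^{-2})$ by Proposition~\ref{prop_LW} combined with \eqref{expr7}; and $\del_\mu F^{\mu d}=\calO(r^{-1})$ by Proposition~\ref{prop:as1}. Inserting these into \eqref{def_EM}, the $F\,\wa F$ and $(\del F)(\del F)$ terms are $\calO(r^{-2})$ and the Maxwell-type terms are bounded, so $T=\calO(r^{-2})$, uniformly for $\tau$ in compact sets. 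Since in these coordinates the Lebesgue measure is $r^2\sin\theta\,d\tau\,dr\,d\theta\,d\varphi$ (the determinant formula recorded after \eqref{def_vectors_E}), the density $|T|$ is bounded near $\fq$, hence integrable over every compact set; thus $T\in L^1_{\loc}(\mathbb M^4)$ and defines the asserted distribution.

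\textbf{Part (b): reduction to a boundary integral.} Fix $\phi\in C_0^\infty$. By the definition of distributional divergence and dominated convergence (using (a) and that $\fq$ is a null set),
\[
-\del_d T^{cd}[\phi]=\int_{\R^4}T^{cd}\,\del_d\phi\,\sqrt{-g}\,d^4x=\lim_{\eps\to 0^+}\int_{\R^4\setminus W_\eps}T^{cd}\,\del_d\phi\,\sqrt{-g}\,d^4x.
\]
On the open set $\R^4\setminus\fq$ the current vanishes and $F$ is $C^2$, so the pointwise identity behind Proposition~\ref{prop:EM_tens_cons} gives $\del_d T^{cd}=0$ there in the distributional sense; $T$ is only continuous and not $C^1$ off $\fq$, but being continuous and distributionally divergence-free it still satisfies the Gauss--Green formula on $\R^4\setminus W_\eps$ (approximate by mollification). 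Hence, for $\eps$ small enough that the tube map is a diffeomorphism and $\del W_\eps$ is a smooth hypersurface, the contribution at infinity vanishing since $\phi$ has compact support and with the orientation of Lemma~\ref{lem_integral_over_world_tube},
\[
-\del_d T^{cd}[\phi]=-\lim_{\eps\to 0^+}\int_{\del W_\eps}T^{cd}\phi\,d\Sigma_d.
\]

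\textbf{Part (b): evaluating the boundary integral.} Now I would insert $d\Sigma_d=\om_d\,\eps^2\sin\theta\,d\tau\,d\theta\,d\varphi$ from \eqref{eq_surf_elem} and compute the leading behavior of $\eps^2 T^{cd}\om_d$ as $\eps\to 0$. Contracting \eqref{def_EM} with $\om_d$, replacing $\wa F^{ab}$ by $\ka^2 F^{ab}-\ka^2\del^{[a}U^{b]}$ (Proposition~\ref{prop_LW}), and substituting the expansions of Proposition~\ref{prop:as1} — in particular \eqref{expr8} and \eqref{expr10}--\eqref{expr12} — one obtains $T^{cd}\om_d=\eps^{-2}B^{cd}(\tau,\theta,\varphi)+\calO(\eps^{-1})$ with $B^{cd}$ an explicit expression in $u,a,N$ and $\cF(\tau)$. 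Two structural features should appear: the genuine Maxwell part contracts to $\calO(1)$ and disappears after multiplication by $\eps^2$; and the formally $\calO(\eps^{-2})$ self-energy contributions (those carrying a factor $e^2\ka^4$ or $e^2\ka^2$) either cancel between the $F\,\wa F$ and $(\del F)(\del F)$ terms or come out proportional to a single power of $N$. Passing to the limit — using $\phi(X^\al(\tau,\eps,\theta,\varphi))\to\phi(q^\al(\tau))$, the uniform $\calO$-bounds of Proposition~\ref{prop:as1}, and that the relevant $\tau$ lie in a fixed compact interval since $\supp\phi$ is compact and the tube map is proper —
\[
-\del_d T^{cd}[\phi]=-\int_{\R}\Big(\int_{\mathbb S^2}B^{cd}\,\sin\theta\,d\theta\,d\varphi\Big)\phi(q^\al(\tau))\,d\tau.
\]
Finally I would carry out the angular integral using $\int_{\mathbb S^2}N^a\,d\Omega=0$, $\int_{\mathbb S^2}N^aN^b\,d\Omega=\tfrac{4\pi}{3}(g^{ab}+u^au^b)$, and the antisymmetry of $\cF$ (so $\cF^{ab}u_au_b=0$ and $\cF^{ab}N_aN_b=0$): the odd-in-$N$ and the $N^c$-proportional self-energy terms integrate to zero, and the even-in-$N$ remainder collapses to $\int_{\mathbb S^2}B^{cd}\,d\Omega=-e\,\cF^{c\mu}u_\mu$, giving $-\del_d T^{cd}[\phi]=e\int_{\fq}\cF^{c\mu}u_\mu\,\phi\,d\tau$; support on $\fq$ is then manifest.

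\textbf{Main obstacle.} The crux is this last step: the contraction of the $\ka^{-2}$-terms of the Podolsky tensor against $\om_d$ with correct index bookkeeping and the $r$-expansions of Proposition~\ref{prop:as1}, and the verification that every formally divergent contribution cancels or averages to zero over $\mathbb S^2$, leaving exactly the finite term $-e\,\cF^{c\mu}u_\mu$. By comparison the $L^1_{\loc}$ bound and the Gauss--Green step are routine.
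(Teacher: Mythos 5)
Your proposal is correct and follows essentially the same route as the paper: the $\calO(r^{-2})$ bound from Proposition \ref{prop:as1} against the $r^2\sin\theta$ Jacobian for part (a), and for part (b) excision of a world-tube, Gauss--Stokes, elimination of $\wa F$ via Proposition \ref{prop_LW}, and insertion of the asymptotics of Proposition \ref{prop:as1} into $T^{cd}\om_d$ to extract the $\eps^{-2}$ coefficient $\cF^{c\mu}u_\mu$. The only cosmetic difference is that you dispose of the residual divergent terms by angular averaging over $\mathbb{S}^2$, whereas the paper cancels them algebraically before integrating; both work.
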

\begin{proof}
From Proposition \ref{prop:as1} we see that every term of the energy-momentum tensor \eqref{def_EM} is at most of order $\calO(r^{-2})$ close to the world-line. This implies that $T(F)\in L_{loc}^1(\R^4)$, hence $T(F)$ defines a distribution in $\calD'(\R^4)$.

Recall that the distributional divergence is defined by
\begin{align*}
\del_d T^{c d}[\phi] = - \int_{\R^4} T^{cd} \del_d \phi~d^4 x.
\end{align*}
We surround the world-line by a world tube 
\begin{align}\label{eq_World_tube}
W_\eps = \{ X^\al(\tau, r, \theta, \varphi) : \tau\in (-\infty, \infty), 0 < r \leq \eps, \theta\in [0,2\pi],
\varphi\in [0, 2\pi]\}.
\end{align}
On the boundary $\del W_\eps$, we have an outward-oriented three-dimensional surface element $d\Sigma_d$ (see Lemma \ref{lem_integral_over_world_tube}) given by 
$\Sigma_d = \om_d \eps^2 \sin \theta d\tau d\theta d\varphi$, where
$\om_d =  \eps (a\cdot N) u_d + (1+\eps (a\cdot N)) N_d$. For an illustration, we refer to Figure \ref{figWT}.
Using the Gauss-Stokes theorem and Lemma \ref{lem_integral_over_world_tube} and the fact that the divergence of $T^{cd}$ is zero off the world-line, we get
\begin{align}\label{eq_lim}
\del_d T^{c d}[\phi] = \lim_{\eps\to 0^+} \int_{\del W_\eps} T^{cd} \phi~ d\Sigma_d = 
\lim_{\eps\to 0^+} \eps^2 \int_{-\infty}^\infty \int_{0}^{\pi}\int_0^{2\pi} T^{cd} \phi~ \om_d~\sin\theta ~d\theta~ d\varphi~ d\tau.
\end{align}
The first step to compute the limit \eqref{eq_lim} is to use Proposition \ref{prop_LW} to replace the terms containing wave operators by the Li\'enard-Wiechert potential $U$. Thus we can write 
\begin{align}
-4\pi T^{cd} =: T_1^{cd} + T_2^{cd} 
\end{align}
where 
\begin{align}
\begin{split}
T_1^{cd} &= g^{c\rho}\left[F_{\rho \mu} \del^{[\mu} U^{d]} + F^{d\mu} \del_{[\mu} U_{\rho]}
- \ka^{-2} \del_{\mu} F^{\mu d} \del^{\beta} F_{\rho \beta}\right]+\half g^{cd} \left[F_{\rho \theta} \del^{[\rho} U^{\theta]} - \ka^{-2} \del_{\rho} F^{\rho \theta} \del^{\beta} F_{\beta \theta}\right]
\end{split}\label{eq_T1}
\end{align}
and $T_2^{cd} = - g^{c\rho} F^{d\mu} F_{\mu \rho} - \frac{1}{4} g^{cd} F_{\rho\theta} F^{\rho \theta}$. With Proposition \ref{prop:as1} we see that $T_2^{cd} = \calO(1)$
and hence $T_2$ does not contribute in \eqref{eq_lim} as $\eps\to 0^+$. We now insert into $T_1^{cd}\om_d$ the asymptotic representations of $F$ and $\del_\mu F$ given in Proposition \ref{prop:as1}. 
This results in
\begin{align*}
T_1^{cd}\om_d &= g^{c\rho}\left[\left(\frac{-e\ka^2}{2}N_{[\rho}u_{\mu]}+\cF_{\rho\mu}(\tau)+\calO(\eps)\right)
 \left(\frac{e}{\eps^2} u^\mu + \calO(\eps^{-1})\right) \right. \\
&\quad+\left.  \left(-\frac{e\ka^2}{2} u^\mu + \cF^{d\mu}(\tau)N_d + \calO(\eps)\right)\left(-\frac{e}{\eps^2} N_{[\mu} u_{\rho]} + \calO(\eps^{-1})\right) + e \calO(1) \left(-\frac{u_\rho}{\eps} + \calO(1))\right)\right]\\
&\quad+ \half \left(N^c + \calO(\eps)\right)\left[ \left(-\frac{e\ka^2}{2} N_{[\rho} u_{\theta]} + \cF_{\rho\theta} +\calO(\eps)\right) \left(-\frac{e}{\eps^2} N^{[\rho}u^{\theta]}+\calO(\eps^{-1})\right)\right.\\
&\quad\left.- e\left(\frac{u^\theta}{\eps}  + \calO(1)\right)e \ka^2 \left(\frac{u_\theta}{\eps} + \calO(1)\right) \right].
\end{align*}
After some simplification, we get
\begin{align*}
    T_1^{cd}\om_d = \frac{e}{\eps^2} g^{c\rho}\left[  \cF_{\rho\mu} u^\mu - \cF^{d\mu} N_d N_{[\mu} u_{\rho]}\right] - \frac{e}{2\eps^2} N^c\left[\cF_{\rho\theta} N^{[\rho} u^{\theta]}\right] +\calO(\eps^{-1}) 
\end{align*}
and by using $N_d u^d = 0, u_d u^d = -1$, we can further simplify this to
\begin{align}
    T_1^{cd}\om_d = \frac{e}{\eps^2} \cF^{c\mu} u_\mu + \calO(\eps^{-1}).
\end{align}
Returning to the limit \eqref{eq_lim}, we use $\phi(X^\al(\tau, \eps, \theta, \varphi)) = \phi(q^\al(\tau)) + \calO(\eps)$ to find
\begin{align*}
    &\lim_{\eps\to 0^+} \eps^2 \int_{-\infty}^\infty \int_{0}^{\pi}\int_0^{2\pi} T^{cd} \phi~ \om_d~\sin\theta~ d\theta~ d\varphi~ d\tau \\
    &=\lim_{\eps\to 0^+}  \eps^2 \int_{-\infty}^\infty \int_{0}^{\pi}\int_0^{2\pi} \left(-\frac{e}{4\pi \eps^2} \cF^{c\mu}u_\mu +\calO(\eps^{-1})  \right)\left( \phi(q^\al(\tau)) + \calO(\eps) \right)
    ~\sin\theta ~d\theta~ d\varphi~ d\tau\\
    &= -e\int_{-\infty}^\infty  \cF^{c\mu} u_\mu \phi(q^\al(\tau))~d\tau
\end{align*}
hence finishing the proof. 
\end{proof}


\subsection{Particle interactions}

We now analyze $T^{cd} := T^{cd}(F_n, F_m)$.
\begin{lemma}\label{lem_part_int} Let the assumptions of Theorem \ref{thm_main} hold.
For $n\neq m$, we have $T^{cd}(F_n, F_m) \in \mathcal{D}'(\mathbb{M}^4)$. Moreover, the distributional divergence $\del_d T^{cd}$ is supported on both of the non-intersecting world-lines of $\fq_n,\fq_m\in \SWL$ and is given by
\begin{align}
- \del_d T^{cd}[\phi] = \half \left(e_m \int_{\fq_m} F_n^{cd} (\dot{q}_m)_d ~\phi~d\tau + e_n \int_{\fq_n} F_m^{cd} (\dot{q}_n)_d ~\phi~d\tau \right).
\end{align}
\end{lemma}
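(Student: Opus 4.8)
The plan is to reproduce the argument of Lemma \ref{lemma_distr_div}, exploiting that $T^{cd}(F_n,F_m)$ from \eqref{def_EM2} is \emph{linear} in each of $F_n$ and $F_m$ separately: near the world-line $\fq_n$ the field $F_m$ and all its derivatives are smooth, so the only genuinely singular contributions come from the factors involving $F_n$, and among those only the ones with a wave operator $\wa F_n$ (equivalently $\del^{[a}U_n^{b]}$) matter at leading order.

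First I would settle part (a). Off $\fq_n\cup\fq_m$ both fields are $C^2$ by Proposition \ref{prop_conv}, so $T^{cd}(F_n,F_m)$ is continuous there, and the bilinear version of Proposition \ref{prop:EM_tens_cons} gives $-\del_d T^{cd}(F_n,F_m)=\tfrac12(F_n^{cd}j_{m,d}+F_m^{cd}j_{n,d})$, which vanishes away from the world-lines since the currents are supported on them; hence, once $T^{cd}(F_n,F_m)\in\calD'(\mathbb{M}^4)$ is known, its distributional divergence is automatically supported on $\fq_n\cup\fq_m$. For the $\calD'$-membership, work near $\fq_n$ in world-tube coordinates: each term of \eqref{def_EM2} contains exactly one factor built from $F_m$ (which is $\calO(1)$ there, together with its first and second derivatives, uniformly for $\tau$ in a compact interval, by the bounds in Proposition \ref{prop_conv}) and one factor built from $F_n$; using Proposition \ref{prop_LW} to trade $-\ka^{-2}\wa F_n$ for $-F_n^{ab}+\del^{[a}U_n^{b]}$, and using \eqref{expr7}, the worst $F_n$-factor is $\del^{[a}U_n^{b]}=\calO(\eps^{-2})$, so $T^{cd}(F_n,F_m)=\calO(\eps^{-2})$ on a punctured tube about $\fq_n$ (and symmetrically about $\fq_m$). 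Since the canonical volume form is $r^2\sin\theta\,d\tau\,dr\,d\theta\,d\varphi$, this is $L^1_{\loc}$, proving $T^{cd}(F_n,F_m)\in\calD'(\mathbb{M}^4)$.

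For part (b) I would test against $\phi\in C_0^\infty$. For $\eps$ small enough the world-tubes $W_\eps$ around $\fq_n$ and around $\fq_m$ are disjoint inside $\supp\phi$ (two non-intersecting $C^2$ curves have positive distance on the compact set $\supp\phi$), so applying Gauss--Stokes on the complement of the two tubes, where $\del_d T^{cd}=0$, and invoking Lemma \ref{lem_integral_over_world_tube} exactly as in \eqref{eq_lim}, expresses $\del_d T^{cd}[\phi]$ as the sum of two boundary-flux limits $\lim_{\eps\to0^+}\eps^2\int T^{cd}\phi\,\om_d\sin\theta\,d\theta\,d\varphi\,d\tau$. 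On $\del W_\eps$ of $\fq_n$ I substitute the asymptotics of $F_n$, of $\del^{[a}U_n^{b]}$ and of $\om_d$ from Proposition \ref{prop:as1}, together with the Taylor expansion $F_m(X^\al(\tau,\eps,\theta,\varphi))=F_m(q_n^\al(\tau))+\calO(\eps)$ of the smooth field. Keeping only the $\calO(\eps^{-2})$ terms — which, just as in Lemma \ref{lemma_distr_div}, come solely from the three $\ka^{-2}\wa F_n$ terms of \eqref{def_EM2} — and contracting with $\om_d=N_d+\calO(\eps)$, the purely $N$-dependent pieces cancel after using $u\cdot N=0$, $N\cdot N=1$, $u\cdot u=-1$ and the antisymmetry of $F_m$, leaving $-8\pi\,T^{cd}(F_n,F_m)\om_d=\eps^{-2}\,e_n\,F_m^{c\mu}(q_n^\al(\tau))\,u_{n,\mu}(\tau)+\calO(\eps^{-1})$. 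Doing the $\theta,\varphi$-integration, the $\fq_n$-tube contributes $-\tfrac{e_n}{2}\int_{\fq_n}F_m^{cd}(\dot q_n)_d\,\phi\,d\tau$, the $\fq_m$-tube the symmetric term, and negating yields the stated identity.

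The main obstacle is the same as in Lemma \ref{lemma_distr_div}: cleanly isolating the exact $\calO(\eps^{-2})$ part of $T^{cd}(F_n,F_m)\om_d$ and verifying the cancellation of the directionally-singular $N$-terms, now with the minor extra ingredient that the second field $F_m$ is smooth and gets evaluated on $\fq_n$. The one genuinely technical point to pin down is the uniform $\calO(1)$ control of $F_m$ and its first and second derivatives on a fixed tube around $\fq_n$ for $\tau$ in a compact interval; this follows from the absolute convergence and the $\ka,K(t_{ret})$-dependent bounds in Proposition \ref{prop_conv}, after which the computation is a routine repetition of the single-particle case.
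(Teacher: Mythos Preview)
Your proposal is correct and follows essentially the same route as the paper: surround each world-line by a disjoint world-tube, use the $\calO(r^{-2})$ bound from Proposition \ref{prop:as1} (after trading $\wa F$ for $\del^{[a}U^{b]}$ via Proposition \ref{prop_LW}) to establish $L^1_{\loc}$, then apply Gauss--Stokes and the asymptotics exactly as in Lemma \ref{lemma_distr_div}. The paper's own proof is in fact terser than yours---it simply reduces to ``the same arguments as in the proof of Lemma \ref{lemma_distr_div}''---so your additional detail on which terms of \eqref{def_EM2} survive at order $\eps^{-2}$ and how the $N$-dependent pieces cancel is a faithful expansion of what the paper leaves implicit.
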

\begin{proof}
To ease notation we write $q_n = q, q_m = \widehat q, F = F_n, \widehat F = F_m$. Define the two world-tubes $W_\eps, \widehat W_\eps$ surrounding the two world-lines $q, \widehat q$ such that the world-tubes do not intersect in the compact support of the test function $\phi$. This is possible for all $\eps > 0$ sufficiently small.

Our task is to compute the distributional divergence of $T^{cd}(F, \widehat F)$. By inspecting the definition of $T^{cd}(F, \widehat F)$, we see that $T^{cd}(F, \widehat F)$ is smooth away from the world-lines $q, \widehat q$. To obtain information about the integrability close to the world-lines, we focus without loss of generality on $W_\eps$. The first step is to replace the terms containing $\wa F^{ab}, \wa \widehat F^{ab}$ by the Li\'enard-Wiechert fields according to Proposition \ref{prop_LW}. By using \ref{prop:as1}, we see that
\begin{align}
    \left.T^{cd}(F, \widehat F)\right|_{X^\al(\tau, r, \theta, \varphi)} = \calO(r^{-2})
\end{align}
where $X^\al(\tau, r, \theta, \varphi)$ defines retarded light-cone coordinates relative to the world-line $q$.
Hence 
$$
T^{cd}(F, \widehat F) \in L^1_{\text{loc}}(W_\eps).
$$
A similar argument holds on $\widehat{W}_\eps$ and hence $T^{cd}(F, \widehat F)$
defines a distribution in $\mathcal{D}'(\mathbb{M}^4)$. Now,
\begin{align*}
T^{ac}[\del_c \phi] &= \lim_{\eps\to 0^+} \int_{\mathbb{M}^4\setminus (W_{\eps}\cup \widehat{W}_{\eps})} T^{ac} \del_c\phi~d^4 x = \lim_{\eps\to 0^+}\left(- \int_{\del W_\eps} T^{ab}\phi~d\Sigma_b - \int_{\del \widehat{W}_\eps} T^{ab}\phi~d\Sigma_b
\right)
\end{align*}
where we have used the Gauss-Stokes theorem and $\del_c T^{ac} = 0$ outside of the world-lines. The same arguments as in the proof of Lemma \ref{lemma_distr_div} now lead to the result. 
\end{proof}


\subsection{Proof of Main Result (Theorem \ref{thm_main})}\label{sec::main_result_pf}

We can now finish the proof of our main result. Recall that the total field tensor is the sum of the particle fields $F^{ab} = \sum_{n=1}^N F_{n}^{a b}$ and hence
the energy-momentum tensor of the total field splits up into a sum of terms:
\begin{align*}
T_{field}= T(F, F) = \sum_{n=1}^N T(F_n) + \sum_{n,m;n\neq m} T(F_n, F_m).
\end{align*}
Now we apply Lemmas \ref{lemma_distr_div} and Lemma \ref{lem_part_int} to see that $T_{field}$ defines a distribution and that the divergence $\del_d T^{cd}$ satisfies
\begin{align*}
-\del_d T^{cd}_{field}[\phi] &= \sum_{n=1}^N e_n \int_{\fq_n} \cF^{c d}_n (\dot{q}_n)_d \;\phi ~d\tau \\
&\quad+ \half \sum_{n,m; n\neq m}\left(e_m\int_{\fq_m}F_n^{cd} (\dot{q}_m)_d \;\phi~d\tau + e_n\int_{\fq_n}F_m^{cd} (\dot{q}_n)_d ~\phi~d\tau \right)\\
&= \sum_{n=1}^N e_n \int_{\fq_n} \cF^{c d}_n (\dot{q}_n)_d \phi ~d\tau\\
&\quad+\half\sum_{m=1}^N \int_{\fq_m} \sum_{n;n\neq m} e_m F^{cd}_n (\dot{q}_m)_d \;\phi~d\tau +
\half\sum_{n=1}^N \int_{\fq_n} \sum_{m; m \neq n} e_n F^{cd}_m (\dot{q}_n)_d \;\phi~d\tau
\end{align*}
with $\cF_n$ being the self-field generated by the $n$-th particle. For any $\fq_p$, we can take $\phi$ to be a test function whose support is concentrated around the world-line $\fq_p$. Hence we obtain
\begin{align*}
-\del_d T^{cd}_{field}[\phi] &=
 e_p \int_{\fq_p} \cF^{c d}_p(\tau) (\dot{q}_p)_d \;\phi ~d\tau
+ \int_{\fq_p} \sum_{m=1, m\neq p}^N e_p F_m^{cd} (\dot{q}_p)_d \;\phi ~d\tau.
\end{align*}
Thus the equation of motion \eqref{eq_motion} is implied by \eqref{cons}, finishing the proof.

%
%

\section{Variational principle for retarded fields}\label{sec_var}


\subsection{Local formulation}

In this section, we construct a variational principle which implies the equations of motions \eqref{eq_motion} with the expression \eqref{eq_self1} for the self-force. For simplicity, we treat the case of a single particle with associated self-fields \eqref{eq_field_tensor} moving in an external field described by the external vector potential $A^a_{ext}(x^\alpha)$. Consider first the case that no particles are present.
Ordinarily, one requires that the action 
\begin{align}\label{eq_var_int1}
I := \int_{\R^4} \mathcal{L}(F)~d^4 x 
\end{align}
is stationary with respect to variations of the vector potential $A^\mu\mapsto A^\mu + \delta A^\mu$. The field equation \eqref{eq_field} is then a necessary consequence. The variation $\delta A^\mu$ is assumed to be smooth and of compact support. In contrast, our variational principle will be based on variations of the particle trajectory. The corresponding variation of the fields follows from the assumption that the fields are given by the retarded fields \eqref{eq_field_tensor2}.
As not enough information is available on the far-field behavior of the retarded fields, we will generalize the action principle by
\begin{enumerate}
    \item[(a)] replacing the integration over $\R^4$ by an integration over bounded domains $\Omega$ and
    \item[(b)] allow also variations of the vector potential that do not vanish on the boundary $\del \Omega$.
\end{enumerate}
As a consequence, the condition that the action functional is stationary is modified to include a boundary term. Similar variational principles also work for Einstein's equations \cite{Poisson2004}. 

As an intermediary step, we introduce a local version of the ordinary variational principle for BLTP fields without particles:
\begin{proposition}\label{stat_gen}
Let $A^c\in C^4(\R^4)$. For any bounded, open $\Omega\subset \R^4$ with Lipschitz boundary define the following local action functional $I_\Omega$ by
\begin{align}\label{eq_var11}
    I_{\Omega}[A^\mu] = \int_{\Omega} \mathcal{L}(F)~d^4 x
\end{align}
where $F_{cd} = \del_{[c} A_{d]}$ and
\begin{align}\label{eq_Lagrangian1}
\mathcal{L} = - \frac{1}{16\pi} F_{ab} F^{ab} - \frac{1}{8\pi \kappa^2} \del_c F^{ca} \del^d F_{da}.
\end{align}
Assume that for any $\Omega$ and any divergence-free variation $$\widehat{A}^c\in C^4(\bar{\Omega}), \del^d \widehat{A}_d = 0$$ the following holds:
\begin{align}\label{eq_stat_gen}
\left.\frac{d}{d\sigma} I[A^\mu+ \sigma \widehat{A}^\mu]\right|_{\sigma = 0} &= \int_{\del \Omega}~\left[\left(\frac{1}{4\pi \ka^2} \del^a \del_c F^{cb} - \frac{1}{4\pi} F^{ab}\right)\widehat{A}_b - \frac{1}{4\pi \ka^2} \del_c F^{cd} \del^a \widehat{A}_d\right] ~d\Sigma_a.
\end{align}
Then the field equation \eqref{eq_field} with $j^a = 0$ holds on $\R^4$.
\end{proposition}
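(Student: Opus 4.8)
The plan is to compute $\dfrac{d}{d\sigma}I_\Omega[A^\mu+\sigma\widehat A^\mu]\big|_{\sigma=0}$ directly, integrate by parts while keeping every boundary term, and compare the result with the hypothesis \eqref{eq_stat_gen}. Since $F_{cd}=\del_{[c}A_{d]}$ the variation sends $F_{cd}\mapsto F_{cd}+\sigma\widehat F_{cd}$ with $\widehat F_{cd}=\del_{[c}\widehat A_{d]}$, so differentiating \eqref{eq_Lagrangian1} at $\sigma=0$ gives $-\tfrac{1}{8\pi}F^{ab}\widehat F_{ab}-\tfrac{1}{4\pi\ka^2}\del_c F^{ca}\,\del^d\widehat F_{da}$. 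In the first term I use the antisymmetry of $F$ to replace $F^{ab}\widehat F_{ab}$ by $2F^{ab}\del_a\widehat A_b$; in the second the divergence-free condition $\del^d\widehat A_d=0$ collapses $\del^d\widehat F_{da}$ to $\wa\widehat A_a$. I then integrate by parts over $\Omega$ — Gauss--Stokes applies because $\del\Omega$ is Lipschitz and, since $A\in C^4$ and $\widehat A\in C^4(\bar\Omega)$, every integrand that gets differentiated is $C^1$ up to the boundary — moving one derivative off $\widehat A$ in the first term and two in the second.

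Collecting everything, the bulk contributions combine into $\tfrac{1}{4\pi}\int_\Omega\big[(I-\ka^{-2}\wa)\del^c F_{cb}\big]\widehat A^b\,d^4x$, while the boundary contributions are exactly the right-hand side of \eqref{eq_stat_gen}. (It is worth noting that even without $\del^d\widehat A_d=0$ the extra term $\del_a(\del^d\widehat A_d)$ contributes nothing to the bulk after one integration by parts, since $\del_a\del_c F^{ca}=0$ by antisymmetry; the divergence-free condition is used mainly to make the boundary integrals come out in the stated form.) This yields the identity
\[
\frac{d}{d\sigma}I_\Omega[A^\mu+\sigma\widehat A^\mu]\Big|_{\sigma=0}
=\Big(\text{r.h.s.\ of }\eqref{eq_stat_gen}\Big)+\frac{1}{4\pi}\int_\Omega\big[(I-\ka^{-2}\wa)\del^c F_{cb}\big]\widehat A^b\,d^4x,
\]
valid for every bounded Lipschitz $\Omega$ and every divergence-free $\widehat A\in C^4(\bar\Omega)$, with no further assumptions. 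Subtracting the hypothesis \eqref{eq_stat_gen} leaves $\int_\Omega\big[(I-\ka^{-2}\wa)\del^c F_{cb}\big]\widehat A^b\,d^4x=0$ for all such $\Omega$ and $\widehat A$.

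To conclude, set $H_b:=(I-\ka^{-2}\wa)\del^c F_{cb}$, which is a continuous vector field ($A\in C^4$ gives $F\in C^3$, hence $\del^c F_{cb}\in C^2$ and $\wa\del^c F_{cb}\in C^0$). A constant vector field $\widehat A^b\equiv v^b$ is an admissible variation — it lies in $C^4(\bar\Omega)$ and satisfies $\del^d\widehat A_d=0$ — so the previous display gives $v^b\int_\Omega H_b\,d^4x=0$ for every constant $v$, hence $\int_\Omega H_b\,d^4x=0$ for every bounded Lipschitz $\Omega$. Since $H$ is continuous, shrinking $\Omega$ to small balls forces $H\equiv 0$ on $\R^4$, i.e.\ $(I-\ka^{-2}\wa)\del^c F_{cd}=0$, the first of the field equations \eqref{eq_field} with $j^a=0$. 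The second equation, $\del_a F_{bc}+\del_b F_{ca}+\del_c F_{ab}=0$, holds automatically because $F_{cd}=\del_c A_d-\del_d A_c$.

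I do not expect a serious obstacle. The only step requiring care is the bookkeeping in the integration by parts — in particular checking that the boundary integrals produced genuinely reproduce the right-hand side of \eqref{eq_stat_gen}, and that the regularity $A\in C^4$ is exactly enough to justify each application of Gauss--Stokes on a Lipschitz domain. The final localization is routine once one notices that constant vector fields are legitimate divergence-free variations, so the divergence-free restriction on $\widehat A$ does not weaken the conclusion.
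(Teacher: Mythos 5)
Your proof is correct and follows essentially the same route as the paper: differentiate the Lagrangian, use the divergence-free condition to reduce $\del^d\widehat F_{da}$ to $\wa\widehat A_a$, integrate by parts to split bulk from boundary, and cancel the boundary terms against the hypothesis. Your final localization step (constant vector fields as admissible divergence-free variations, together with arbitrariness of $\Omega$ and continuity of the bulk density) is in fact slightly more careful than the paper's, which invokes vanishing of the bulk integral ``for all $\widehat A\in C^4(\bar\Omega)$'' even though the hypothesis only covers divergence-free variations.
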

\begin{proof}
Taking the derivative of $I_\Omega$,
\begin{align*}
    \left.\frac{d}{d\sigma}I_{\Omega}[A^\mu+\sigma \widehat{A}^\mu]\right|_{\sigma = 0} = \int_{\Omega} \left(-\frac{1}{8\pi} F^{ab} \del_{[a} \widehat{A}_{b]} - \frac{1}{4\pi \ka^2} \del_c F^{ca} \wa \widehat{A}_a\right)~d^4 x
\end{align*}
where we have used the Lorentz gauge $\del^d A_d = 0$. An integration by parts gives
\begin{align*}
    \left.\frac{d}{d\sigma}I_{\Omega}[A^\mu+\sigma \widehat{A}^\mu]\right|_{\sigma = 0} &= \int_{\Omega} \left(\frac{1}{4\pi} \del_a F^{ab} - \frac{1}{4\pi \ka^2} \wa \del_c F^{cb} \right)\widehat{A}_b~d^4 x\\
    &\quad+\int_{\del \Omega}~\left[\left(\frac{1}{4\pi \ka^2} \del^a \del_c F^{cb} - \frac{1}{4\pi} F^{ab}\right)\widehat{A}_b - \frac{1}{4\pi \ka^2} \del_c F^{cd} \del^a \widehat{A}_d\right] ~d\Sigma_a.
\end{align*}
Comparison with \eqref{eq_var11} gives 
$$
 \int_{\Omega} \left(\frac{1}{4\pi} \del_a F^{ab} - \frac{1}{4\pi \ka^2} \wa \del_c F^{cb} \right)\widehat{A}_b~d^4 x = 0
$$
for all $\widehat{A}\in C^4(\bar \Omega)$ and thus $(I - \ka^{-2}\wa) \del_c F^{c d} = 0$. The second equation of \eqref{eq_field} holds automatically since $\del_{[a} A_{b]} = F_{ab}$.
\end{proof}
\begin{remark}
The condition \eqref{eq_stat_gen} can be given the following interpretation: The field $F_{ab}$ is a \emph{generalized stationary point} of the functional $I_\Omega$ in the sense that the variations of the functional can be expressed by information on the boundary.  
\end{remark}


\subsection{Variation of retarded fields with particles}

For any space-time region $\Omega\subset \R^4$ with Lipschitz boundary, we consider the local action functional 
\begin{align}\label{functional}
\begin{split}
I_{\Omega}[\fq] &:= \int_{\Omega} \mathcal{L}(\del^{[c} A^{d]}[\fq])~d^4 x + e\int_{\fq\cap \Omega} A^{b}[\fq] ~d X_b +e \int_{\fq\cap \Omega} A^{b}_{ext}~d X_b - m_0 \int_{\fq\cap \Omega} ~d\tau
\end{split}
\end{align}
where $A_{ext}$ is an external vector potential and $A^{d}[\fq]$ is the vector potential associated to the subluminal world-line $\fq$, i.e.
\begin{align}\label{eq_pot2}
    A^d[\fq](x^\alpha) = e\ka \int_{-\infty}^{\tau_{ret}(x^\alpha)} \frac{J_1(\ka D(x^\alpha - q^\alpha(\bar \tau)))}{D(x^\alpha - q^\alpha(\bar \tau))} u^a(\bar \tau)~d \bar\tau.
\end{align}
The functional \eqref{functional} looks like the standard functional used to couple the field to a particle. However, we make variations of the trajectory only, and the variations of the potential follow from \eqref{eq_pot2}. This explains the usefulness of a generalized notion of the functional $I_\Omega$ being stationary, as variations of the particle trajectory will in general induce variations of the retarded field that do not vanish on $\del \Omega$. 

To prepare for the calculation of the variation of \eqref{functional}, we define the class of variations of a given world-line $\fq$ (cf. \cite{HawkingEllisBook}).

\begin{definition}\label{def_variation}
Let $\fq\in \SWL$ be a given subluminal world-line and let $\Omega \subset \R^4$ be a bounded Lipschitz domain. A variation of $\fq$, compactly supported in $\Omega$, is a family of world-lines defined by parametrizations
\begin{align}\label{eq_congrue}
Q^\alpha(\tau, \sigma)
\end{align}
where $Q^\alpha \in C^2(\R\times[-a, a])$ for some $a>0$, that satisfies the following:
\begin{enumerate}
\item[(i)] all world-lines are subluminal 
\item[(ii)] $\{Q^\al(\cdot, 0)\} = \fq$
\item[(iii)] There exists a bounded open set $\widetilde\Om\subset\subset\Omega$ such that $\{Q^\al(\cdot, \sigma)\} = \fq$ outside of $\widetilde\Omega$ for all $\sigma\in (-a, a)$. 
\end{enumerate}
In particular, the curves $\tau \mapsto Q^\alpha (\tau, \sigma)$ for $\sigma \neq 0$ are parameterized with respect to their own proper time and not with the proper time on $q^\al$.
For any $\sigma\in(-a,a)$, we denote the entry/exit time of $Q^\al$ into/out of $\Omega$ by $\tau_0(\sigma)$ and $\tau_1(\sigma)$ respectively and set $\tau_0=\tau_0(0)$, $\tau_1=\tau_1(0)$.
For a given variation of $\fq$ with a particular parameterization $Q^\alpha(\cdot, \cdot)$, we define the deviation vector field by
\begin{align}\label{def_devi}
\xi^\alpha(\tau, \sigma) := \partial_{\sigma} Q^\alpha(\tau, \sigma).
\end{align}
Note the following relation between $\xi^\al$ and $u^\al$:
\begin{align}
    \del_\tau\xi^\al&=\del_\sigma u^\al.
\end{align}
\end{definition}

\begin{remark} In this section we adapt notation from previous sections replacing $q^\al(\tau)$ by $Q^\al(\tau,\sigma)$. For example 
$R^\al=x^\al-Q^\al(\tau,\sigma)$, $u^\al(\tau,\sigma)=\del_\tau Q^\al(\tau,\sigma)$, $S=u_\ga R^\ga$, $X^\al_\sigma(\tau, r, \theta, \varphi) = Q^\al(\tau,\sigma) + r(u^\al(\tau,\sigma) + N^\al(\tau, \theta, \varphi,\sigma))$ etc. Note that all the familiar relations for these quantities hold, as well as the asymptotics from Proposition \ref{prop:as1} on the respective boundaries of world-tubes $W_{\eps,\sigma}$.
\end{remark}
For a given variation of $\fq$, define
\begin{align}\label{eq_pot3}
    A^a[Q(\tau, \sigma)](x^\alpha) := e\ka \int_{-\infty}^{\tau_{ret}(x^\alpha, \sigma)} \frac{J_1(\ka D(x^\al-Q^\al(\tau,\sigma))}{D(x^\al-Q^\al(\tau,\sigma))} ~\del_\tau Q^a~d\tau
\end{align}
i.e. $A^a[Q(\tau, \sigma)]$ is retarded vector potential associated to the world-line $Q(\tau, \sigma)$. 

\begin{lemma}\label{lem_devA1} Suppose $Q(\tau,\sigma)$ is variation of $\fq$. Then the variation of the potential $A^a[Q(\cdot, \sigma)]$ with respect to $\sigma$ for fixed $x^\alpha\in \R^4\setminus \{Q(\cdot, \sigma)\}$ is given by 
\begin{align}
\widehat{A}^a(x^\al,\sigma):=\del_\sigma A^a(x^\alpha,\sigma)  
    = G^a_0 + G^a_1 \quad (x^\alpha \in \R^4\setminus \{Q(\cdot, \sigma)\})
\end{align}
with
\begin{align}\label{eq_devA111}
     \begin{split}
    G_0^a(x^\alpha,\sigma) &:= -e\ka^2 \int_{-\infty}^{\tau_{ret}(x^\alpha, \sigma)} \frac{J_2(\ka D)}{D^2} \xi^b R_b u^a~d \tau
    + e\ka \int_{-\infty}^{\tau_{ret}(x^\alpha,\sigma)} \frac{J_1(\ka D)}{D} \del_\tau \xi^a~d \tau \\
    G_1^a(x^\alpha,\sigma) &:=- \left.\frac{e \ka^2}{2} \frac{u^a \xi_\gamma R^\gamma}{S}\right|_{\tau = \tau_{ret}(x^\alpha, \sigma)}
    \end{split}
\end{align}
where the integrals are absolutely convergent.
\end{lemma}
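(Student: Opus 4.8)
The plan is to differentiate the defining integral \eqref{eq_pot3} for $A^a[Q(\tau,\sigma)](x^\al)$ directly, keeping $x^\al$ fixed, and to treat separately the $\sigma$-dependence of the integrand and that of the variable upper limit $\tau_{ret}(x^\al,\sigma)$. By the Leibniz rule for an integral with a moving endpoint one gets
\[
\del_\sigma A^a(x^\al,\sigma) = \left.\left(e\ka\,\frac{J_1(\ka D)}{D}\,\del_\tau Q^a\right)\right|_{\tau=\tau_{ret}}\!\del_\sigma\tau_{ret} \;+\; \int_{-\infty}^{\tau_{ret}(x^\al,\sigma)} \del_\sigma\!\left(e\ka\,\frac{J_1(\ka D)}{D}\,\del_\tau Q^a\right)d\tau ,
\]
and the claim is that the ``bulk'' integral reproduces $G_0^a$ and the ``endpoint'' term reproduces $G_1^a$.

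For the bulk term I would record the elementary identities $\del_\sigma D = D^{-1}\xi_\ga R^\ga$ (differentiate $D^2 = -g_{\al\be}R^\al R^\be$ using $\del_\sigma R^\al = -\xi^\al$), $\del_\sigma\del_\tau Q^a = \del_\tau\xi^a$ (interchange of partials, cf. Definition \ref{def_variation}), and $\frac{d}{dz}\bigl(J_1(z)/z\bigr) = -J_2(z)/z$ from \eqref{eq_someRel}. Combining them with $\del_\tau Q^a = u^a$ gives
\[
\del_\sigma\!\left(e\ka\,\frac{J_1(\ka D)}{D}\,\del_\tau Q^a\right) = -\,e\ka^2\,\frac{J_2(\ka D)\,(\xi_\ga R^\ga)\,u^a}{D^2} \;+\; e\ka\,\frac{J_1(\ka D)}{D}\,\del_\tau\xi^a ,
\]
which upon integration over $(-\infty,\tau_{ret})$ is exactly $G_0^a$ as in \eqref{eq_devA111}. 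For the endpoint term I would obtain $\del_\sigma\tau_{ret}$ by implicitly differentiating the defining relation $D\bigl(x^\al - Q^\al(\tau_{ret},\sigma)\bigr)^2 = 0$; since $\del_\tau(D^2) = 2\,u_\ga R^\ga = 2S \neq 0$ on a subluminal world-line, the implicit function theorem applies and yields $\del_\sigma\tau_{ret} = -\,S^{-1}\xi_\ga R^\ga\big|_{ret}$. Because $J_1(z)/z \to \tfrac12$ as $z\to 0^+$ and $\del_\tau Q^a \to u^a$ as $\tau\to\tau_{ret}$, the integrand of \eqref{eq_pot3} tends to $\tfrac{e\ka^2}{2}u^a$ at the upper limit, so the endpoint term equals $-\tfrac{e\ka^2}{2}S^{-1}u^a\xi_\ga R^\ga\big|_{ret} = G_1^a$.

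The remaining work is to justify differentiation under the integral sign and the absolute convergence of $G_0^a$ (the convergence of $G_1^a$ is immediate since it is just a point evaluation). This is routine here thanks to compact support: by condition (iii) of Definition \ref{def_variation}, together with $u^0 \geq 1$, the deviation field $\xi^\al(\cdot,\sigma)$ — and hence $\del_\tau\xi^\al(\cdot,\sigma)$ — is supported in a bounded interval of $\tau$-values, uniformly for $\sigma$ in a compact interval, so both integrals defining $G_0^a$ are effectively over a compact range. On that range, for $x^\al\notin\{Q(\cdot,\sigma)\}$ — and, by continuity of $(\tau,\sigma)\mapsto Q^\al$ and a compactness argument, for $\sigma$ in a neighbourhood of the given value — the quantity $D$ stays bounded away from $0$ and from $\infty$, so the integrand, its $\sigma$-derivative, and $\tau_{ret}(x^\al,\cdot)$ are $C^1$ with bounds uniform in $\sigma$, and dominated convergence legitimizes the interchange; near $\tau=\tau_{ret}$ there is no singularity at all, since $J_1(\ka D)/D$ and $J_2(\ka D)/D^2$ have finite limits as $D\to 0$ (one may also invoke \eqref{eq_Kiess} and Proposition \ref{prop_lower_bound_D} for the large-$|\tau|$ tails of $A^a$ itself, which are $\sigma$-independent). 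The one conceptual point I would flag — the main ``obstacle'' — is that the endpoint contribution does \emph{not} drop out: one might naively expect it to vanish because $D\to 0$ there, but $J_1(\ka D)/D \to \ka/2 \neq 0$, so the term $G_1^a$ is genuine and must be retained; this is precisely the origin of the extra summand in the formula.
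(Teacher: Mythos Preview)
Your proof is correct and follows essentially the same route as the paper's own proof: Leibniz differentiation of \eqref{eq_pot3} with a moving upper limit, the identities $\del_\sigma D = D^{-1}\xi_\ga R^\ga$, $\del_\sigma\tau_{ret} = -S^{-1}\xi_\ga R^\ga|_{ret}$, the Bessel relation \eqref{eq_someRel}, and the limit $J_1(z)/z\to 1/2$ at the endpoint. The only minor difference is in the justification of the interchange: the paper simply refers to the arguments of Proposition~\ref{prop_conv}, whereas you exploit the compact support of $\xi^\al$ from Definition~\ref{def_variation} to reduce $G_0^a$ to an integral over a bounded $\tau$-range, which is a clean and entirely valid alternative.
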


\begin{proof} For $\sigma \in (-a, a)$, the vector potential is defined by \eqref{eq_pot3},
where $\tau_{ret}(x^\alpha, \sigma)$ satisfies
\begin{align}
    g_{ab}(x^a - Q^a(\tau_{ret}(x^\alpha, \sigma), \sigma))(x^b - Q^b(\tau_{ret}(x^\alpha, \sigma), \sigma)) = 0. \label{eq_misc33}
\end{align}
Differentiation of \eqref{eq_misc33} with respect to $\sigma$ implies
\begin{align*}
\frac{\del \tau_{ret}}{\partial \sigma}(x^\alpha,\sigma)= - \left.\frac{\xi_\gamma R^\gamma}{S}\right|_{\tau = \tau_{ret}(x^\alpha, \sigma)}
\end{align*}
where $S = u^\gamma R_\gamma$. 
If $D = D(x^\alpha - Q^\alpha(\cdot, \sigma))$ then for fixed $x^\alpha, \tau$
$$
\frac{\del D}{\del \sigma} = \frac{\xi_\ga R^\ga}{D}.
$$
Using the foregoing relationships, the following formula for $\widehat A^a$ follows by direct computation (note that $\del_{\sigma} u_b = \del_{\tau} \xi_b$):
\begin{align*}
    \begin{split}
    \widehat{A}^a = 
    \del_\sigma A^a(x^\alpha, \sigma)
    &=
     -e\ka^2 \int_{-\infty}^{\tau_{ret}(x^\alpha,\sigma)} \frac{J_2(\ka D)}{D^2} \xi^b R_b u^a ~d\tau\\
    &\quad+ e\ka \int_{-\infty}^{\tau_{ret}(x^\alpha,\sigma)} \frac{J_1(\ka D)}{D} \del_\tau \xi^a~d\tau - \left.\frac{e \ka^2 u^a\xi_\gamma R^\gamma}{2 S}\right|_{\tau = \tau_{ret}(x^\alpha,\sigma)}.
    \end{split}
\end{align*}
The formal differentiation can be justified by similar arguments as those employed in the proof of Proposition \ref{prop_conv}. This proves \eqref{eq_devA111}.
\end{proof}

\begin{theorem}\label{thm_var} Let $A^\alpha_{ext}\in C^2(\R^4)$. Further assume that for all open, bounded Lipschitz domains $\Omega$ and all variations of $\fq$ in the sense of Definition \ref{def_variation},  
    \begin{align}
       \left.\frac{d}{d\sigma}I_\Omega\right|_{\sigma = 0} = \int_{\del \Omega}~\left[\left(\frac{1}{4\pi \ka^2} \del^a \del_c F^{cb} - \frac{1}{4\pi} F^{ab}\right)\del_\sigma A_b - \frac{1}{4\pi \ka^2} \del_c F^{cd} \del^a \del_\sigma A_b\right]_{\sigma=0} ~d\Sigma_a
    \end{align}
    holds.
Then $q^\alpha(\tau)$ satisfies the following equation of motion:
\begin{align}\label{eq_mot_var}
m_0 \ddot{q}^a = e\left( F_{ext}^{ab} +  \mathcal{F}^{ab}\right) \dot{q}_b
\end{align}
where $\cF^{ab}$ is the self-force defined by \eqref{eq_self1}.
\end{theorem}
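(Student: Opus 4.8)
The strategy is to compute $\left.\tfrac{d}{d\sigma}I_\Omega\right|_{\sigma=0}$ term by term, using the splitting of the functional in \eqref{functional} into the field Lagrangian piece, the two coupling integrals $e\int_{\fq\cap\Omega}A^b\,dX_b$ and $e\int_{\fq\cap\Omega}A_{ext}^b\,dX_b$, and the bare-mass term $-m_0\int_{\fq\cap\Omega}d\tau$. For the coupling and mass terms the variation is a classical computation: varying $-m_0\int d\tau$ produces $m_0\ddot q^a\dot q_b$-type terms (plus endpoint contributions at $\tau_0,\tau_1$ coming from the moving entry/exit times, which should cancel against analogous endpoint terms elsewhere), and varying $e\int A_{ext}^b\,dX_b$ produces the external Lorentz force $eF_{ext}^{ab}\dot q_b$. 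The subtle term is $e\int_{\fq\cap\Omega}A^b[\fq]\,dX_b$, the self-coupling, because here $A^b$ itself depends on the trajectory; its variation has \emph{two} sources — the explicit variation of the path $dX_b$ and the measure $\fq\cap\Omega$, and the implicit variation $\widehat A^b=\del_\sigma A^b$ computed in Lemma \ref{lem_devA1}.

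The core of the argument is the interplay between the self-coupling variation and the field-Lagrangian variation. First I would apply Proposition \ref{stat_gen} in its generalized form: since $A^b[\fq]$ satisfies the Lorentz gauge and, away from the world-line, solves the source-free field equation, the variation of $\int_\Omega \mathcal L(F)\,d^4x$ with respect to the induced field variation $\widehat A^b$ equals the prescribed boundary term \emph{plus} a bulk contribution supported near the world-line. Concretely, one must redo the integration by parts of Proposition \ref{stat_gen} on the punctured domain $\Omega\setminus W_{\eps}$, pick up the inner boundary integral over $\del W_\eps$, and take $\eps\to0^+$. Using the asymptotics of Proposition \ref{prop:as1} — in particular the $\calO(\eps^{-2})$ structure of $F$, $\del F$, and now also the asymptotics of $\widehat A^b=G_0^b+G_1^b$ from Lemma \ref{lem_devA1}, where $G_1^b$ carries the singular $S^{-1}=-\eps^{-1}$ factor — the inner boundary integral should converge to a finite world-line integral. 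The expectation is that this limit reproduces, after contraction with the surface element $\om_d$ of Lemma \ref{lem_integral_over_world_tube} and the algebraic simplifications using $u\cdot N=0$, $N\cdot N=1$, $u\cdot u=-1$, exactly a term $-e\int_\fq \cF^{ab}\dot q_b\,\phi$-type contribution that combines with half of the explicit self-coupling variation.

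The decisive cancellation is the familiar "factor of $\tfrac12$" phenomenon: the explicit variation of $e\int A^b[\fq]\,dX_b$ (varying the path with the field held fixed) yields $e\int_\fq F^{ab}\dot q_b\,\xi_a\,d\tau$ where $F^{ab}$ is the full retarded field, which is ill-defined on the world-line; but combined with the inner-boundary contribution from the field-Lagrangian variation (which supplies the implicit variation $\widehat A$), the divergent first term $-\tfrac{e\ka^2}{2}N^{[c}u^{a]}$ of $F$ in \eqref{eq_field_tensor} is killed by the directional average over $\theta,\varphi$, and what survives is precisely $e\int_\fq \cF^{ab}\dot q_b\,\xi_a\,d\tau$ with the regular self-field $\cF$ of \eqref{eq_self1}. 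Setting the total variation equal to the prescribed boundary term and invoking the fundamental lemma of the calculus of variations over arbitrary deviation fields $\xi^a$ compactly supported in $\Omega$ then yields \eqref{eq_mot_var}.

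\textbf{Main obstacle.} The hard part is the rigorous $\eps\to0^+$ analysis of the inner boundary integral in the field-Lagrangian variation: one is integrating a product of quantities that individually blow up like $\eps^{-2}$ against a surface element of size $\eps^2$, so the finite limit emerges only after exact cancellations, and one must control the error terms uniformly in $\tau$ on the compact $\tau$-interval where the variation is supported — mirroring the bookkeeping in Lemma \ref{lemma_distr_div} but now with the extra variation-induced field $\widehat A$ and its own singular piece $G_1$ in play. A secondary technical point is the careful treatment of the $\sigma$-dependent entry/exit times $\tau_0(\sigma),\tau_1(\sigma)$ and the corresponding endpoint terms, which must be shown to cancel among the mass term, the coupling terms, and the boundary integral so that only the Euler–Lagrange bulk term remains. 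Justifying differentiation under the integral sign in the self-coupling term (as in Lemma \ref{lem_devA1} and Proposition \ref{prop_conv}) is routine but needs to be invoked.
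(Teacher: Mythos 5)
Your plan follows the paper's proof essentially verbatim: the same splitting $I_\Omega=I_{1,\Omega}+I_{2,\Omega}+I_{3,\Omega}$, the same excision of a world-tube $W_{\eps,\sigma}$ with the asymptotics of Proposition \ref{prop:as1} and the $G_0+G_1$ decomposition of $\widehat A$ from Lemma \ref{lem_devA1} controlling the inner boundary integral, the same cancellation of the entry/exit endpoint terms via Definition \ref{def_variation}(iii), and the same conclusion from arbitrariness of $\xi^b$. One correction to your framing of the ``factor of $\tfrac{1}{2}$'' step: the paper never writes the explicit self-coupling variation as $e\int F^{ab}u_b\xi_a$ with the ill-defined on-world-line field, and no directional average is needed to kill the singular term of \eqref{eq_field_tensor} --- since the potential $A^a$ (unlike $F^{ab}$) extends continuously to the world-line, $\del_\sigma\mathcal{A}^a$ and $\del_\tau\mathcal{A}^a$ are computed directly there and the regular self-field emerges from the exact identity $\mathcal{K}^b u_b-\widetilde{\mathcal{K}}^b\xi_b=\cF^{bc}\xi_b u_c$ of \eqref{eq_key}, while the $\eps\to0^+$ limit of the inner boundary integral in the $I_{1,\Omega}$ variation supplies precisely the terms $-e\int\mathcal{G}_0^b u_b\,d\tau$ and $+\tfrac{e^2\ka^2}{2}\int\xi^b u_b\,d\tau$ that cancel the remaining non-force pieces of the $I_{2,\Omega}$ variation.
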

To structure the proof, we first state some preparatory lemmas and formulas. Define 
\begin{align}\label{functionals}
\begin{split}
I_{1,\Omega}[\fq] &:= \int_{\Omega} \mathcal{L}(\del^{[c} A^{d]}[\fq])~d^4 x  \\
I_{2, \Omega}[\fq] &:= e \int_{\fq\cap \Omega} A^{b}[\fq]~d X_b\\
I_{3, \Omega}[\fq] &:= - m_0 \int_{\fq \cap \Omega} d\tau
+ e \int_{\fq\cap \Omega} A^{b}_{ext}~dX_b
\end{split}
\end{align}
so that $I_{\Omega} = I_{1,\Omega}+I_{2,\Omega}+I_{3,\Omega}$. Note that all integrals are well-defined. This can be checked using asymptotic expansions in Proposition \ref{prop:as1}.

\begin{lemma}\label{lem_var_I1}
The variation of $I_{1,\Omega}$ is given by 
\begin{align}
\begin{split}\label{eq_var_I1}
\left.\frac{d}{d\sigma} I_{1, \Omega}\right|_{\sigma = 0} 
&=  \int_{\del \Omega}~\left[\left(\frac{1}{4\pi \ka^2} \del^a \del_c F^{cb} - \frac{1}{4\pi} F^{ab}\right)\widehat{A}_b - \frac{1}{4\pi \ka^2} \del_c F^{cd} \del^a \widehat{A}_d\right]_{\sigma=0} ~d\Sigma_a\\
&\quad - e \int_{\fq\cap \Omega} \mathcal{G}^b_0 u_b ~d\tau +\int_{\fq\cap\Om} \frac{e^2\ka^2}{2}\xi^b u_b~d\tau
\end{split}
\end{align}
where
\begin{align}
\mathcal{G}_0^a(\tau) &= -\left.\mathcal{G}_0^a(\tau,\sigma)\right|_{\sigma=0} =  e\ka^2 \left.\int_{-\infty}^{\tau} \frac{J_2(\ka D)}{D^2} \xi^b R_b u^a~d \bar \tau\right|_{\sigma=0}
    + e\ka \left.\int_{-\infty}^{\tau} \frac{J_1(\ka D)}{D} \del_\tau \xi^a~d \bar\tau\right|_{\sigma=0}
\end{align}
and in this formula, $D = D(Q^\al(\tau, \sigma)-Q^\al(\bar \tau, \sigma)), R^\al = Q^\al(\tau, \sigma)-Q^\al(\bar \tau, \sigma)$.
\end{lemma}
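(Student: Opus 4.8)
\textbf{Proof plan for Lemma \ref{lem_var_I1}.}

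The plan is to differentiate $I_{1,\Omega}[\fq] = \int_\Omega \mathcal{L}(\del^{[c}A^{d]}[\fq])\,d^4x$ with respect to $\sigma$ along a variation $Q^\alpha(\tau,\sigma)$ and organize the result into a boundary term plus world-line terms. The key point is that $I_{1,\Omega}$ depends on $\sigma$ only through the retarded potential $A^d[Q(\cdot,\sigma)]$, whose $\sigma$-derivative $\widehat{A}^a = \del_\sigma A^a = G_0^a + G_1^a$ is supplied by Lemma \ref{lem_devA1}. First I would apply Proposition \ref{stat_gen} in the following way: the computation there shows that for \emph{any} smooth divergence-free variation $\widehat A^a$ of the potential, $\frac{d}{d\sigma}\int_\Omega \mathcal{L}\,d^4x$ equals $\int_\Omega(\text{field-equation bulk term})\widehat A_b\,d^4x$ plus the boundary integral appearing in \eqref{eq_stat_gen}. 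Here, however, $\Omega$ intersects the world-line, so $\mathcal{L}(F)$ is only $L^1_{loc}$ (by the $\calO(r^{-2})$ asymptotics of Proposition \ref{prop:as1}, since $\mathcal{L}$ has the same structure as the energy-momentum tensor), and the integration by parts in Proposition \ref{stat_gen} must be carried out on $\Omega\setminus W_{\eps,\sigma}$ and the limit $\eps\to 0^+$ taken. This produces, in addition to the boundary term on $\del\Omega$, a contribution from $\del W_{\eps,\sigma}$.

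The second step is therefore to evaluate $\lim_{\eps\to0^+}\int_{\del W_{\eps,\sigma}}(\cdots)\widehat A_b\,d\Sigma_b$-type terms, using the asymptotics of Proposition \ref{prop:as1} together with the explicit form of $\widehat A^a = G_0^a+G_1^a$. The singular part of $\widehat A^a$ near the world-line comes from $G_1^a = -\tfrac{e\ka^2}{2}\,u^a \xi_\gamma R^\gamma/S|_{ret}$; on $\del W_{\eps,\sigma}$ we have $R^\gamma = \eps(u^\gamma+N^\gamma)$, $S=-\eps$, so $\xi_\gamma R^\gamma/S = -(\xi\cdot u) - (\xi\cdot N)$, which is $\calO(1)$ — i.e. $G_1^a$ does not blow up, and $G_0^a$ is manifestly bounded. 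Meanwhile the field-dependent coefficients $\del^a\del_c F^{cb}$, $F^{ab}$, $\del_c F^{cd}$ multiplying $\widehat A_b$ on $\del W_{\eps,\sigma}$ carry at worst $\eps^{-2}$ (from the Li\'enard--Wiechert substitution via Proposition \ref{prop_LW} and the expansions \eqref{expr6}--\eqref{expr10}), and $d\Sigma_b = \om_b\eps^2\sin\theta\,d\tau d\theta d\varphi$ supplies $\eps^2$, so the limit is finite. After inserting the expansions and using $u\cdot N = 0$, $N\cdot N = 1$, $u\cdot u = -1$ and integrating out $\theta,\varphi$ over the sphere (so that odd powers of $N$ drop and $\int N^aN^b\,d\Om = \tfrac{4\pi}{3}(g^{ab}+u^au^b)$), the angular integral collapses and the $\del W_{\eps,\sigma}$ contribution becomes a world-line integral. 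The two terms $-e\int_{\fq\cap\Omega}\mathcal{G}_0^b u_b\,d\tau$ and $\int_{\fq\cap\Omega}\tfrac{e^2\ka^2}{2}\xi^b u_b\,d\tau$ in \eqref{eq_var_I1} are exactly the contributions of $G_0^a$ (whose $\sigma$-derivative sign flips, giving $\mathcal{G}_0^a = -G_0^a|_{\sigma=0}$) and of the $\calO(1)$ part of $G_1^a$ paired against the leading $\eps^{-2}$-singularity of the field, respectively. Finally, the bulk term $\int_{\Omega\setminus W_{\eps}}(\text{field eqn})\widehat A_b\,d^4x$ vanishes identically because $F^{ab}$ is the exact retarded solution of the source-free field equation off the world-line.

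The main obstacle will be the careful bookkeeping in the $\del W_{\eps,\sigma}$ limit: one must substitute the Li\'enard--Wiechert form via Proposition \ref{prop_LW} into the two boundary integrands, multiply out all products of the expansions in Proposition \ref{prop:as1} keeping every term down to the order that survives after multiplication by $\eps^2$, and check that all $\eps^{-1}$ terms integrate to zero on the sphere (by parity in $N$) while the $\eps^{-2}$ terms combine — after using the algebraic identities for $u,N$ — precisely into $-\tfrac{e\ka^2}{2}\xi_\gamma$ contracted appropriately, reproducing the coefficient $\tfrac{e^2\ka^2}{2}$. This is the same mechanism that drives the proof of Lemma \ref{lemma_distr_div}, and I would model the computation on that proof, the difference being that here the ``test function'' is replaced by the variation $\widehat A_b$ of the potential and one additionally tracks the $G_0$ piece. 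A minor secondary point is justifying that differentiation under the integral sign in $\int_\Omega \mathcal{L}\,d^4x$ is legitimate despite the $L^1_{loc}$ singularity; this follows by excising $W_{\eps,\sigma}$ first, differentiating on the smooth region, and passing to the limit, using the uniform-in-$(\tau,\theta,\varphi)$ control of the $\calO$-terms from Proposition \ref{prop:as1}.
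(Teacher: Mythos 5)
Your proposal is correct and follows essentially the same route as the paper: differentiate under the integral, integrate by parts on $\Omega\setminus W_{\eps,\sigma}$, observe that the bulk term vanishes since $F$ solves the source-free equation off the world-line, keep the $\del\Omega$ boundary term, and extract the two world-line integrals from the $\del W_{\eps,\sigma}$ contribution by splitting $\widehat A = G_0 + G_1$ and using the asymptotics of Proposition \ref{prop:as1}. The one point you treat more lightly than the paper is the legitimacy of differentiating under the integral sign when the singularity moves with $\sigma$ --- the paper isolates this in Lemma \ref{lem_dev_param} (Appendix D), where the transport term from the $\sigma$-dependent excised region must be shown to vanish via the angular identity $\int_{\sph} N_\al\,\sin\theta\,d\theta\,d\varphi = 0$; your sketch points in the right direction but would need that extra step to be complete.
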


\begin{proof} The proof of this Lemma involves the differentiation of a parameter-dependent integral. As the integrand is mildly singular at the world-line, which changes with variation in $\sigma$, some care is required in the following arguments. Let $X^\al_\sigma(\tau, r,\theta, \varphi)$ be the retarded light-cone coordinate mapping associated to the world-line $Q(\cdot, \sigma)$ (see \eqref{light_cone_coord}). First we note that by using the asymptotic expansion in Proposition \ref{prop:as1}, we obtain in a neigborhood of $\{Q(\cdot, \sigma)\}$
\begin{align}\label{eq:aux11}
    \mathcal{L}\circ X_\sigma(\tau, r, \theta, \varphi) = H_0 r^{-2} + H_1
\end{align}
with $H_0 = \frac{e^2 \ka^2}{8\pi}$ and the remainder $H_1(\tau, r, \theta, \varphi) = \calO(r^{-1})$ uniformly in $\sigma$ and $\tau,\theta, \varphi$ (as long as $\tau$ varies in a compact interval).
Moreover we compute 
\begin{align*}
    \frac{\del \mathcal{L}}{\del\sigma} = - \frac{1}{8\pi} F^{ab} \del_{[a} \widehat{A}_{b]} - \frac{1}{4\pi \ka^2} \del_c F^{ca} \wa \widehat{A}_a
\end{align*}
at all points away from the world-line $Q^{\al}(\tau,\sigma)$. Here we also used Proposition \ref{prop_A_div} to conclude that $\del^d \widehat{A}_d = 0$.
By again using the asymptotics provided in Proposition \ref{prop:as1}, Lemma \ref{lem_devA1} and the relation $\wa \widehat{A}_a = -\ka^2(\del_\sigma U_a - \widehat{A}_a)$ we see that
\begin{align}\label{eq:aux12}
    \frac{\del \mathcal{L}}{\del \sigma}\circ X_\sigma(\tau, r, \theta, \varphi) = \calO(r^{-2}).
\end{align}
Next, let $\chi$ be a smooth cut-off function with support as depicted in Figure \ref{figChi}.
\begin{figure}[h]
    \centering
    \includegraphics[scale=0.4]{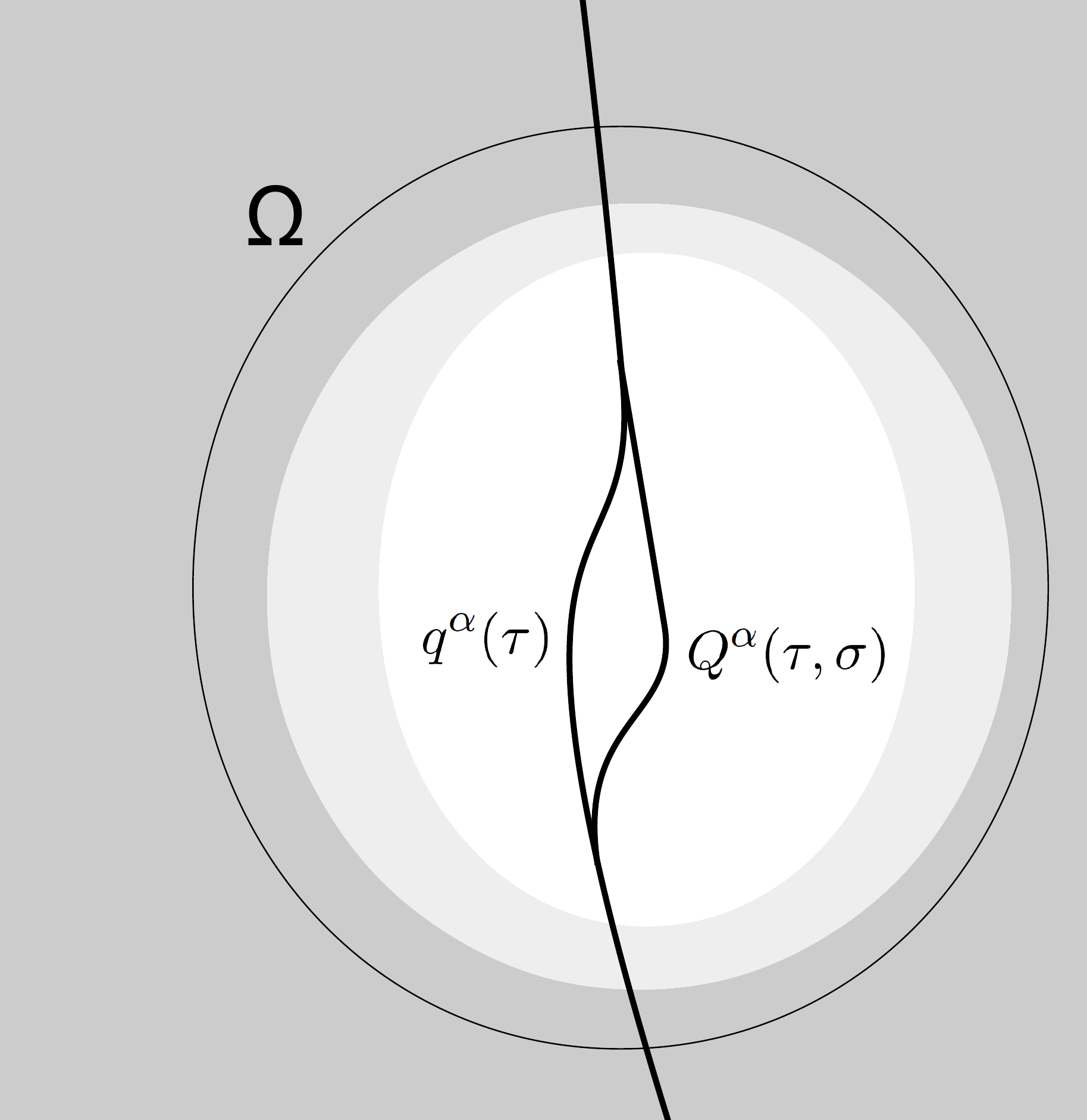}
    \caption{A world-line $q^{\al}(\tau)$ and a variation $Q^\al(\tau,\sigma)$ in $\Om$. White area: $\chi=1$, light grey area: $0<\chi< 1$, dark grey area: $\chi=0$.}
    \label{figChi}
\end{figure}

\noindent
Note that the curves $Q(\cdot, \sigma)$ coincide with $Q(\cdot, 0)$ outside of $\{\chi=1\}$. The integral $I_{1,\Omega}$ can be split into 
\begin{align*}
    I_{1,\Omega} = \int_{\Omega} (1-\chi)\mathcal{L} + \int_{\R^4} \chi \mathcal{L}.
\end{align*}
In view of \eqref{eq:aux11} and \eqref{eq:aux12}, standard results for differentiation of parameter-dependent integrals guarantee that
\begin{align}
    \del_\sigma\int_{\Omega} (1-\chi)\mathcal{L} = \int_{\Omega} (1-\chi)\frac{\del \mathcal{L}}{\del \sigma},
\end{align}
the key point being that there exists a integrable majorant function $\Phi\in L^1(\R^4)$ with $\left|\frac{\del \mathcal{L}}{\del \sigma}\right|\leq \Phi$ on $\Omega\setminus \{\chi=1\}$, independent of $\sigma$. 

Now define $H(x^\al) = \chi\mathcal{L}(\del^{[c} A^{d]})$ and apply Lemma \ref{lem_dev_param} from Appendix D. For this, observe in particular that in view of \eqref{eq:aux11}, $H\circ X_\sigma$ can be written in the form $H_0 r^{-1} + H_1$ with the properties required for Lemma \ref{lem_dev_param}.
Therefore 
\begin{align*}
    \del_\sigma \int_{\R^4} \chi \mathcal{L}~d x  = \int_{\R^4} \chi \frac{\del\mathcal{L}}{\del\sigma}~d x = 
    \int_{\Omega} \chi \frac{\del\mathcal{L}}{\del\sigma}~d x
\end{align*}
with an absolutely convergent integral. Hence we have
\begin{align*}
    \del_\sigma I_{1,\Omega} = \int_\Omega \frac{\del\mathcal{L}}{\del\sigma}~d x.
\end{align*}
We now determine the value of this integral by cutting out a world-tube $W_{\eps,\sigma}$ and applying integration by parts. We have 
\begin{align*}
   \frac{d}{d\sigma} I_{1,\Omega} = \lim_{\eps\to 0^+}\int_{\Omega\setminus W_{\eps,\sigma}} - \frac{1}{8\pi} F^{ab} \del_{[a} \widehat{A}_{b]} - \frac{1}{4\pi \ka^2} \del_c F^{ca} \wa \widehat{A}_a ~d^4 x =: \lim_{\eps\to 0^+} J_\eps.
\end{align*}
$J_\eps$ can be written as
\begin{align*}
    J_\eps &= \int_{(\del \Omega)\setminus W_{\eps,\sigma}}~\left[\left(\frac{1}{4\pi \ka^2} \del^a \del_c F^{cb} - \frac{1}{4\pi} F^{ab}\right)\widehat{A}_b - \frac{1}{4\pi \ka^2} \del_c F^{cd} \del^a \widehat{A}_d\right] ~d\Sigma_a \\
    &\quad- \int_{(\del W_{\eps,\sigma}) \cap \Omega}~\left[\left(\frac{1}{4\pi \ka^2} \del^a \del_c F^{cb} - \frac{1}{4\pi} F^{ab}\right)\widehat{A}_b - \frac{1}{4\pi \ka^2} \del_c F^{cd} \del^a \widehat{A}_d\right] ~d\Sigma_a\\
    &=: J_{1,\eps} - J_{2, \eps}
\end{align*}
where the orientation of $d\Sigma_a$ is chosen as outward-oriented on both $\del W_{\eps,\sigma}$ and $\del \Omega$. We now wish to send $\eps \to 0^+$ for a fixed $\sigma$ and since $Q(\tau, \sigma)$ is parameterized in such a way that $\tau$ is the proper time on $Q(\cdot, \sigma)$, we can use the asymptotics on $\del W_{\eps,\sigma}$ provided in Proposition \ref{prop:as1} to write
\begin{align*}
    &G_0^a(X^\al_\sigma(\tau, \eps, \theta, \varphi)) = \mathcal{G}_0^a(\tau,\sigma) + \mathcal{O}(\eps)\\
    &G_1^a(X^\al_\sigma(\tau, \eps, \theta, \varphi)) = \frac{e \ka^2}{2} u^a \xi \cdot (u + N)\\
    &\del_c G_0^a(X^\al_\sigma(\tau, \eps, \theta, \varphi)) = \mathcal{O}(1) \\
    &\del_c G_1^a(X^\al_\sigma(\tau, \eps, \theta, \varphi)) = -\frac{e\ka^2 u^a}{2\eps} \left(-\xi_c 
    - u_c (\xi\cdot u) + (\xi\cdot N) N_c\right) + \mathcal{O}(1)\\
    & \del_c F^{cd} = -\frac{e\ka^2 u^d}{\eps}+\mathcal{O}(1)\\
    & \del_a \del_c F^{cb} = \frac{e \ka^2 u^b N_a}{\eps^2} + \mathcal{O}(\eps^{-1})\\
    &\om_d=N_d+\mathcal{O}(\eps)
\end{align*}
where the $\mathcal{O}$ terms are uniform in $\sigma,\tau,\theta, \varphi$ for $\tau$ in a compact interval.
We write $\widehat{A}^d = G_0^d + G_1^d$ and write $J_{2, \eps}$ as a sum of two terms $J_{20, \eps}+J_{21, \eps}$: 
\begin{align*}
    J_{2i, \eps} = \int_{(\del W_{\eps,\sigma})\cap \Omega}~\left[\left(\frac{1}{4\pi \ka^2} \del^a \del_c F^{cb} - \frac{1}{4\pi} F^{ab}\right)(G_i)_b - \frac{1}{4\pi \ka^2} \del_c F^{cd} \del^a (G_i)_d \right] ~d\Sigma_a
\end{align*}
where $i=0, 1$. Consider first $J_{21,\eps}$. Upon using the asymptotic expansions, we get
\begin{align*}
     \left.\lim_{\eps\to 0^+} J_{21, \eps}\right|_{\sigma=0} = -\left.\int_{\tau_0(\sigma)}^{\tau_1(\sigma)} \frac{e^2\ka^2}{2}\xi^b u_b~d\tau\right|_{\sigma=0}.
\end{align*}
As for $J_{20,\eps}$, using the above asymptotics we obtain
\begin{align*}
    \lim_{\eps\to 0^+} \left. J_{20,\eps}\right|_{\sigma=0} &= e \int_{\fq\cap \Omega}  \mathcal{G}^b_0 u_b~d\tau.
\end{align*}
The part $J_{1\eps}$ gives in the limit $\eps \to 0^+$
\begin{align*}
     \lim_{\eps\to 0^+} J_{1\eps}=\int_{\del \Omega}~\left[\left(\frac{1}{4\pi \ka^2} \del^a \del_c F^{cb} - \frac{1}{4\pi} F^{ab}\right)\widehat{A}_b - \frac{1}{4\pi \ka^2} \del_c F^{cd} \del^a \widehat{A}_d\right] ~d\Sigma_a.
\end{align*}
\end{proof}

\begin{lemma}
The variation of $I_{2,\Omega}$ is given by 
\begin{align}\label{var_I2}
    \left.\frac{d}{d\sigma}I_{2, \Omega}[Q(\cdot, \sigma)]\right|_{\sigma = 0} &= e\int_{\fq\cap \Omega} \mathcal{G}^b_0 u_b ~d\tau +
    e\int_{\fq\cap \Omega} \cF^{cb}u_c \xi_b ~ d\tau-\int_{\fq\cap\Om}\frac{e^2\ka^2}{2}\xi^b u_b~d\tau
\end{align}
where all quantities on the right hand side are evaluated at $\sigma=0$.
\end{lemma}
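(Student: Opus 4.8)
The plan is to compute $\frac{d}{d\sigma}I_{2,\Omega}$ directly from a double world-line integral. Inserting the retarded potential \eqref{eq_pot3} into $I_{2,\Omega}$ of \eqref{functionals}, parametrizing $Q^\al(\cdot,\sigma)$ by its own proper time $\tau$ (with the proper-time origin pinned on the incoming part, so that $Q^\al(\cdot,\sigma)$ agrees with $\fq$, parametrization included, before entering $\widetilde\Om$), and using that the retarded parameter of a point $Q^\al(\tau,\sigma)$ lying on the world-line $Q(\cdot,\sigma)$ is $\tau$ itself, one obtains
\begin{align}
I_{2,\Omega}[Q(\cdot,\sigma)] = e^2\ka\int_{\tau_0}^{\tau_1(\sigma)}\left(\int_{-\infty}^{\tau}\frac{J_1(\ka D_\sigma(\tau,\bar\tau))}{D_\sigma(\tau,\bar\tau)}\,u^b(\bar\tau,\sigma)\,d\bar\tau\right)u_b(\tau,\sigma)\,d\tau,
\end{align}
with $D_\sigma(\tau,\bar\tau)=D(Q^\al(\tau,\sigma)-Q^\al(\bar\tau,\sigma))$ and $u^b(\cdot,\sigma)=\del_\tau Q^b(\cdot,\sigma)$. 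The inner integrand is regular at the coincidence $\bar\tau=\tau$, since $\frac{J_1(\ka D)}{D}$ is a smooth function of $D^2$ with value $\frac\ka2$ at $D=0$, and it has a $\sigma$-uniform integrable majorant as $\bar\tau\to-\infty$ by \eqref{eq_Kiess} and Proposition \ref{prop_lower_bound_D}; this is the same situation already handled in Proposition \ref{prop_conv} and Lemma \ref{lem_devA1}, so differentiation in $\sigma$ under the integral sign is legitimate.

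Next I would differentiate at $\sigma=0$ using $\del_\sigma Q^\al=\xi^\al$, $\del_\sigma u^\al=\del_\tau\xi^\al$, $\del_\sigma D_\sigma^2=-2R_c(\xi^c(\tau)-\xi^c(\bar\tau))$ with $R^c=Q^c(\tau)-Q^c(\bar\tau)$, and $\frac{d}{dz}\big(\frac{J_1(z)}{z}\big)=-\frac{J_2(z)}{z}$ from \eqref{eq_someRel}. This produces: (i) a term from the $\sigma$-dependence of the upper limit, $e^2\ka\frac{d\tau_1}{d\sigma}\big|_0[\Psi^bu_b]_{\tau=\tau_1}$ (write $\Psi^b$ for the inner integral); (ii) a term where $\del_\sigma$ hits the outer factor $u_b(\tau,\sigma)$, namely $e^2\ka\int_{\tau_0}^{\tau_1}\Psi^b\,\del_\tau\xi_b\,d\tau$; and two contributions inside the inner integral, (iii) from $\del_\sigma$ acting on $\frac{J_1(\ka D)}{D}$, which brings down the kernel $\frac{\ka J_2(\ka D)}{D^2}\big(R\cdot\xi(\tau)-R\cdot\xi(\bar\tau)\big)$, and (iv) from $\del_\sigma$ acting on $u^b(\bar\tau,\sigma)$, which produces $\del_\tau\xi^b(\bar\tau)$. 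In (ii) I integrate by parts in $\tau$. The boundary term at $\tau_0$ vanishes because $\xi\equiv0$ there; the boundary term at $\tau_1$ does not vanish, but since $\tau$ is the proper time on each $Q(\cdot,\sigma)$ the reparametrization forces $\xi^\al(\tau_1,0)=-u^\al(\tau_1)\frac{d\tau_1}{d\sigma}\big|_0$, so that this boundary term exactly cancels (i). After the integration by parts one meets $\del_\tau\Psi^b$, whose coincidence contribution (from $\bar\tau=\tau$, where $\frac{J_1(\ka D)}{D}=\frac\ka2$) is $\frac\ka2u^b(\tau)$ and yields precisely the term $-\int_{\fq\cap\Om}\frac{e^2\ka^2}{2}\xi^bu_b\,d\tau$, together with an interior remainder carrying the kernel $\frac{\ka J_2(\ka D)}{D^2}(R\cdot u(\tau))$.

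It then remains to collect the surviving double integrals — the interior remainder just produced, and the parts of (iii) and (iv) — and identify them against the definitions. Using the definition of $\mathcal{G}_0^a$ from Lemma \ref{lem_var_I1}, the $R\cdot\xi(\bar\tau)$-part of (iii) and all of (iv) assemble into $e\int_{\fq\cap\Om}\mathcal{G}_0^bu_b\,d\tau$; and the $R\cdot\xi(\tau)$-part of (iii) together with the interior remainder from $\del_\tau\Psi^b$ combine — via $R^{[c}u^{b]}=R^cu^b-R^bu^c$ and the antisymmetry of $\cF^{cb}$, which is what turns the bilinear combination $(R\cdot\xi(\tau))(u(\tau)\cdot u(\bar\tau))-(R\cdot u(\tau))(u(\bar\tau)\cdot\xi(\tau))$ into a contraction of $\cF^{cb}$ with $u$ and $\xi$ — into $e\int_{\fq\cap\Om}\cF^{cb}u_c\xi_b\,d\tau$ with $\cF^{cb}$ as in \eqref{eq_self1}. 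The precise signs and the exact pairing of the remaining double integrals with $\mathcal{G}_0^a$ and $\cF^{ab}$ require careful tracking but are forced once the conventions (in particular the antisymmetrization convention and the constraint $\del_\tau(u\cdot\xi)=a\cdot\xi$ coming from $\tau$ being proper time along every $Q(\cdot,\sigma)$) are fixed. I expect the main obstacle to be exactly this endpoint/parametrization bookkeeping in the preceding paragraph — following how the $\sigma$-dependence of the proper-time parametrization of $Q(\cdot,\sigma)$ enters simultaneously through the moving limit $\tau_1(\sigma)$ and through the value of $\xi$ near $\del\Omega$, so that no spurious boundary term survives — together with justifying the differentiation under the integral across the apparent singularity at $\bar\tau=\tau$, which is handled by noting that $\frac{J_1(\ka D)}{D}$ is a genuinely smooth function of the curve there. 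The remaining algebraic matching is routine but lengthy.
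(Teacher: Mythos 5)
Your proposal is correct and follows essentially the same route as the paper: the paper first defines $\mathcal{A}^a(\tau,\sigma)=A^a[Q(\cdot,\sigma)](Q^\alpha(\tau,\sigma))$ and computes $\del_\sigma\mathcal{A}^a$ and $\del_\tau\mathcal{A}^a$ separately, but the substance — the moving-endpoint term cancelling the integration-by-parts boundary term via $\xi^b(\tau_1)+u^b(\tau_1)\del_\sigma\tau_1=0$, the coincidence value $J_1(z)/z\to\tfrac12$ producing the $\tfrac{e^2\ka^2}{2}\xi\cdot u$ term, and the antisymmetric combination $\mathcal{K}^bu_b-\widetilde{\mathcal{K}}^b\xi_b=\cF^{bc}\xi_bu_c$ — is exactly what you describe. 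The pieces you label (iii)$_{\bar\tau}$+(iv) and (iii)$_{\tau}$+interior remainder are precisely the paper's $\mathcal{G}_0^b u_b$ and $\mathcal{K}^bu_b-\widetilde{\mathcal{K}}^b\xi_b$, respectively.
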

\begin{proof} In order to calculate the variation of $I_{2, \Omega}$ we need to vary the vector potential $A^a$ on the world-line with respect to $\sigma$, where the vector potential is generated by $Q^\alpha(\tau,\sigma)$. This differs from $\widehat{A}^a$, which was the derivative of $A^a$ with respect to $\sigma$ for a fixed $x^\alpha$ off the world-line. To shorten notation, we define $\mathcal{A}^a(\tau, \sigma)$ by
$$
\mathcal{A}^a(\tau, \sigma) := e \ka \int_{-\infty}^\tau \frac{J_1(\ka D(Q^\alpha(\tau, \sigma)-Q^\alpha(\bar\tau, \sigma)))}{
D(Q^\alpha(\tau, \sigma)-Q^\alpha(\bar\tau, \sigma))}u^a(\bar \tau, \sigma)~d\bar \tau = A^a[Q(\cdot, \sigma)](Q^\alpha(\tau, \sigma))
$$
and after observing
\begin{align*}
    \left.\del_\sigma \frac{J_1(\ka D(Q^\alpha(\tau, \sigma)-Q^\alpha(\bar\tau, \sigma)))}{
D(Q^\alpha(\tau, \sigma)-Q^\alpha(\bar\tau, \sigma))}\right|_{\sigma = 0} 
&=\ka \frac{J_2(\ka D)}{D^2} (\xi^b(\tau)-\xi^b(\bar \tau)) (q_b(\tau)- q_b(\bar \tau))
\end{align*}
we compute
\begin{align*}
    \left.\del_{\sigma} \mathcal{A}^a(\tau, \sigma)\right|_{\sigma = 0} &= -e\ka^2 \int_{-\infty}^\tau \frac{J_2(\ka D)}{D^2} \xi^b(\bar \tau) (q_b(\tau)-q_b(\bar \tau)) u^a(\bar \tau)~d\bar \tau  + e\ka \int_{-\infty}^{\tau} \frac{J_1(\ka D)}{D} \del_\tau \xi(\bar \tau)~d\bar \tau \\
    &\quad+ e\ka^2\int_{-\infty}^\tau 
    \frac{J_2(\ka D)}{D^2} \xi^b(\tau) (q_b(\tau)-q_b(\bar \tau)) u^a(\bar \tau)~d\bar \tau.
\end{align*}
Notice that this can be written as $G^a_0(q^\alpha(\tau)) + \mathcal{K}^a(\tau)$, or $\mathcal{G}^a_0(\tau) + \mathcal{K}^a(\tau)$, where $\mathcal{K}^a$ is defined by
\begin{align*}
    \mathcal{K}^a(\tau) :=  e\ka^2\int_{-\infty}^\tau 
    \frac{J_2(\ka D)}{D^2} \xi^b(\tau) (q_b(\tau)-q_b(\bar \tau)) u^a(\bar \tau)~d\bar \tau.
\end{align*}
Moreover,
\begin{align*}
    \left.\del_{\tau} \mathcal{A}^a(\tau, \sigma)\right|_{\sigma = 0} &= \frac{e \ka^2}{2} u^a(\tau) 
     + e \ka^2 \int_{-\infty}^\tau \frac{J_2(\ka D)}{D^2}u^b(\tau) (q_b(\tau) - q_b(\bar\tau))u^a(\bar \tau)~d\bar \tau=: \frac{e \ka^2}{2} u^a(\tau)+ \widetilde{\mathcal{K}}^a.
\end{align*}
A calculation yields for $\sigma=0$
\begin{align}\label{eq_key}
\left.\mathcal{K}^b u_b - \widetilde{\mathcal{K}}^b \xi_b\right. = \cF^{bc}\xi_b u_c.
\end{align}
Now we compute the variation of $I_{2,\Omega}$:
\begin{align*}
     &\frac{1}{e}\frac{d}{d\sigma} I_{2, \Omega} =\frac{d}{d\sigma} \int_{\tau_0(\sigma)}^{\tau_{1}(\sigma)} \mathcal{A}^{b} u_b~d\tau \\
     &= \int_{\tau_0(\sigma)}^{\tau_{1}(\sigma)} \frac{d}{d\sigma} \left(\mathcal{A}^{b} u_b\right)~d\tau +\mathcal{A}^{b}(\tau_1(\sigma), \sigma)u_b(\tau_1(\sigma), \sigma)\del_\sigma\tau_1(\sigma)-\mathcal{A}^{b}(\tau_0(\sigma), \sigma)u_b(\tau_0(\sigma), \sigma)\del_\sigma\tau_0(\sigma) \\
     &= \int_{\tau_0(\sigma)}^{\tau_{1}(\sigma)} \del_\sigma\mathcal{A}^{b} u_b+\mathcal{A}^{b}\del_\sigma u_b~d\tau +\mathcal{A}^{b}(\tau_1(\sigma), \sigma)u_b(\tau_1(\sigma), \sigma)\del_\sigma\tau_1(\sigma)-\mathcal{A}^{b}(\tau_0(\sigma), \sigma)u_b(\tau_0(\sigma), \sigma)\del_\sigma\tau_0(\sigma)\\
     &\quad+\mathcal{A}^{b}(\tau_1(\sigma), \sigma)u_b(\tau_1(\sigma), \sigma)\del_\sigma\tau_1(\sigma)-\mathcal{A}^{b}(\tau_0(\sigma), \sigma)u_b(\tau_0(\sigma), \sigma)\del_\sigma\tau_0(\sigma)
\end{align*}
where in the last step we used $\del_\sigma u_b = \del_\tau\xi_b$ and performed an integration by parts. Now for $i=0,1$ we have
\begin{align*}
    \xi^b(\tau_i(\sigma), \sigma)+u^b(\tau_i(\sigma), \sigma)\del_\sigma\tau_i(\sigma)
    &= (\del_\sigma Q^b)(\tau_i(\sigma), \sigma) +(\del_\tau Q^b)(\tau_i(\sigma), \sigma)\del_\sigma\tau_i(\sigma)\\
    &=\del_\sigma\left(Q^b(\tau_i(\sigma),\sigma)\right)=0
\end{align*}
by Assumption (iii) of Definition \ref{def_variation} (note that the coordinate of the points where the $Q(\cdot,\sigma)$ enter and exit $\Omega$ do not change with $\sigma$). This means that the boundary terms above sum up to 0. We have now:  
\begin{align*}
     \left.\frac{1}{e}\frac{d}{d\sigma} I_{2, \Omega}\right|_{\sigma=0} &=\left.\int_{\tau_0(\sigma)}^{\tau_{1}(\sigma)} \del_\sigma\mathcal{A}^{b} u_b-\del_\tau\mathcal{A}^{b} \xi_b~d\tau\right|_{\sigma=0}= \int_{\tau_0}^{\tau_{1}} \left[\mathcal{G}_0^b u_b + \mathcal{K}^b u_b -\left( \frac{e \ka^2}{2} u^b + \widetilde{\mathcal{K}}^b\right)\xi_b \right]~d\tau\\
    &= \int_{\tau_0}^{\tau_{1}} \mathcal{G}_0^b u_b~d\tau +
    \int_{\tau_0}^{\tau_{1}} \left[\mathcal{K}^b u_b - \widetilde{\mathcal{K}}^b \xi_b\right] ~d\tau - \int_{\tau_0}^{\tau_1} \frac{e\ka^2}{2} \xi^b u_b~d\tau\\
    &= \int_{\tau_0}^{\tau_{1}} \mathcal{G}_0^b u_b ~d\tau +
    \int_{\tau_0}^{\tau_{1}} \cF^{bc}u_c \xi_b ~ d\tau - \int_{\tau_0}^{\tau_1} \frac{e\ka^2}{2} \xi^b u_b~d\tau
\end{align*}
where the quantities on the right hand side are evaluated at $\sigma = 0$.  We have used \eqref{eq_key} in the final step. 
\end{proof}

\begin{proof}[Proof of Theorem \ref{thm_var}]
The computation of the variation of $I_{3,\Omega}$ is standard and gives
\begin{align}
    \left.\frac{d}{d\sigma} I_{3,\Omega}\right|_{\sigma = 0} = \int_{\fq} \left(-m_0 \frac{d u_b}{d\tau} + (F_{ext})_{ab} u^a \right) \xi^b ~d\tau 
\end{align}
where $(F_{ext})_{ab} = \del_{[a} (A_{ext})_{b]}$. Now combining with \eqref{eq_var_I1} and \eqref{var_I2}, we obtain 
\begin{align}\label{eq_aux2}
\begin{split}
    \frac{d}{d\sigma} \left.I_\Omega\right|_{\sigma = 0} &= B
    +\int_{\fq} e\mathcal{F}_{bc} u^c \xi^b ~d\tau+ \int_{\fq} \left(-m_0 \frac{d u_b}{d\tau} + (F_{ext})_{ab} u^a \right) \xi^b ~d\tau
\end{split}
\end{align}
where $B$ denotes the boundary term in \eqref{eq_var_I1} (integral over $\del\Omega$). Recall that the assumption of Theorem \ref{thm_var} was that $\frac{d}{d\sigma} \left.I_\Omega\right|_{\sigma = 0} = B$ for all variations of $\fq$. We deduce that 
\begin{align*}
m_0 \frac{d u_b}{d\tau} = e\mathcal{F}_{bc} u^c + (F_{ext})_{bc} u^c
\end{align*}
i.e. the equation of motion \eqref{eq_mot_var} holds.
\end{proof}

%
%

\section{Motion under influence of external forces}

In this section, we present a global existence result for the equations of motion of a single particle with a given external force. 
In the following, we fix a choice of Lorentz frame and work in $3+1$-dimensional notation, i.e. $x^a=(t, x)$, $x=(x^1,x^2,x^3)$. The self-fields take the form 
\begin{align}\label{eq_self}
\begin{split}
E(t) &=- e\ka^2 \int_{-\infty}^t \frac{J_2(\ka D)}{D^2}[q(t)-q(t')- v(t') (t-t')]~dt'\\
B(t) &= -e\ka^2 \int_{-\infty}^t \frac{J_2(\ka D)}{D^2}v(t')\times (q(t)-q(t'))~dt'
\end{split}
\end{align}
and the self-force three-vector is given by $e(E+ v\times B)$. Thus the equation of motion \eqref{eq_motion} reads
\begin{align}\label{eq_mot}
m_0 \frac{d}{dt}\left(\frac{v}{\sqrt{1-v^2}}\right) = e( E(t) + v(t) \times B(t) ) + F_{ext}(t)
\end{align}
where the self-fields are given by \eqref{eq_self} and where we have added an external, time-dependent force $F_{ext}$. The history-dependence of the self-force prevents us from considering an initial value problem for \eqref{eq_mot}. We therefore impose the condition that $v$ is asymptotically constant in the infinite past. By adjusting the Lorentz frame, we may assume that the particle was asymptotically at rest, i.e.
\begin{align}\label{cond_init}
\lim_{t\to-\infty}v(t) = 0.
\end{align}
\begin{theorem}\label{thm_exist}
Suppose that the external force $F_{ext}:\R \to \R^3$ is continuous and $F_{ext} = 0$ on 
$(-\infty, A_0]$ for some $A_0$. Then \eqref{eq_mot} has a solution $v \in C^1(\R)$ satisfying \eqref{cond_init} and such that
\begin{align}
\sup_{t\in (-\infty, A]} |v(t)| < 1
\end{align}
for all $A\in \R$. Moreover, we have 
\begin{align}\label{zero_left}
v(t) = 0 \quad (t \leq A_0)
\end{align}
and $v$ is unique among all solutions satisfying \eqref{zero_left}.
\end{theorem}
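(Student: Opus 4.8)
The plan is to set up the equation of motion \eqref{eq_mot} as a fixed-point problem on a space of velocity histories and solve it by a continuation/bootstrap argument marching forward in time. Because of \eqref{zero_left}, the solution is forced to vanish on $(-\infty, A_0]$, which both gives a natural ``initial condition'' at $t = A_0$ and makes all the history integrals in \eqref{eq_self} actually range over the compact interval $[A_0, t]$ once we are to the right of $A_0$. So the first step is to observe that on $(-\infty, A_0]$ setting $v \equiv 0$ solves \eqref{eq_mot} (both self-fields $E, B$ vanish when the trajectory is constant, since $q(t) - q(t') - v(t')(t-t') = 0$ and the cross product vanishes, and $F_{ext} = 0$ there); this also pins down uniqueness on that half-line and reduces the problem to solving \eqref{eq_mot} on $[A_0, \infty)$ with $v(A_0) = 0$ and the history before $A_0$ frozen at zero.

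**Next I would** rewrite \eqref{eq_mot} in integrated form. Since $p := v/\sqrt{1-v^2}$ is a $C^1$ diffeomorphism of the open unit ball onto $\R^3$ with inverse $v = p/\sqrt{1+p^2}$, it is cleaner to solve for $p$:
\begin{align*}
m_0\, p(t) = \int_{A_0}^t \Big( e(E(s) + v(s)\times B(s)) + F_{ext}(s)\Big)\, ds, \quad t \geq A_0,
\end{align*}
where $E, B$ are the functionals of the history $v|_{(-\infty, s]}$ given by \eqref{eq_self}. For a fixed $T > A_0$ I would run a contraction mapping argument on $C([A_0, T]; \bar B_\rho)$ for a suitable velocity bound $\rho < 1$: the map sends a candidate history $v$ to the $v$ obtained by the right-hand side above (followed by $p \mapsto v = p/\sqrt{1+p^2}$). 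The key estimates come from Proposition~\ref{prop_conv} and inequality \eqref{eq_Kiess}: the kernel $\ka^2 J_2(\ka D)/D^2$ is bounded (using \eqref{eq_Kiess} with $\nu = 2$, $|J_2(\ka D)|/D^2 \le C\ka^2/(1+\ka^2 D^2)$, hence $\le C\ka^2$), and the numerator factors $q(t)-q(t') - v(t')(t-t')$, $v(t')\times(q(t)-q(t'))$ are Lipschitz in the history with constants controlled by $\rho$ and $T - A_0$; moreover $D = D(q(t)-q(t'))$ is itself controlled below and above by $|t - t'|$ and $\rho$ via the subluminality bound. So on a short enough interval the map is a contraction into a ball $\bar B_\rho$, giving a local solution.

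**The continuation step** is where the real work lies, and it is the main obstacle. A local contraction on a short interval is routine; the issue is to show the solution extends to all of $\R$ and stays subluminal on every half-line $(-\infty, A]$, i.e. $\sup_{t\le A}|v(t)| < 1$ for each finite $A$ — it need \emph{not} stay below a single universal bound. The plan is to derive an a priori bound on $p(t)$ on any finite interval $[A_0, A]$ that depends only on $A$ (and on $\sup_{[A_0,A]}|F_{ext}|$, $e$, $\ka$, $m_0$): from the integrated equation, $m_0|p(t)| \le \int_{A_0}^{t}\big(|e|\,(|E| + |B|) + |F_{ext}|\big)\,ds$, and from \eqref{eq_self} with the kernel bound one gets $|E(s)| + |B(s)| \le C(\ka)\,e \int_{A_0}^{s}\big(|q(s)-q(s')| + |t-t'| \cdot 1\big)\,ds' \le C(\ka, e)\,(s - A_0)$ using only $|v| < 1$ (so $|q(s) - q(s')| \le |s - s'|$) — crucially this bound does \emph{not} degenerate as $|v|\to 1$, because the numerators in \eqref{eq_self} carry no negative powers of $1 - v^2$ and $D$ stays bounded below (it is bounded below by a positive multiple of $|t - t'|$ only when $|v|$ stays away from $1$, but the upper bound $D \le |t-t'|$ always holds and is all we need for bounding $E, B$). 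Wait — one must be careful: $1/D^2$ blows up as $D \to 0$, i.e. as $t' \to t$; here \eqref{eq_Kiess} saves us since $|J_2(\ka D)|/D^2 \le C\ka^{5/2}$ is genuinely bounded near $D = 0$. This yields $m_0|p(t)| \le C(\ka,e)(t-A_0)^2 + \int_{A_0}^t|F_{ext}|$, hence $p$, and therefore $v = p/\sqrt{1+p^2}$ with $|v|<1$ automatically, is bounded on $[A_0, A]$ by a constant $\rho(A) < 1$ depending only on $A$. Feeding $\rho(A)$ back into the contraction argument lets the local solution be continued past any finite time with no blowup, and the resulting $v$ lies in $C^1(\R)$ by the integral equation and continuity of the integrand. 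Uniqueness on $[A_0, \infty)$ follows from Gronwall applied to the difference of two solutions (both satisfying \eqref{zero_left}), using the Lipschitz bounds on the self-field functionals on each finite interval; combined with the forced vanishing on $(-\infty, A_0]$ this gives uniqueness among all solutions satisfying \eqref{zero_left}, and \eqref{cond_init} holds trivially since $v \equiv 0$ near $-\infty$.
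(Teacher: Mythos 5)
Your overall strategy coincides with the paper's: pass to the integral equation for the momentum $p$, run a contraction argument for local existence and continuation, and then close the argument with an a priori bound on the self-fields that does not degenerate as $|v|\to 1$. The local contraction and the Gronwall-type uniqueness are fine in outline. However, there is a genuine gap in your a priori estimate, and it sits exactly where the paper's proof does its real work.

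You assert that, because $v\equiv 0$ on $(-\infty,A_0]$, ``all the history integrals in \eqref{eq_self} actually range over the compact interval $[A_0,t]$.'' This is true for $B$ (its integrand carries a factor $v(t')$) but false for $E$: for $t'\leq A_0$ the integrand of $E$ is $\frac{J_2(\ka D)}{D^2}\,[q(t)-q(t')-v(t')(t-t')]=\frac{J_2(\ka D)}{D^2}\,[q(t)-q(A_0)]$, which is a nonzero constant vector in $t'$ once the particle has moved. So the tail $\int_{-\infty}^{A_0}$ does not drop out, and your subsequent estimate --- which uses only the boundedness of the kernel near $D=0$ and the upper bound $D\leq|t-t'|$ --- cannot control it: a bounded integrand over a half-line gives a divergent integral. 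To handle the tail you must use the large-argument decay in \eqref{eq_Kiess} together with a \emph{lower} bound on $D$ as $t'\to-\infty$; the point of the paper's estimates \eqref{eq_estimate1}--\eqref{eq_estimate2} is precisely that for $t'\leq A_0\leq t$ one has $|q(t)-q(t')|\leq t-A_0$ and hence $D^2(t,t')\geq |t-t'|^2\bigl(1-\bigl|\tfrac{t-A_0}{t-t'}\bigr|^2\bigr)$, which is uniform in the velocity on $[A_0,t]$ because the particle is at rest before $A_0$. With that correction your a priori bound (a constant depending only on $t-A_0$, not on how close $|v|$ gets to $1$) is recovered and the continuation argument closes as you intend. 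A cosmetic remark: your own integrand bound $|s-s'|$ over $[A_0,s]$ gives $|E(s)|+|B(s)|\leq C(s-A_0)^2$, not $C(s-A_0)$; this does not affect the conclusion since any bound finite on finite intervals suffices.
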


To start, we first reformulate the equation of motion \eqref{eq_mot} as an integral equation. The values of $e, \ka, m_0$ will not matter in the following and we set all of them equal to $1$. The relativistic momentum is  
$$
p = \frac{v}{\sqrt{1-v^2}}
$$
and the velocity can be recovered from
\begin{align}\label{eq_vel_p}
v = p/\sqrt{1+p^2}.
\end{align}
Integrating \eqref{eq_mot}, we arrive at
\begin{align}\label{eq_int1}
p(s) = \int_{-\infty}^s (E + p/\sqrt{1+p^2} \times B)~dt + \int_{-\infty}^s F_{ext}~dt
\end{align}
where the self-fields are defined by 
\begin{align}
\begin{split}
E(t) = E[p](t) &= \int_{-\infty}^t G(t-t', q(t)-q(t')) \int_{t'}^t [v(s)-v(t')]~ds ~dt'\\
B(t) = B[p](t) &= \int_{-\infty}^t G(t-t', q(t)-q(t')) v(t')\times \int_{t'}^t v(s)~ds ~dt'
\end{split}
\end{align}
and $v$ can be expressed by $p$ via the relation \eqref{eq_vel_p}. For given $p$, the expression $q(t)-q(t')$ is defined by 
\begin{align}\label{def_qt}
q(t)-q(t') = \int_{t'}^t p/\sqrt{1+p^2}~ds.
\end{align}
The kernel $G$ is given by
\begin{align}\label{def_G}
G(t-t', q(t)-q(t')) = -\frac{J_2(\sqrt{|t-t'|^2-|q(t)-q(t')|^2})}{|t-t'|^2-|q(t)-q(t')|^2}.
\end{align}

Let the spaces $\calC(-\infty, A], \calC(-\infty, A)$ be defined by
\begin{align}
\begin{split}
\calC(-\infty, A] &= \{ p : (-\infty, A] \to \R^3 : \text{$p$ is continuous, bounded and}\lim_{t\to -\infty} p(t) = 0\}\\
\calC(-\infty, A) &= \{ p : (-\infty, A) \to \R^3 : p\in \calC(-\infty,A-\eps]~\text{for all}~\eps>0\}.
\end{split}
\end{align}
$\calC(-\infty, A]$ is equipped with the usual $\|\cdot\|_{\infty}$ norm. 

\begin{lemma}\label{lem_cont}
\begin{enumerate}
\item[(a)]
Suppose that  $p\in \calC(-\infty, A]$ is a solution of the integral equation \eqref{eq_int1}. Then there exists a $\delta > 0$ and a unique solution $\tilde{p}\in \calC(\infty, A+\delta]$
of \eqref{eq_int1} such that $\tilde{p}(t) = p(t)$ for $t\leq A$. 
\item[(b)] Suppose now that $p \in \calC(-\infty, A)$ solves the integral equation \eqref{eq_int1} on $(-\infty, A)$ and that moreover
\begin{align}\label{eq_cond_cont}
\sup_{t\in (-\infty, A)} |E[p](t)|+|B[p](t)| < \infty
\end{align}
holds. Then the solution $p$ can be continued beyond $t = A$.
\end{enumerate}
\end{lemma}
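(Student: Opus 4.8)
I would prove (a) by a Banach fixed-point argument that extends the given solution to the right. First record that $M:=\sup_{(-\infty,A]}|p|<\infty$, so the velocity $v=p(1+p^2)^{-1/2}$ of the given data satisfies $\sup_{(-\infty,A]}|v|<1$. Fixing $R:=|p(A)|+1$ and a small $\delta>0$, work in the complete metric space $X_\delta$ of continuous maps $\tilde p\colon(-\infty,A+\delta]\to\R^3$ with $\tilde p=p$ on $(-\infty,A]$ and $\sup_{[A,A+\delta]}|\tilde p|\le R$, under $\|\cdot\|_\infty$. On $X_\delta$ one has $|v|\le\tilde K<1$ with $\tilde K$ depending only on $M$ and $R$; since $|q(t)-q(t')|=\bigl|\int_{t'}^t v\,ds\bigr|\le\tilde K(t-t')$ for $t'<t$, combining \eqref{eq_Kiess} with the lower bound $D^2\ge(1-\tilde K^2)(t-t')^2$ yields the uniform kernel estimate $|G(t-t',q(t)-q(t'))|\le C_{\tilde K}(1+(t-t')^2)^{-5/4}$, so the self-field integrals \eqref{eq_self} converge absolutely, exactly as in Proposition \ref{prop_conv}. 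The map I would iterate is $(\Phi\tilde p)(s)=\tilde p(s)$ for $s\le A$ and
\begin{align*}
(\Phi\tilde p)(s)=p(A)+\int_A^s\Bigl(E[\tilde p](t)+\tfrac{\tilde p(t)}{\sqrt{1+\tilde p^2(t)}}\times B[\tilde p](t)+F_{ext}(t)\Bigr)\,dt,\qquad s\in(A,A+\delta],
\end{align*}
which is well defined because $E[\tilde p](t)$ and $B[\tilde p](t)$ for $t\le A+\delta$ depend on $\tilde p$ only through $\tilde p|_{(-\infty,t]}$.

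The heart of (a) is two estimates, uniform for $t\in[A,A+\delta]$: the boundedness $|E[\tilde p](t)|+|B[\tilde p](t)|\le C_0$ on $X_\delta$, and the Lipschitz bound $|E[\tilde p_1](t)-E[\tilde p_2](t)|+|B[\tilde p_1](t)-B[\tilde p_2](t)|\le C_1\,\delta\,\|\tilde p_1-\tilde p_2\|_\infty$. The first follows from the kernel estimate integrated against the $\calO(t-t')$ factor $\int_{t'}^t(v(s)-v(t'))\,ds$. For the second I would use that $\tilde p_1,\tilde p_2$ agree on $(-\infty,A]$, that $p\mapsto p(1+p^2)^{-1/2}$ and $p\mapsto q(t)-q(t')$ are Lipschitz, and that $G$ is Lipschitz in its spatial slot with derivative controlled through $\tfrac{d}{dz}(J_2(z)/z^2)=-J_3(z)/z^2$ (cf.\ \eqref{eq_someRel}) and $|J_3(z)/z^2|\le C(1+z^2)^{-7/4}$; the gain of one power of $\delta$ appears because the difference is fed only by the history segment $t'\in[A,t]$ of length $\le\delta$ and by the future part $\int_A^t v\,ds$ of $q(t)-q(t')$, which is $\calO(\delta\|\tilde p_1-\tilde p_2\|_\infty)$. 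Granting these, for $\delta$ small $\Phi$ maps $X_\delta$ into itself, since $|(\Phi\tilde p)(s)-p(A)|\le\delta(C_0+\sup_{[A,A+\delta]}|F_{ext}|)\le1$, and $\Phi$ is a contraction; its unique fixed point is the desired solution $\tilde p\in\calC(-\infty,A+\delta]$ of \eqref{eq_int1}. Uniqueness among \emph{all} continuous extensions then follows from a standard connectedness argument: if two such solutions agree up to a maximal time $T^{*}<A+\delta$, re-running the local argument with base point $T^{*}$ forces them to agree slightly beyond $T^{*}$.

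For (b), I would note that \eqref{eq_cond_cont} makes the integrand of \eqref{eq_int1} bounded on a left neighborhood of $A$ ($E[p],B[p]$ by hypothesis, $p(1+p^2)^{-1/2}$ by $1$, $F_{ext}$ by continuity), so \eqref{eq_int1} gives $|p(s_2)-p(s_1)|\le C|s_2-s_1|$ for $s_1,s_2$ near $A$; hence $p(A):=\lim_{t\to A^-}p(t)$ exists and $p\in\calC(-\infty,A]$, with $|v|<1$ near $A$. Passing to the limit $s\to A^-$ in \eqref{eq_int1} (splitting $\int_{-\infty}^s=\int_{-\infty}^{A-1}+\int_{A-1}^s$, the first being the constant $p(A-1)$ and the second converging since its integrand is bounded near $A$) shows that $p$ satisfies \eqref{eq_int1} at $s=A$ as well, the integrand there involving $p$ only on $(-\infty,A)$. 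Thus $p\in\calC(-\infty,A]$ is a solution of \eqref{eq_int1}, and part (a) produces a continuation to $(-\infty,A+\delta]$, which is the asserted continuation beyond $t=A$.

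The main obstacle I anticipate is the uniform control of the light-cone kernel $G$ near the diagonal $t'\to t$ and the attendant Lipschitz estimates for $E[\cdot]$ and $B[\cdot]$: one must combine the boundedness of $J_2(z)/z^2$ and $J_3(z)/z^2$ near $z=0$, the decay \eqref{eq_Kiess} at large argument, and the subluminal lower bound $D^2\ge(1-\tilde K^2)(t-t')^2$, all with constants uniform in $t\in[A,A+\delta]$ despite the possibly large fixed past data $p|_{(-\infty,A]}$. A related subtlety, and the reason one obtains a genuine contraction rather than merely continuity of $\Phi$, is that for $t>A$ the quantity $q(t)-q(t')$ with $t'<A$ still depends on the unknown future part of $p$ through $\int_A^t v\,ds$, so the dependence of $E[\tilde p](t)$ and $B[\tilde p](t)$ on a perturbation of $\tilde p$ is not localized to $t'\in[A,t]$ and has to be tracked separately.
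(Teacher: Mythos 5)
Your proposal is correct and follows essentially the same route as the paper: a contraction-mapping argument on a ball of continuous extensions glued to the fixed past data, with the self-field bounds and the $\delta$-gain in the Lipschitz estimate obtained exactly as in the paper's Proposition \ref{prop_estD} (via the subluminal lower bound on $D^2$, the Bessel decay \eqref{eq_Kiess}, and the derivative identity for $J_2(z)/z^2$), and part (b) reduced to part (a) by showing \eqref{eq_cond_cont} forces $p$ to extend to an element of $\calC(-\infty,A]$ solving \eqref{eq_int1}. The only differences are cosmetic (your ball is centered at the origin with radius $|p(A)|+1$ rather than at $p(A)$ with radius $M$), and you supply slightly more detail than the paper on the uniqueness and on the limit $s\to A^-$ in part (b).
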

\begin{proof}
\emph{(a)} Let a solution $p \in \calC(-\infty, A]$ be given. Define the space
$$
X_{\delta, M} := \{ \bar p\in \calC[A, A+\delta]: \bar p(A) = p(A),~|\bar p(s) -  p(A)| \leq M \}
$$
for given $\delta, M > 0$. $X_{\delta, M}$ is a complete metric space with distance given by $d(\bar p_1, \bar p_2) = \|\bar p_1 - \bar p_2\|_{\infty,[A,A+\delta]}$.
Our goal is now to define a suitable fixed-point operator. 
For any given $\bar p\in X_{\delta, M}$ extend the function $\bar p$ for $t\leq A$ by
\begin{align}\label{eq_ext}
\widetilde{p}(s) = \left\{ \begin{array}{ll}
p(s) &: s \leq A\\
\bar p(s) &: s \in [A, A+\delta]
\end{array}
\right.
\end{align}
and set
\begin{align}
\calK[\bar p](s) = p(A) + \int_A^s \left( E[\tilde p](t) + \frac{\bar p}{\sqrt{1+{\bar p}^2}}\times B[\tilde p](t)\right)~dt + \int_A^s F_{ext}(t)~dt.
\end{align}
The fixed-point equation to consider is:
\begin{align}\label{eq_int2}
\bar p(s) = \calK[\bar p](s)\quad (s\in [A, A+\delta]).
\end{align}
It is straightforward to check that for any solution $\bar p$ of \eqref{eq_int2} the function defined by \eqref{eq_ext}
solves \eqref{eq_int1} on $(-\infty, A+\delta]$.

One can show that 
$|\calK[\bar p]-p(A)|\leq C_1(M)\delta$ with an explicitly computable constant $C=C_1(M)$. If $\delta>0$ is chosen small enough, the operator $\calK$ maps $X_{\delta, M}$ into itself. Next, we observe that 
\begin{align}
\|\calK[\bar p_1] -\calK[\bar p_2]\|_{\infty}\leq C_2(M, \delta)\|\bar p_1-\bar p_2\|_{\infty}.
\end{align}
This follows from somewhat tedious computations and estimations using Proposition \ref{prop_estD}. $C_2$ can be computed explicitly and $C_2 < 1$ for small enough $\delta > 0$. Hence, a standard application of the contraction mapping theorem finishes this part of the proof.

\emph{(b)} The condition \eqref{eq_cond_cont} and the the fact that $p$ solves \eqref{eq_int1} imply that $p \in \calC(-\infty, A]$. Thus, we are in the situation of part \emph{(a)}.
\end{proof}

\begin{proposition}\label{prop_estD}
Let $p\in \calC(-\infty, A]$ be a solution as in Lemma \ref{lem_cont}. Let $\bar p_1, \bar p_2\in X_{\delta, M}$ and let $\tilde{p}_1, \tilde{p}_2$ be the corresponding extensions to $(-\infty, A+\delta]$ as in \eqref{eq_ext}. Denote
$$
D_i(t,t'): = \sqrt{|t-t'|^2 - |q_i(t)-q_i(t')|^2}
$$
and $v_i$ is defined from $\tilde{p}_i$ by \eqref{eq_vel_p} and $q_i(t)-q_i(t')$ by \eqref{def_qt}.
We have the the following estimates for all $ t'\leq t \in (-\infty, A+\delta]$: 
\begin{align}
&D_i^2(t,t') \geq K_1 |t-t'|^2\label{est1}\\
& \frac{|J_2(D_i)|}{D_i^2} \leq \frac{K_2}{(1+D^2_i)^{5/4}}\label{est4}\\
&|D_1-D_2|\leq K_3\delta \|\bar p_2-\bar p_1\|_{\infty}\label{est2}\\
\begin{split}
&|G(t-t', q_1(t)-q_1(t')) - G(t-t', q_2(t)-q_2(t'))|\leq K_4\delta\frac{\|\bar p_2-\bar p_1\|_{\infty}}{(1+|t-t'|^{2})^{5/4}}
\end{split}\label{est3}
\end{align}
with constants $K_1, K_2, K_3, K_4$ depending on $M$ and $\|p\|_\infty$ ($p$ being the solution on $(-\infty, A]$), but independent of $\delta$, if $\delta$ is sufficiently small.
\end{proposition}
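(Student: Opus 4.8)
The plan is to derive all four bounds from a single uniform velocity estimate together with the Bessel inequality \eqref{eq_Kiess}, so the first step is to record that on $(-\infty,A+\delta]$ both extensions are bounded by $P:=\|p\|_\infty+M$: indeed $|\tilde p_i|=|p|\le\|p\|_\infty$ on $(-\infty,A]$ and $|\tilde p_i|=|\bar p_i|\le|\bar p_i-p(A)|+|p(A)|\le M+\|p\|_\infty$ on $[A,A+\delta]$. Hence by \eqref{eq_vel_p} the velocities $v_i=\tilde p_i/\sqrt{1+|\tilde p_i|^2}$ satisfy $|v_i(s)|\le K_0:=P/\sqrt{1+P^2}<1$ for all relevant $s$, with $K_0$ depending only on $M$ and $\|p\|_\infty$. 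This is precisely where the parameter $M$ does its work, and it makes the $\delta$-independence of all constants transparent.

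Given this, \eqref{est1} is immediate: from \eqref{def_qt}, $|q_i(t)-q_i(t')|\le K_0|t-t'|$, so $D_i^2=|t-t'|^2-|q_i(t)-q_i(t')|^2\ge(1-K_0^2)|t-t'|^2$ and one takes $K_1:=1-K_0^2\in(0,1)$. Estimate \eqref{est4} is just \eqref{eq_Kiess} with $\nu=2$ evaluated at $x=D_i$. For \eqref{est2} I would first note that $p\mapsto p/\sqrt{1+|p|^2}$ is $1$-Lipschitz on $\R^3$ (its Jacobian $\tfrac{1}{\sqrt{1+|p|^2}}\bigl(I-\tfrac{pp^\top}{1+|p|^2}\bigr)$ has eigenvalues $(1+|p|^2)^{-1/2}$ and $(1+|p|^2)^{-3/2}$, all $\le1$), so $|v_1(s)-v_2(s)|\le|\tilde p_1(s)-\tilde p_2(s)|$; since $\tilde p_1-\tilde p_2$ vanishes on $(-\infty,A]$ and is bounded by $\|\bar p_1-\bar p_2\|_\infty$ on $[A,A+\delta]$, integrating over $[t',t]\cap[A,A+\delta]$ gives $\bigl|(q_1(t)-q_1(t'))-(q_2(t)-q_2(t'))\bigr|\le\delta\|\bar p_1-\bar p_2\|_\infty$, hence by the reverse triangle inequality the same bound for the difference of the Euclidean norms. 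Writing $D_1-D_2=(D_1^2-D_2^2)/(D_1+D_2)$, factoring $D_1^2-D_2^2$ as a product of the difference and the sum $|q_1(t)-q_1(t')|+|q_2(t)-q_2(t')|\le 2K_0|t-t'|$, and bounding the denominator below by $2\sqrt{K_1}|t-t'|$ (the case $t=t'$ being trivial), yields \eqref{est2} with $K_3:=K_0/\sqrt{K_1}$.

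For \eqref{est3} I would introduce $h(z):=J_2(z)/z^2$, so that $G(t-t',q_i(t)-q_i(t'))=-h(D_i)$ by \eqref{def_G}, and note that $h$ extends to a smooth bounded function on $[0,\infty)$ with $h'(z)=-J_3(z)/z^2$ by \eqref{eq_someRel}. Using \eqref{eq_Kiess} with $\nu=3$ and the trivial inequality $z\le(1+z^2)^{1/2}$, one gets $|h'(z)|\le C_3 z/(1+z^2)^{7/4}\le C_3/(1+z^2)^{5/4}$. By the mean value theorem, $|h(D_1)-h(D_2)|\le|D_1-D_2|\sup|h'|$ over the interval between $D_1$ and $D_2$; by \eqref{est1} every $z$ in that interval is $\ge\sqrt{K_1}|t-t'|$, and since $(1+z^2)^{-5/4}$ is decreasing and $1+K_1|t-t'|^2\ge K_1(1+|t-t'|^2)$ (valid because $K_1\le1$), this supremum is at most $C_3K_1^{-5/4}(1+|t-t'|^2)^{-5/4}$. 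Combining with \eqref{est2} gives \eqref{est3} with $K_4:=K_3C_3K_1^{-5/4}$, and by construction none of $K_1,K_2,K_3,K_4$ depends on $\delta$.

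None of these steps is deep. The only places requiring a little care are the passage from $z(1+z^2)^{-7/4}$ to $(1+z^2)^{-5/4}$ and the replacement of $(1+K_1|t-t'|^2)$ by $(1+|t-t'|^2)$ in \eqref{est3} while keeping the resulting constant free of $\delta$, together with the bookkeeping of the degenerate case $t=t'$ in \eqref{est2} and the verification that $h$ and $h'$ remain bounded at $z=0$ so that the mean-value argument is global rather than local.
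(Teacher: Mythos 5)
Your proof is correct and follows essentially the same route as the paper: the uniform momentum bound $\|\tilde p_i\|_\infty\le\|p\|_\infty+M$ giving a velocity bound $K_0<1$ for \eqref{est1}, the inequality \eqref{eq_Kiess} for \eqref{est4}, the factorization $D_1-D_2=(D_1^2-D_2^2)/(D_1+D_2)$ together with the Lipschitz property of $p\mapsto p/\sqrt{1+|p|^2}$ and the vanishing of $\tilde p_1-\tilde p_2$ on $(-\infty,A]$ for \eqref{est2}, and a mean-value bound on $J_2(z)/z^2$ for \eqref{est3}. The only cosmetic difference is that you re-derive inline what the paper isolates as Lemma \ref{lem_besseldifference}, and you track slightly sharper constants (e.g.\ $K_1=1-K_0^2$ versus the paper's $1-K_0$).
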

\begin{proof}
To show \eqref{est1}, note that the function $x\mapsto \frac{x}{\sqrt{1+x^2}}$ is monotone for $x \geq 0$. We have
\begin{align}\label{est_p}
|q_i(t)-q_i(t')|\leq \int_{t'}^t \frac{|\tilde p_i|}{\sqrt{1+|\tilde p_i|^2}}~ds \leq 
|t-t'| \frac{\|p\|_\infty+M}{\sqrt{1+(\|p\|_\infty+M)^2}}
\end{align}
since $\|\tilde p_i\|\leq \|p\|_\infty + M$.
So the constant $K_1$ can be taken as
$$
K_1 = 1- \frac{\|p\|_\infty+M}{\sqrt{1+(\|p\|_\infty+M)^2}}
.
$$
\eqref{est4} is a direct consequence of \eqref{est1} and \eqref{eq_Kiess}.
We consider \eqref{est2}. Note first the following inequality for the $i$-th component of $x,y\in \R^3$
\begin{align}\label{basic_est}
\left|\frac{x^i}{\sqrt{1+x^2}} - \frac{y^i}{\sqrt{1+y^2}}\right|\leq K|x - y|\qquad i=1,2,3
\end{align}
holding with a universal constant $K>0$. This simply follows from the fact that the function $f_j(x)= x^j/\sqrt{1+x^2}$ has a globally bounded gradient. Next we observe that
\begin{align}
D_1 - D_2 &=\frac{(|q_2(t)-q_2(t')|- |q_1(t)-q_1(t')|)\left(|q_2(t)-q_2(t')|+|q_1(t)-q_1(t')|\right)}{D_1 + D_2}
\end{align}
and thus by using \eqref{est1} to estimate the denominator and the reverse triangle inequality in the numerator as well as \eqref{est_p},
\begin{align*}
|D_1(t,t')-D_2(t,t')|\leq C|q_2(t)-q_2(t')-(q_1(t)-q_1(t'))|\leq C\int_{t'}^t |v_2-v_1|~ds.
\end{align*}
Inserting \eqref{eq_vel_p} and using \eqref{basic_est} we arrive at
\begin{align*}
|D_1(t, t')-D_2(t, t')|\leq C \int_{t'}^t | \tilde p_1 - \tilde p_2 |~ds \leq C\delta \|\bar p_1 - \bar p_2\|_\infty
\end{align*}
by observing that $\tilde p_1-\tilde p_2 = 0$ for $s\leq A$. This leads finally to \eqref{est2}. To prove \eqref{est3}, we apply Lemma \ref{lem_besseldifference} to obtain
\begin{align*}
\left|\frac{J_2( D_1)}{D_1^2} - \frac{J_2(D_2)}{D_2^2}\right| 
&\leq C\frac{K_3\delta\|\bar p_2-\bar p_1\|_{\infty}}{(1+K_1|t-t'|^{2})^{5/4}}=K_4\delta\frac{\|\bar p_2-\bar p_1\|_{\infty}}{(1+|t-t'|^{2})^{5/4}}.
\end{align*}
\end{proof}

\begin{proof}[Proof of Theorem \ref{thm_exist}]
We want to prove the existence of a $p$ solving \eqref{eq_int1} that is defined on $\R$. Since $F_{ext} = 0$ on $(-\infty, A_0]$, a direct calculation shows that $p(t) = 0$
defines a solution on $(-\infty, A_0]$. This solution is can be extended past $t= A_0$ by Lemma \ref{lem_cont}. Let $\tilde{p}$ be the unique, maximally extended solution of \eqref{eq_int1}, defined on $(-\infty, A_1)$.

We claim that $A_1 = \infty$. To prove this, we assume that $A_1 <\infty$ and obtain a contradiction. From \eqref{eq_self} and \eqref{eq_Kiess} the electric self-field admits the estimate
\begin{align}\label{eq_estimate1}
|E[\tilde p](t)|\leq  \int_{-\infty}^t \frac{2|t-t'|}{\left[1+D^2(t, t')\right]^{5/4}}~dt'.
\end{align}
Since $v(t') = 0$ for $t' < A_0$, we have for $t' \leq A_0 \leq t$ the estimate
$$
|q(t)-q(t')| \leq \int_{A_0}^t |v(s)|~ds \leq (t-A_0)  
$$
holds and hence $D^2(t,t')\geq |t-t'|^2\left(1-\left|\frac{t-A_0}{t-t'}\right|^2\right)$ for $t' < A_0 < t$. Hence the integral in \eqref{eq_estimate1} is finite, since it can be split up into an integral from $-\infty$ to $A_0-1$ and a part from $A_0-1$ up to $t$ and for $-\infty < t' < A_0-1$, the integrand can be dominated by 
$$
\frac{|t-t'|}{\left[1+|t-t'|^2\left(1-\left|\frac{t-A_0}{t-t'}\right|^2\right)\right]^{5/4}}.
$$ 
Straightforward estimations show that
\begin{align}\label{eq_estimate2}
|E[\tilde p](t)|\leq  \int_{-\infty}^{A_0-1} \frac{2|t-t'|}{\left[1+|t-t'|^2\left(1-\left|\frac{t-A_0}{t-t'}\right|^2\right)\right]^{5/4}}
~dt' +  |t-A_0+1|^2
\end{align}
which has an upper bound independent of $t\in (A_0, A_1)$. Hence $\sup_{t\in (A_0, A_1)}|E[\tilde{p}]| < \infty$. 
A similar argument holds for $B[\tilde p]$, and we see that 
$$
\sup_{t\in (-\infty, A_1)} |E[\tilde p](t)|+|B[\tilde p](t)| < \infty.
$$
Using Lemma \ref{lem_cont}, we conclude that $\tilde{p}$ admits a continuation beyond $A_1$, a contradiction to our assumption $A_1 < \infty$. The solution therefore exists globally and it is straightforward to verify that a solution $\tilde p$ defines a $C^1$-velocity $v$ that solves \eqref{eq_mot}.
\end{proof}

\begin{remark}
Our Theorem \ref{thm_exist} shows the global existence of solutions for continuous external forces. It is natural to ask what happens if  $F_{ext}$ is zero for large times. One would hope that the particle attains an asymptotically constant velocity. However, due to the history-dependent self-fields, this appears to be a non-trivial question. We leave this issue open.
\end{remark}

\textbf{ACKNOWLEDGEMENTS:} The authors wish to cordially thank M. Kiessling and A. S. Tahvildar-Zadeh for stimulating discussions, providing a wealth of information on the electrodynamic self-force problem and in particular for inspiring our research by sharing their work \cite{KiesslingTahvildarZadeh18} in a preliminary form with us. Vu Hoang wishes to thank V. Perlick for very helpful discussions. The authors are also grateful to D. Hundertmark for remarks on the scattering problem and M. Hott for a general discussion on the subject. 

The work of Vu Hoang, Angel Harb, Aaron DeLeon and Alan Baza on this project was funded (full or in-part) by the University of Texas at San Antonio, Office of the Vice President for Research, Economic Development, and Knowledge Enterprise.  
Vu Hoang wishes to thank the National Science Foundation for support under grants DMS-1614797 and NSF DMS-1810687. 

%
%

\section{Appendix}


\subsection*{Appendix A: Convergence of the field tensor}

First we show the following useful property of a subluminal world-line:
\begin{proposition}\label{prop_lower_bound_D}
Let $\fq \in \SWL$. Then for any $(x^\al)\notin \fq$, 
\begin{align*}
    D(x^\al - q^\al(\tau'))^2 \geq (1-K^2)|\tau_{ret}(x^\al) - \tau'|^2
\end{align*}
for all $\tau' \leq \tau_{ret}(x^\al)$, where $K < 1$ is such that $|v(t)|\leq K$ for all $t \leq t_{ret}$ and $q^0(\tau_{ret}) = t_{ret}$.
\end{proposition}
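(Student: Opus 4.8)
The plan is to work in the Lorentz frame referenced in the statement, parametrize the world-line by proper time, and compare the Minkowski distance $D(x^\al - q^\al(\tau'))$ with the proper-time separation $|\tau_{ret} - \tau'|$ by splitting the displacement vector $x^\al - q^\al(\tau')$ into its time and space components and controlling the spatial part by the velocity bound. Write $\tau_{ret} = \tau_{ret}(x^\al)$ and $q^\al(\tau_{ret}) = (t_{ret}, q(t_{ret}))$. Since $x^\al - q^\al(\tau_{ret})$ is light-like with positive time component, we have $x^0 - q^0(\tau_{ret}) = |x - q(t_{ret})| > 0$; this is the key geometric input and it reduces the claim to a statement purely about the world-line between $\tau'$ and $\tau_{ret}$.

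First I would reduce to a one-variable estimate. Set $\Delta t := q^0(\tau_{ret}) - q^0(\tau')$ and $\Delta q := q(t_{ret}) - q(t')$, where $t' = q^0(\tau')$. Then
\begin{align*}
x^0 - q^0(\tau') &= (x^0 - q^0(\tau_{ret})) + \Delta t = |x - q(t_{ret})| + \Delta t,\\
x - q(\tau') &= (x - q(t_{ret})) + \Delta q,
\end{align*}
so that, writing $\rho := |x - q(t_{ret})|$ and using $|x - q(t')| \le \rho + |\Delta q|$,
\begin{align*}
D^2 = (x^0 - q^0(\tau'))^2 - |x - q(\tau')|^2 \ge (\rho + \Delta t)^2 - (\rho + |\Delta q|)^2 = (\Delta t - |\Delta q|)(\Delta t + |\Delta q| + 2\rho).
\end{align*}
Since the world-line is future-pointing and time-like, $\Delta t > 0$ for $\tau' < \tau_{ret}$, and the velocity bound gives $|\Delta q| = |q(t_{ret}) - q(t')| \le \int_{t'}^{t_{ret}} |v(s)|\,ds \le K\,\Delta t$, hence $\Delta t - |\Delta q| \ge (1-K)\Delta t \ge 0$ and $\Delta t + |\Delta q| + 2\rho \ge \Delta t + |\Delta q| \ge (1+K^{-1}|\Delta q|)\Delta t$; more simply $\Delta t + |\Delta q| + 2\rho \ge \Delta t$. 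Therefore $D^2 \ge (1-K)(\Delta t)^2 \cdot$ (something $\ge 1$ when $\rho=0$); to land exactly on the factor $1-K^2$ I would instead keep both factors: $D^2 \ge (\Delta t - |\Delta q|)(\Delta t + |\Delta q|) = (\Delta t)^2 - |\Delta q|^2 \ge (1-K^2)(\Delta t)^2$, discarding the nonnegative $2\rho$ term.

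Finally I would convert $\Delta t$ back to proper time. Along the world-line, $dq^0/d\tau = \gamma(\tau) \ge 1$, so $\Delta t = q^0(\tau_{ret}) - q^0(\tau') = \int_{\tau'}^{\tau_{ret}} \gamma(\lambda)\,d\lambda \ge \tau_{ret} - \tau' = |\tau_{ret} - \tau'|$. Combining, $D^2 \ge (1-K^2)(\Delta t)^2 \ge (1-K^2)|\tau_{ret} - \tau'|^2$, which is the claim. The only mild subtlety — and the part to state carefully rather than a genuine obstacle — is the use of $x^0 - q^0(\tau_{ret}) = |x - q(t_{ret})|$, which is exactly the defining property of the retarded point recalled in Remark (b) of the excerpt, together with checking $\Delta t \ge 0$ on the relevant range, which follows since $q^0$ is strictly increasing in $\tau$ for a future-pointing time-like curve. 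No heavy machinery is needed; the estimate is essentially the reverse triangle inequality applied on the light cone.
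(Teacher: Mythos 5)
Your proof is correct and follows essentially the same route as the paper's: split the displacement through the retarded point, use the light-like condition $x^0-q^0(\tau_{ret})=|x-q(t_{ret})|$, bound the spatial increment by $K\,\Delta t$ via the velocity bound, and finish with $\gamma\geq 1$ to pass from coordinate-time to proper-time separation. The only (cosmetic) difference is that you factor $(\rho+\Delta t)^2-(\rho+|\Delta q|)^2$ as a difference of squares and drop the nonnegative $2\rho$ term, whereas the paper expands the squares and cancels; the inputs and the resulting constant $1-K^2$ are identical.
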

\begin{proof}
Write $x^\al = (t, x), q^0(\tau_{ret}) = t_{ret}$ and $\mathbf{u} = (u^1, u^2, u^3)$. Then 
\begin{align}
    &D(x^\al - q^\al(\tau'))^2 = |x^0 - q^0(\tau')|^2 - |x - q(\tau)|^2\nonumber\\
    &= \left|x^0 - q^0(\tau_{ret}(x^\al)) + q^0(\tau_{ret}(x^\al)) - q^0(\tau')\right|^2 - \left|x - q(\tau_{ret}(x^\al)) +
    \int_{\tau'}^{\tau_{ret}(x^\al)} \mathbf{u}(\sigma)~d\sigma\right|^2\label{aux56}
\end{align}
Next we note that $\mathbf{u} = \gamma \mathbf{v}$ and that coordinate time and proper time differential are related by $dt = \gamma d\tau$. This allows us to estimate 
\begin{align*}
    \left|  \int_{\tau'}^{\tau_{ret}} \mathbf{u}(\sigma)~d\sigma \right| \leq \int_{\tau'}^{\tau_{ret}} |\mathbf{u}(\sigma)|~d\sigma \leq K  \int_{\tau'}^{\tau_{ret}} \gamma~d\sigma
     = K |q^0(\tau_{ret})-q^0(\tau')|.
\end{align*}
Returning to estimating $D(x^\al - q^\al(\tau'))^2$ and expanding \eqref{aux56} we get
\begin{align*}
    &D(x^\al - q^\al(\tau'))^2 \\
    &\geq |t-t_{ret}|^2 + 2|t-t_{ret}||q^0(\tau_{ret}(x^\al)) - q^0(\tau')| + |q^0(\tau_{ret}(x^\al)) - q^0(\tau')|^2 + \\
    & - |x - q(\tau_{ret}(x^\al))|^2 - 2 K |x - q(\tau_{ret}(x^\al))| |q^0(\tau_{ret}(x^\al)) - q^0(\tau')| - K^2 |q^0(\tau_{ret}(x^\al)) - q^0(\tau')|^2
\end{align*}
Finally we use $|t-t_{ret}| = |x - q(\tau_{ret}(x^\al))|$, which yields
\begin{align*}
    &D(x^\al - q^\al(\tau'))^2 \geq (1-K^2) |q^0(\tau_{ret}(x^\al)) - q^0(\tau')|^2\geq (1-K^2)|\tau_{ret}-\tau'|^2
\end{align*}
where we have used $\gamma \geq 1$. 
\end{proof}

\begin{proof}[Proof of Proposition \ref{prop_conv}.] 
To check convergence of \eqref{eq_pot}, we first note that by \eqref{eq_Kiess}
$$
\frac{|J(x)|}{x} \leq \frac{\text{const}}{(1+x^2)^{\frac{3}{4}}}.
$$
Let $K  < 1$ be such that $|v(\tau')|\leq K$ for all $\tau' \leq \tau_{ret}(x^\al)$. Using Proposition \ref{prop_lower_bound_D}, we can estimate the integrand of \eqref{eq_pot} by
\begin{align}
    \left|\frac{J_1(\ka D)}{D}u^a\right| \leq \frac{\text{const}(K, \ka)}{(1+(1-K^2) |\tau_{ret}(x^\al) -\tau'|^2)^{3/4}}
\end{align}
which implies absolute convergence of \eqref{eq_pot}. In order to justify the formal differentiation that leads to \eqref{eq_field_tensor}, it suffices by a standard result on parameter-dependent integrals to show that for every event $(x^\alpha)$ there exists a neighborhood of $(x^\alpha)$ and a function $M\geq 0, M\in L^1(-\infty, 0)$ so that
$$
\left|\frac{J_2(\ka D) R_{[c}u_{d]}}{\gamma D^2} \right|\leq M(t'-t)
$$
for all $y^\alpha$ in a neighborhood of $x^\alpha$. 
Each component of $R^a$ can be bounded by $|t-t'|$ and by using \eqref{eq_Kiess} again, we can take 
$$M(t) = \frac{\text{const}(K) |t|}{(1+|t|^2)^{5/4}}.$$ This yields 
\begin{align}\label{eq_devA}
\del_c A_a(x^\alpha) = \left.\frac{e \ka^2 R_c u_a}{2 S}\right|_{ret} + e\ka^2\int_{-\infty}^{\tau_{ret}(x^\alpha)} \frac{J_2(\ka D)}{D^2} R_c u_a(\tau)~d\tau   
\end{align}
and leads also directly to \eqref{eq_field_tensor}. Note that we have used $\left.\frac{J_1(x)}{x}\right|_{x=0} = \frac{1}{2}$.

Moreover, we have (recall \eqref{eq_someRel}) 
\begin{align}
\del_c S = u_c|_{ret} + (u_\alpha R^\alpha)^{-1} (1+a_\alpha R^\alpha) R_c|_{ret}
\end{align}
and the representation \eqref{eq_dev_field_tensor} follows formally from differentiating \eqref{eq_field_tensor}. By using similar estimates as before, the differentiation leading to \eqref{eq_dev_field_tensor} can be justified. Also note these estimates show that all the integrals in Proposition \ref{prop_conv} can be estimated by a constant $C(t_{ret})$ that does not depend on the specific $x^\al$, but only on $\tau_{ret}(x^\al)$.

To check that  \eqref{eq_field_tensor} defines a distributional solution of \eqref{eq_field} with distributional right-hand side \eqref{eq_current} we first note that $A^a=U^a-W^a$, where \eqref{eq_diff_two_wav} holds for $U^a, W^a$. A direct calculation shows that $\eqref{eq_field}$ holds in the sense of distributions, provided both $U^a$ and $W^a$ satisfy the Lorentz gauge condition 
\begin{align}\label{Lorentz_gauge}
\del_a U^a = \del_a W^a = 0
\end{align}
in the sense of distributions on $\R^4$. It is easy to check that the classical Li\'{e}nard-Wiechert potential $U^a$ satisfies \eqref{Lorentz_gauge}, so that in order to check \eqref{Lorentz_gauge} for $W^a$, we only need to check that the Lorentz gauge holds for $A^a$. This is done in the next Proposition.
\end{proof}

\begin{proposition}\label{prop_A_div}
For the vector potential defined by \eqref{eq_pot}, we have $\del_a A^a = 0$ in the distributional sense on $\R^4$.
\end{proposition}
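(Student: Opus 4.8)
The plan is to verify $\del_a A^a = 0$ first pointwise off the world-line $\fq$ and then to check that no distributional term is concentrated on $\fq$.

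For the pointwise part, I would use that $A^a$ is $C^2$ away from $\fq$ (Proposition~\ref{prop_conv}) and contract the formula \eqref{eq_devA} for $\del_c A_a$ with $g^{ca}$. Writing $R^a u_a = S$ and using $J_1(z)/z\to\tfrac12$ as $z\to0^+$, the endpoint term coming from the $x$-dependent upper limit $\tau_{ret}$ contributes exactly $\tfrac{e\ka^2}{2}$, and it remains to show $e\ka^2\int_{-\infty}^{\tau_{ret}(x^\al)}\frac{J_2(\ka D)S}{D^2}\,d\tau = -\tfrac{e\ka^2}{2}$. The key step is to recognize the integrand as a total $\tau$-derivative: differentiating $D^2 = -R_aR^a$ and using $\dot R^a = -u^a$ gives $S = D\dot D$, and together with the Bessel relation $\frac{d}{dz}\bigl(J_1(z)/z\bigr) = -J_2(z)/z$ from \eqref{eq_someRel} this yields $\frac{J_2(\ka D)S}{D^2} = -\frac{d}{d\tau}\bigl(\frac{J_1(\ka D)}{\ka D}\bigr)$. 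Integrating and using $J_1(z)/z\to\tfrac12$ at $\tau = \tau_{ret}$, together with $D\to\infty$ as $\tau\to-\infty$ (Proposition~\ref{prop_lower_bound_D}), which kills the lower endpoint, produces $-\tfrac12$ and hence $\del_a A^a = 0$ on $\R^4\setminus\fq$; integrability of the integrand is exactly the majorant estimate already used in the proof of Proposition~\ref{prop_conv}.

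To pass to the distribution, I would first note that by Proposition~\ref{prop_lower_bound_D}, exactly as in Proposition~\ref{prop_conv}, the integrand of \eqref{eq_pot} is dominated by $C(K,\ka)\bigl(1+(1-K^2)|\tau_{ret}-\bar\tau|^2\bigr)^{-3/4}$, so $A^a$ is bounded on compact subsets of $\R^4$ (uniformly up to $\fq$); in particular $A^a\in L^1_{\loc}(\R^4)$ and it is bounded on $\del W_\eps$ uniformly in $\eps$ with $\tau$ in a compact interval. Then for $\phi\in C_0^\infty(\R^4)$, since $A^a\del_a\phi\in L^1$ and $|W_\eps|\to0$, $\int_{\R^4}A^a\del_a\phi\,d^4x = \lim_{\eps\to0^+}\int_{\R^4\setminus W_\eps}A^a\del_a\phi\,d^4x$; integration by parts on $\R^4\setminus W_\eps$, using $\del_a A^a = 0$ there and the surface element $d\Sigma_a = \om_a\,\eps^2\sin\theta\,d\tau\,d\theta\,d\varphi$ of Lemma~\ref{lem_integral_over_world_tube}, gives $\del_a A^a[\phi] = \lim_{\eps\to0^+}\int_{\del W_\eps}A^a\phi\,d\Sigma_a = 0$, since the boundary integrand is $\calO(\eps^2)$.

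I expect the only delicate point to be the pointwise identity, specifically seeing that the $\tfrac{e\ka^2}{2}$ produced by differentiating the variable limit $\tau_{ret}(x^\al)$ is cancelled precisely by the $\tau = \tau_{ret}$ boundary value of the telescoped Bessel integral — both governed by the single limit $J_1(z)/z\to\tfrac12$; the distributional bookkeeping is routine given the world-tube machinery already developed. As an alternative to the world-tube argument, one can invoke that an $L^\infty_{\loc}$ function that is $C^1$ with locally integrable gradient off a closed set of vanishing $\mathcal{H}^{3}$-measure (here $\fq$) lies in $W^{1,1}_{\loc}(\R^4)$, so that its distributional and classical divergences coincide.
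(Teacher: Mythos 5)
Your proof is correct and follows essentially the same route as the paper: contract \eqref{eq_devA}, recognize $\frac{J_2(\ka D)S}{D^2}$ as $-\del_\tau\bigl(\frac{J_1(\ka D)}{\ka D}\bigr)$ so the integral telescopes to $-\tfrac12$ and cancels the retarded boundary term, then pass to the distributional statement by excising a world-tube and noting the surface integral is $\calO(\eps^2)$ because $A^a$ is bounded near the world-line. The concluding $W^{1,1}_{\loc}$ remark is a valid shortcut the paper does not take, but the core argument is the same.
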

\begin{proof}
From \eqref{eq_devA}, we see that for all $x^\alpha\in \R^4\setminus \fq$ 
\begin{align*}
\del_a A^a &= e\ka^2\left(\half + \int_{-\infty}^{\tau_{ret}(x^\alpha)} \frac{J_2(\ka D)}{D^2} R_b u^b(\tau)~d\tau \right)= e\ka^2\left(\half - \int_{-\infty}^{\tau_{ret}(x^\alpha)} \frac{\del}{\del\tau} \frac{J_1(\ka D)}{\ka D}~d\tau \right)= 0
\end{align*}
holds. So for the distributional derivative
\begin{align*}
\del_a A^a[\phi]=-A^a[\del_a \phi] = \int_{\R^4} \del_a A^a \phi~d^4 x + \lim_{\eps\to 0^+} \int_{W_\eps}A^a \phi~d\Sigma_{d} = 0
\end{align*}
where $W_\eps$ is a suitable world-tube surrounding the world-line (see \eqref{eq_World_tube}). Here, the limit of the integral over the world-tube's surface vanishes, since $A^a$ is uniformly bounded on the support of $\phi$.
\end{proof}


\subsection*{Appendix B: Energy-momentum tensor}

\begin{proof}[Proof of Proposition \ref{prop:EM_tens_cons}.] First we multiply the BLTP equation 
$(I-\ka^{-2}\wa) \del_\nu \Fumn = 4 \pi j^\mu $ by $F_{\al \mu}$ to get
\begin{align}
F_{\al \mu} \del_{\nu} \Fumn - \ka^{-2}  F_{\al \mu} \wa \del_\nu \Fumn = 4 \pi F_{\al \mu} j^\mu. \label{eq1} 
\end{align}
The idea is now to write the left hand side in the form of a total divergence $\del_\nu(-4\pi T_\al^\nu)$.
The following identity can be easily checked using the second equation of \eqref{eq_field}:
\begin{align}
F_{\al \mu} \del_\nu F^{\mu \nu} = \del_\nu [ F_{\al \mu} \Fumn + \frac{1}{4} \delta_{\al}^\nu \Fdrt \Furt]\label{ident1}
\end{align}
Using \eqref{ident1}, we can write the first term on the lhs of \eqref{eq1} as a divergence. The remainder of the calculation deals with the second term on the lhs of \eqref{eq1}.\\
\emph{Step 1:} We write 
\begin{align}
F_{\al \mu} \wa \del_\nu \Fumn &= \del_\nu[ F_{\al \mu} \wa F^{\mu \nu}] -(\del_\nu F_{\al \mu})\wa F^{\mu \nu}\nonumber\\
&= \del_\nu[ F_{\al \mu} \wa F^{\mu \nu}] + (\del_\al F_{\mu \nu}) \wa F^{\mu \nu} + (\del_\mu F_{\nu \al}) \wa F^{\mu \nu}\label{T1}
\end{align}
where we have used the second line of \eqref{eq_field} again.\\
\emph{Step 2:} We compute $(\del_\mu F_{\nu \al}) \wa F^{\mu \nu}$:
\begin{align}
(\del_\mu F_{\nu \al}) \wa F^{\mu \nu} &=(\del_{\nu} F_{\mu \al}) \wa F^{\nu \mu}= \del_\nu[F_{\mu\al}\wa F^{\nu\mu}]-F_{\mu\al}\del_\nu\wa F^{\nu\mu}\nonumber \\
&= \del_\nu\left[ F_{\mu \al}\wa F^{\nu \mu}\right] - F_{\al \mu} \del_\nu \wa F^{\mu \nu}\label{T2}
\end{align}
where we first renamed the contracting indices $\nu \leftrightarrow \mu$, used the reversed product rule and then used antisymmetry of $F$ to switch $\mu\al\to\al\mu$ and $\nu\mu\to\mu\nu$ in the second term. The second term of \eqref{T2} is
\begin{align}
F_{\al \mu} \del_\nu \wa F^{\mu \nu} &= \del_\nu[F_{\al \mu} \wa F^{\mu \nu}] - (\del_\nu F_{\al\mu}) \wa F^{\mu \nu}\nonumber\\
&=\del_\nu[F_{\al \mu} \wa F^{\mu \nu}] + (\del_\al F_{\mu\nu}) \wa F^{\mu \nu} + (\del_\mu F_{\nu\al}) \wa F^{\mu \nu}\label{T3}
\end{align}
where we have used the second line of \eqref{eq_field}. 
Inserting \eqref{T3} into \eqref{T2} gives:
\begin{align}\label{T4}
(\del_\mu F_{\nu \al} )\wa F^{\mu \nu} =  - \half (\del_\al F_{\mu \nu}) \wa F^{\mu \nu}.  
\end{align}
\emph{Step 3:} Up to now, we have obtained
\begin{align}\label{T5}
F_{\al \mu} \wa \del_\nu F^{\mu \nu} &=\del_\nu[F_{\al \mu} \wa F^{\mu \nu}] +\half (\del_\al F_{\mu \nu}) \wa F^{\mu \nu}. 
\end{align}
We now compute the remaining term $(\del_\al F_{\mu \nu}) \wa F^{\mu \nu}$:
\begin{align*}
(\del_\al F_{\mu \nu}) \wa F^{\mu \nu} = \del_{\al}[\Fdmn \wa \Fumn] - F_{\mu \nu} \wa \del_\al F^{\mu \nu}.
\end{align*}
Using again the second line of \eqref{eq_field} we observe that
\begin{align*}
F_{\mu \nu} \wa \del_\al F^{\mu \nu} 
&= F^{\mu \nu} \wa \del_\al F_{\mu \nu} = - F^{\mu \nu} \wa \del_\mu F_{\nu \al} - F^{\mu \nu} \wa \del_\nu F_{\al \mu}= - F^{\mu \nu} \wa \del_\mu F_{\nu \al} - F^{\nu \mu}\wa \del_\mu F_{\al \nu} \\
&= - 2 F^{\mu \nu} \wa \del_\mu F_{\nu \al}.
\end{align*}
Hence we can write 
$$
(\del_\al F_{\mu \nu})\wa F^{\mu \nu} = \del_\al[F_{\mu \nu} \wa F^{\mu \nu}] + 2 F^{\mu \nu} \wa \del_\mu F_{\nu \al}.
$$
Inserting this into \eqref{T5} we obtain
\begin{align}\label{T6}
F_{\al \mu} \wa \del_\nu F^{\mu \nu} &=
\del_\nu \left[ F_{\al \mu}\wa F^{\mu \nu} +  \half \delta_{\al}^\nu F_{\mu \rho} \wa F^{\mu \rho}\right] 
+ F^{\mu \nu} \wa \del_\mu F_{\nu \al}.
\end{align}
\emph{Step 4:} Next we deal with $F^{\mu \nu} \wa \del_\mu F_{\nu \al}$. First note
\begin{align}\label{T7}
F^{\mu\nu} \wa \del_\mu F_{\nu \al} &= F^{\nu \mu} \wa \del_\nu F_{\mu \al} 
= \del_\nu [F^{\nu \mu} \wa F_{\mu \al}] - (\del_\nu F^{\nu\mu}) \wa F_{\mu \al}
\end{align}
and observe that
\begin{align}
(\del_\nu F^{\nu \mu})\wa F_{\mu \alpha} &= (\del_\nu F^{\nu \mu}) \del^\beta \del_\beta F_{\mu \al} = -(\del_\nu F^{\nu \mu}) \del^\beta \del_\mu F_{\al \beta} - (\del_\nu F^{\nu \mu}) \del^\beta \del_\al F_{\beta \mu}\nonumber\\
&= -\del_\mu[(\del_\nu F^{\nu \mu})\del^\beta F_{\al \beta}] - (\del_\nu F^{\nu \mu}) \del^\beta \del_\al F_{\beta \mu}\label{T8}
\end{align}
where we have used that $\del_\mu \del_\nu F^{\nu \mu} = 0$ by antisymmetry of the field tensor. For the last term in the previous line we have
\begin{align*}
(\del_\nu F^{\nu \mu}) \del^\beta \del_\al F_{\beta \mu} =  \del_\al [(\del_\nu F^{\nu \mu}) \del^\beta F_{\beta \mu}] - (\del_\al \del_\nu F^{\nu \mu}) \del^\beta F_{\beta \mu}
\end{align*}
and $(\del_\al \del_\nu F^{\nu \mu}) \del^\beta F_{\beta \mu}$ can be written as $(\del_\al \del^{\beta} F_{\beta\mu}) \del_\nu F^{\nu \mu}$ and we have
\begin{align*}
  (\del_\nu F^{\nu \mu}) \del^\beta \del_\al F_{\beta \mu} &=  \half \del_\al[(\del_\nu F^{\nu\mu})\del^\beta F_{\beta\mu}].
\end{align*}
Hence \eqref{T8} is given by
\begin{align*}
(\del_\nu F^{\nu \mu}) \wa F_{\mu \al} = - \del_\nu [(\del_\mu F^{\mu \nu}) \del^\beta F_{\alpha \beta}] - \half \del_\al [(\del_\nu F^{\nu \mu}) \del^\beta F_{\beta \mu}] 
\end{align*}
which results \eqref{T7} in being
\begin{align}
    F^{\mu\nu} \wa \del_\mu F_{\nu \al} &= \del_\nu [F^{\nu \mu} \wa F_{\mu \al}] + \del_\nu [(\del_\mu F^{\mu \nu}) \del^\beta F_{\alpha \beta}] + \half \del_\nu\delta^\nu_\al [(\del_\rho F^{\rho \mu}) \del^\beta F_{\beta \mu}].
\end{align}
\emph{Step 5:} Finally, inserting the last result into \eqref{T6} we can write $F_{\al \mu} \wa \del_\nu F^{\mu \nu}$ as a total divergence:
\begin{align*}
F_{\al \mu} \wa \del_\nu F^{\mu \nu}&=
\del_\nu\left[ F_{\al \mu} \wa F^{\mu \nu} + \half \delta_{\al}^\nu F_{\mu \nu} \wa F^{\mu \nu}\right]\\
&\quad + \del_\nu\left[ F^{\nu\mu} \wa F_{\mu \al} + (\del_\mu F^{\mu\nu}) \del^{\beta} F_{\al \beta} + \half \delta_{\al}^\nu (\del_\rho F^{\rho \mu}) \del^\beta F_{\beta \mu}\right].
\end{align*}
Using this identity in \eqref{eq1}, we arrive at the statement of the proposition.
\end{proof}


\subsection*{Appendix C: An identity for Bessel functions}

In this appendix, we list and prove some properties of functions involving Bessel functions. 
First consider the following identity ($n=0, 1, 2, \ldots$):
\begin{align}
\frac{d}{dz}\frac{J_n(z)}{z^n} = - \frac{J_{n+1}}{z^n}.\label{besselderivative}
\end{align}
Recall that $J_n(z)$ is an entire function with the well-known power series representation 
$$
J_n(z) = \sum_{m =0 }^\infty \frac{(-1)^m}{m! (m+n)!} \left(\frac{z}{2}\right)^{2m + n}.
$$
We have
$$
\frac{J_n(z)}{z^n} = \sum_{m =0 }^\infty \frac{(-1)^m}{m! (m+n)!} \frac{z^{2m}}{2^{2m+n}}
$$
and by term-wise differentiation it follows that
\begin{align*}
\frac{d}{dz}\frac{J_n(z)}{z^n} &= \sum_{m=1}^\infty \frac{(-1)^m 2 m}{m! (m+n)!} \frac{z^{2m-1}}{2^{2m+n}}
= - \sum_{k=0}^\infty \frac{ (-1)^k (2k+2) }{k! (k+1) (k+n+1)!} \frac{z^{2 k +1 }}{ 2^{2k+1+n} 2} = -\frac{ J_{n+1}}{z^n}.
\end{align*}
The following Lemma estimates expressions of the form $J_2(x)/x^2$:
\begin{lemma}\label{lem_besseldifference}
For $x, y > 0$,
\begin{align}
    \left|\frac{J_2(x)}{x^2} - \frac{J_2(y)}{y^2}\right|\leq C\frac{|x-y|}{(1+\min\{x, y\}^2)^{5/4}}.
\end{align}
\end{lemma}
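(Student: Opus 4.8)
The statement to prove is Lemma \ref{lem_besseldifference}, the Lipschitz-type bound
$$
\left|\frac{J_2(x)}{x^2} - \frac{J_2(y)}{y^2}\right|\leq C\frac{|x-y|}{(1+\min\{x,y\}^2)^{5/4}}.
$$
The plan is to combine the mean value theorem with the exact derivative identity \eqref{besselderivative} and the decay bound \eqref{eq_Kiess}. Write $g(z) = J_2(z)/z^2$; by \eqref{besselderivative} with $n=2$ we have $g'(z) = -J_3(z)/z^2$. By the mean value theorem there is a $\zeta$ between $x$ and $y$ with
$$
\left|\frac{J_2(x)}{x^2}-\frac{J_2(y)}{y^2}\right| = |g'(\zeta)|\,|x-y| = \frac{|J_3(\zeta)|}{\zeta^2}\,|x-y|.
$$
Since $g$ is smooth at $0$ (its power series has only even powers, as recorded just before the lemma), this is valid even when $x$ or $y$ is small.

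Next I would estimate $|J_3(\zeta)|/\zeta^2$. Writing $|J_3(\zeta)|/\zeta^2 = \zeta\cdot|J_3(\zeta)|/\zeta^3$ and applying \eqref{eq_Kiess} with $\nu = 3$ gives
$$
\frac{|J_3(\zeta)|}{\zeta^2} \leq \frac{C_3\,\zeta}{(1+\zeta^2)^{7/4}}.
$$
It then suffices to show $\zeta(1+\zeta^2)^{-7/4} \leq C(1+\min\{x,y\}^2)^{-5/4}$ for $\zeta$ between $x$ and $y$. Set $m = \min\{x,y\}$, so $\zeta \geq m$. The function $\zeta \mapsto \zeta(1+\zeta^2)^{-7/4}$ is bounded on $[0,\infty)$ and, for large $\zeta$, behaves like $\zeta^{-5/2}$, hence for $\zeta \geq m$ it is bounded by a constant times $\min\{1, m^{-5/2}\}$, which in turn is comparable to $(1+m^2)^{-5/4}$. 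Concretely: if $m \leq 1$ the right-hand side $(1+m^2)^{-5/4}$ is bounded below by $2^{-5/4}$ while the left side is globally bounded, so the inequality holds with a universal constant; if $m \geq 1$, then $\zeta \geq m \geq 1$ forces $1+\zeta^2 \geq \zeta^2 \geq m^2$, so $\zeta(1+\zeta^2)^{-7/4} \leq \zeta\cdot \zeta^{-7/2} = \zeta^{-5/2} \leq m^{-5/2} \leq 2^{5/4}(1+m^2)^{-5/4}$. Combining the cases completes the estimate.

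There is essentially no hard obstacle here; the only point requiring a little care is the smoothness of $g(z)=J_2(z)/z^2$ at the origin, which legitimizes applying the mean value theorem on an interval containing $0$ — but this is immediate from the even power series expansion $J_2(z)/z^2 = \sum_{m\geq 0}\frac{(-1)^m}{m!(m+2)!}\frac{z^{2m}}{2^{2m+2}}$ already written out above the lemma, and the same series shows $g'(z) = -J_3(z)/z^2$ extends continuously to $z=0$. The rest is the elementary two-case bookkeeping on $\zeta(1+\zeta^2)^{-7/4}$ sketched above, where the constant $C$ can be taken to depend only on $C_3$ from \eqref{eq_Kiess}.
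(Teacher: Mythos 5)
Your proof is correct and follows essentially the same route as the paper's: apply the mean value theorem to $g(z)=J_2(z)/z^2$, use the identity $g'(z)=-J_3(z)/z^2$ from \eqref{besselderivative}, and bound $|g'(\zeta)|$ via \eqref{eq_Kiess}, with the monotonicity of $(1+z^2)^{-5/4}$ supplying the $\min\{x,y\}$ in the denominator. The only cosmetic difference is that you carry out the elementary bound $\zeta(1+\zeta^2)^{-7/4}\leq C(1+\min\{x,y\}^2)^{-5/4}$ by a two-case split, where the paper simply absorbs $|z|\leq(1+z^2)^{1/2}$ into the exponent; your remark about smoothness at $z=0$ is harmless but unnecessary since $x,y>0$ keeps the mean-value interval away from the origin.
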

\begin{proof}
We apply the mean value theorem to the function $g(z) = J_2(z)/z^2$. Observe that $g'(z) = -J_3(z)/z^2$ by \eqref{besselderivative}. By \eqref{eq_Kiess}, 
\begin{align}
    |g'(z)|\leq \frac{C_3}{(1+z^2)^{7/4}}|z|\leq C(1+z^2)^{-5/4}
\end{align}
and hence
\begin{align*}
\left|\frac{J_2(x)}{x^2} - \frac{J_2(y)}{y^2}\right| &\leq C\frac{|x-y|}{(1+\min\{x, y\}^{2})^{5/4}}.
\end{align*}
\end{proof}


\subsection*{Appendix D: Differentiation of parameter-dependent integrals with singularities}

In the following Lemma, we differentiate a parameter-dependent integral that is mildly singular along a world-line. Since the location of the singularity depends on the parameter $\sigma$, the result is not quite standard and we provide a more detailed proof. 
\begin{lemma}\label{lem_dev_param}
Suppose that $\Omega\subset \R^4$ is a bounded region with Lipschitz boundary and that $Q^\al: \R\times (-a, a)\to \R^3$ is such that $\tau\mapsto Q^\al(\tau, \sigma)$ is a timelike curve for each $\sigma\in (-a, a)$. Suppose that $H(x, \sigma) : D \to \R$ is a function defined on $D = \{(x^\al, \sigma) \in \R^4\times (-a, a): x^\al \neq Q^\al(\tau, \sigma),\tau\in(-\infty,\infty)\}$ and that the following holds:
\begin{enumerate}
    \item $H$ is smooth on $D$ and has a compact support in $x^\al$ for each $\sigma$. Moreover, $\del_\sigma H$ is continuous at every point $(x^\al,\sigma)\in D$.
    \item There exists a function $\Phi: \R\times (0, \infty) \to \R^+$, $\Phi=\Phi(\tau,r)$,  such that $r^2 \Phi\in L^1( \R\times (0, \infty))$, independent of $\sigma$, and such that
    \begin{align*}
        |H(\cdot, \sigma)\circ X^\al_\sigma| + |\del_\sigma H(\cdot, \sigma)\circ X^\al_\sigma|  \leq \Phi(\tau, r)
    \end{align*}
    where $X^\al_\sigma(\tau, r, \theta, \varphi)$ is the coordinate transformation into retarded light-cone coordinates attached to the world-line $Q^\al(\cdot, \sigma)$. 
    \item $H(\cdot, \sigma) \circ X^\al = H_0(\tau, \sigma) r^{-2} + H_1(\tau, r, \theta, \varphi, \sigma)$ with
    \begin{align*}
        |H_1(\tau,r, \theta, \varphi, \sigma)|\leq \Phi_1(\tau)r^{-1}
    \end{align*}
    and $H_0$ a smooth function and $\Phi_1\in L^1(\R)$. 
\end{enumerate}
Then 
\begin{align}
    \del_\sigma \int_{\R^4} H(x, \sigma) \sqrt{-g}~dx = \int_{\R^4} \del_\sigma H(x, \sigma) \sqrt{-g}~dx.
\end{align}
\end{lemma}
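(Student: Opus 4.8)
The plan is to \emph{not} try to differentiate under the integral sign directly in $x$: since the singular locus $\Gamma_\sigma:=\{Q^\al(\tau,\sigma):\tau\in\R\}$ moves with $\sigma$, the difference quotient $\tfrac1h\big(H(x,\sigma+h)-H(x,\sigma)\big)$ is not dominated by any fixed $L^1$-function of $x$, and no off-the-shelf theorem applies. Instead I would express the difference quotient of the integral as a running average of the \emph{candidate} derivative and then prove the latter is continuous in $\sigma$. Precisely, set
\begin{align*}
\Psi(\sigma):=\int_{\R^4}\del_\sigma H(x,\sigma)\,\sqrt{-g}\,dx,
\end{align*}
which is finite (see below). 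Every $x^\al$ lying on none of the curves $\Gamma_{\sigma'}$, $\sigma'\in(-a,a)$, is good, and the excluded set $\bigcup_{\sigma'}\Gamma_{\sigma'}$ is the image of a $C^1$-map of a two-dimensional parameter domain, hence a Lebesgue-null subset of $\R^4$. For such good $x^\al$, hypothesis~1 gives that $s\mapsto H(x^\al,s)$ is smooth near $[\sigma,\sigma+h]$, so $H(x^\al,\sigma+h)-H(x^\al,\sigma)=\int_\sigma^{\sigma+h}\del_s H(x^\al,s)\,ds$. Integrating in $x^\al$ and exchanging the order of integration (Fubini, justified below),
\begin{align*}
\frac1h\left(\int_{\R^4}H(x,\sigma+h)\sqrt{-g}\,dx-\int_{\R^4}H(x,\sigma)\sqrt{-g}\,dx\right)=\frac1h\int_\sigma^{\sigma+h}\Psi(s)\,ds,
\end{align*}
and the claim follows on letting $h\to0$ (for both signs of $h$) by the Lebesgue differentiation theorem — equivalently, by continuity of $s\mapsto\int_\sigma^s\Psi$ — \emph{provided} $\Psi$ is continuous at $\sigma$.

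Both remaining points are handled by passing to the retarded light-cone coordinates $X^\al_\sigma$ attached to $Q^\al(\cdot,\sigma)$. For each $\sigma$ this is a $C^1$-bijection of $\R\times(0,\infty)\times S^2$ onto $\R^4\setminus\Gamma_\sigma$ (of full measure), and — crucially — the associated volume element $\sqrt{-g}\,dx=r^2\sin\theta\,d\tau\,dr\,d\theta\,d\varphi$ is \emph{independent of $\sigma$}, by the metric-determinant identity $g=-r^4\sin^2\theta$. Hence for every $\sigma$,
\begin{align*}
\int_{\R^4}|\del_\sigma H(x,\sigma)|\,\sqrt{-g}\,dx&=\int_{\R\times(0,\infty)\times S^2}\big|(\del_\sigma H)(X^\al_\sigma(\tau,r,\theta,\varphi),\sigma)\big|\,r^2\sin\theta\,d\tau\,dr\,d\theta\,d\varphi\\
&\le 4\pi\,\big\|r^2\Phi\big\|_{L^1(\R\times(0,\infty))},
\end{align*}
by hypothesis~2; the same bound applied to $H$ itself shows $\int_{\R^4}H(x,\sigma)\sqrt{-g}\,dx$ is finite, and since the bound on $|\del_\sigma H|$ is uniform in $\sigma$, Tonelli/Fubini on $\R^4\times[\sigma,\sigma+h]$ is legitimate. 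For the continuity of $\Psi$, in these coordinates $\Psi(\sigma)=\int (\del_\sigma H)(X^\al_\sigma(\tau,r,\theta,\varphi),\sigma)\,r^2\sin\theta\,d\tau\,dr\,d\theta\,d\varphi$; for each fixed $(\tau,r,\theta,\varphi)$ with $r>0$ the integrand is continuous in $\sigma$, because $\sigma\mapsto X^\al_\sigma(\tau,r,\theta,\varphi)$ is continuous (continuous dependence of $Q^\al$ and of the Fermi-transported tetrad on $\sigma$ — the tetrad is $C^1$ in $\sigma$ by standard ODE theory once it is chosen to vary continuously at the reference parameter), the limiting point $(X^\al_{\sigma_0}(\tau,r,\theta,\varphi),\sigma_0)$ lies in $D$ since $r>0$, and $\del_\sigma H$ is continuous on $D$ by hypothesis~1; dominated convergence with the $\sigma$-independent majorant $r^2\Phi\sin\theta$ then yields continuity of $\Psi$ and finishes the proof.

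I expect the conceptual obstacle — and the reason hypotheses~2--3 are imposed rather than invoking a standard differentiation-under-the-integral result — to be exactly that a direct interchange fails \emph{even after passing to light-cone coordinates}: there $\del_\sigma\big(H\circ X^\al_\sigma\big)=(\del_\sigma H)\circ X^\al_\sigma+\nabla_x H\cdot\del_\sigma X^\al_\sigma$, and since $H\circ X^\al_\sigma$ behaves like $H_0(\tau,\sigma)\,r^{-2}$ near the world-line (hypothesis~3), the chain-rule term is only $O(r^{-3})$, so $r^2\,\del_\sigma(H\circ X^\al_\sigma)$ is $O(r^{-1})$, which is not integrable in $r$ at the origin. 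The running-average identity of the first paragraph circumvents this completely: it never needs pointwise control of $\del_\sigma(H\circ X^\al_\sigma)$, only the $L^1(d\tau\,dr\,d\theta\,d\varphi)$-bound on $(\del_\sigma H)\circ X^\al_\sigma$ from hypothesis~2 together with the almost-everywhere-in-$x$ fundamental theorem of calculus in $\sigma$. In this organization hypothesis~3 enters only through the bound it supplies, and it serves in the applications as the convenient device for verifying hypothesis~2.
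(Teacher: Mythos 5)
Your proof is correct, but it takes a genuinely different route from the paper's. The paper excises a world-tube $W_\eps$ around $Q(\cdot,\sigma)$, differentiates the outer integral $I_{1,\eps}$ under the integral sign in the moving light-cone coordinates --- which by the chain rule produces the extra transport term $\nabla_\al(HV^\al)$ with $V^\al=\del_\sigma X^\al_\sigma$ --- and controls the difference quotient of the full integral via the mean value theorem, choosing $\eps=\eps(\sigma)$ so that the inner piece $I_{2,\eps}$ contributes only $O(\sigma^2)$. The transport term is then converted by Gauss's theorem into a boundary integral over $\del W_\eps$, and this is where hypothesis~3 is genuinely used: the leading $H_0\,r^{-2}$ behaviour combines with $\int_{\sph}N_\al\sin\theta\,d\theta\,d\varphi=0$ to kill the $O(1)$ part of the boundary contribution, leaving $O(\eps)$. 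Your argument sidesteps all of this: the pointwise fundamental theorem of calculus in $\sigma$ (valid off the Lebesgue-null set $\bigcup_{\sigma'}\Gamma_{\sigma'}$, the $C^1$-image of a two-dimensional parameter domain in $\R^4$), Fubini (justified by the $\sigma$-uniform $L^1$ bound of hypothesis~2 together with the $\sigma$-independent Jacobian $r^2\sin\theta$), and continuity of $\Psi$ by dominated convergence. This is shorter and more elementary, never requires pointwise control of the Lagrangian derivative $\del_\sigma(H\circ X^\al_\sigma)$ --- which, as you correctly observe, is only $O(r^{-3})$ and hence not absolutely integrable against $r^2\,dr$ --- and in fact proves the lemma without invoking hypothesis~3 at all. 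What the paper's computation buys in exchange is an explicit identification of the correction term and of the angular cancellation that makes the naive interchange ``almost'' work, a mechanism reused elsewhere in Section~\ref{sec_var}; as a proof of the stated lemma, however, your organization is cleaner. If you wrote this up you should make fully explicit the joint measurability of $(x,s)\mapsto\del_s H(x,s)$ (immediate from continuity on $D$ and the null complement) and the continuous dependence of the Fermi tetrad, hence of $X^\al_\sigma$, on $\sigma$, which you already flag.
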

\begin{proof}
We first note that the retarded light-cone coordinate mapping $X_\sigma(\tau, r, \theta, \varphi)$ 
is a smooth diffeomorphism of $R = \R\times (0, \infty)\times \sph$ in and onto $\R^4 \setminus \{Q(\cdot, \sigma)\}$ for each $\sigma$. We change from cartesian coordinates into retarded light-cone coordinates by $(x^\al)\mapsto (y^{\al'}), y^{\al'} = (\tau, r, \theta, \varphi)$.

Define the vector field $V$ by 
\begin{align*}
    V^\al(x^\mu) := \left.\del_\sigma X^\al_\sigma\right|_{X^{-1}_\sigma(y^\mu)}
\end{align*}
where $V^\al$ are the components of the vector field with respect to a standard cartesian Minkowski coordinate system. In the $(y^{\al'})$ coordinate system the components of $V$ are given by 
\begin{align*}
    V^{\al'}(y^\mu) = \left.\frac{\del y^{\al'}}{\del x^\al }\del_\sigma X^\al_\sigma\right|_{X^{-1}_\sigma(y^\mu)} =
    \left.\frac{\del (X^{-1}_\sigma)^{\al'}}{\del x^\al} \del_\sigma X^\al_\sigma\right|_{X^{-1}_\sigma(y^\mu)}.
\end{align*}
We first write $I(\sigma) := \int_{\R^4} H(x, \sigma) \sqrt{-g}~dx = I_{1,\eps}(\sigma)+I_{2,\eps}(\sigma)$ where
\begin{align*}
    I_{1,\eps}(\sigma) =\int_{\{X^\al_\sigma(\tau, r,\theta, \varphi) : r \geq \eps \}} H \sqrt{-g}~d x
\end{align*}
and study the differentiability of $I_{1,\eps}$.
By changing coordinates via the mapping $X_\sigma$ we get
\begin{align*}
    I_{1,\eps}(\sigma) = \int_{R_\eps} H\circ X_\sigma ~\sqrt{-g}~dy
\end{align*}
with the $g$ being the metric determinant in the $(y^{\al'})$-coordinates and $R_\eps = \R\times [\eps, \infty)\times \sph, \sqrt{-g}~dy = r^2 \sin \theta~ d\tau~ dr~ d\theta~ d\varphi$.
We will now use standard results on differentiating parameter-dependent integrals to interchange integration and $\del_\sigma$. First note that $\sigma \mapsto H\circ X_\sigma(y)$ is continuously differentiable on $(-a, a)$ for fixed $y\in R = \R\times [0,\infty)\times \sph$ and that by the chain rule
\begin{align*}
    \del_\sigma (H\circ X_\sigma) = \del_\sigma X^\mu_\sigma \del_\mu H \circ X_\sigma  + \del_\sigma H \circ X_\sigma
\end{align*}
on $R_\eps$.
A calculation using $\del_\sigma g = g^{\al\beta} \del_\sigma g_{\al\beta}$ gives 
\begin{align*}
    \del_\sigma I_{1,\eps}(\sigma) &= \int_{R_\eps} \left[\del_\sigma H \circ X_\sigma +  \nabla_{\al'}( (H V^{\al'}) \circ X_\sigma )\right] ~~\sqrt{-g}~dy\\
    &=\int_{\{X^\al_\sigma(\tau, r,\theta, \varphi) : r \geq \eps \}} \left[\del_\sigma H  +  \nabla_{\al} (H V^{\al})\right] ~~\sqrt{-g}~dx.
\end{align*}
Here, we changed back to $(x^\al)$-coordinates. Next, we consider the difference quotient
\begin{align*}
    &\frac{I(\sigma)-I(0)}{\sigma} = \frac{I_{1,\eps}(\sigma)-I_{1,\eps}(0)}{\sigma}+
    \frac{I_{2,\eps}(\sigma)-I_{2,\eps}(0)}{\sigma}
     = \del_\sigma I_{1,\eps}(\eta(\eps, \sigma)) + \frac{I_{2,\eps}(\sigma)-I_{2,\eps}(0)}{\sigma}
\end{align*}
where $|\eta(\eps, \sigma)|\leq \sigma$. By Assumption 2, for given fixed $\sigma$, $\eps=\eps(\sigma)>0$ can be chosen so small such that $|I_{2,\eps}(\sigma)|+|I_{2,\eps}(0)|\leq \sigma^2$. Next we estimate $\del_\sigma I_{1,\eps}(\eta(\eps, \sigma))$. 
Note that $\nabla_\al (H V^\al) = \frac{1}{\sqrt{-g}} \del_\mu (H V^\mu \sqrt{-g})$ by a well-known formula. Using integration by parts,
\begin{align*}
&\int_{\{X^\al_\eta(\tau, r,\theta, \varphi) : r \geq \eps \}} \nabla_{\al} \left.(H  V^{\al})\right|_{\sigma=\eta} ~~\sqrt{-g}~dx = - \int_{\del \{X^\al_\eta(\tau, r,\theta, \varphi) : r \geq \eps \}} \left.H V^\al\right|_{\sigma=\eta}~d\Sigma_\al \\
&= -\int_{\R}\int_{\sph} \left(H_0(\tau, \eta)\eps^{-2} + H_1(\tau, \eps, \theta, \varphi, \eta)\right)V^\al N_\al~\eps^2 \sin \theta~d\tau d\theta d\varphi \\
&= -\int_{\R} H_0(\tau, \eta) \int_{\sph}  V^\al N_\al~ \sin \theta~d\tau d\theta d\varphi - \int_{\R}\int_{\sph} H_1(\tau, \eps, \theta, \varphi, \eta) V^\al N_\al~\eps^2 \sin \theta~d\tau d\theta d\varphi
\end{align*}
Now note that $V^\al = \del_\sigma X^\al = \del_\sigma Q^\al + r (\del_\sigma u^\al + \del_\sigma N^\al) = \xi^\al(\tau, \eta) + r (\del_\tau \xi^\al + \del_\sigma N_\al)$ which means that $V^\al = \xi^\al + \calO(\eps)$. Note moreover that
\begin{align*}
    \int_{\sph}  N_\al(\theta, \varphi, \sigma) \sin\theta\, d\theta d\varphi = 0 
\end{align*}
for all $\sigma$. Using Assumption 3, 
\begin{align*}
    \int_{\{X^\al_\eta(\tau, r,\theta, \varphi) : r \geq \eps \}} \nabla_{\al} \left.(H  V^{\al})\right|_{\sigma=\eta} ~~\sqrt{-g}~dx = \calO(\eps)
\end{align*}
uniformly for all $\eta$ in a small, fixed neigborhood of $0$. Also,
\begin{align*}
    \int_{\{X^\al_\sigma(\tau, r,\theta, \varphi) : r \geq \eps \}} \del_\sigma H ~\sqrt{-g}~dx = \int_{\R^4} \del_\sigma H(x, 0) ~\sqrt{-g}~dx + \calO(\eps)
\end{align*}
uniformly for all $\eta$ in a small, fixed neigborhood of $0$. Therefore,
\begin{align*}
    \lim_{\sigma \to 0}\frac{I(\sigma)-I(0)}{\sigma} = \int_{\R^4} \del_\sigma H(x, 0) ~\sqrt{-g}~dx 
\end{align*}
as desired. 
\end{proof}

{\small

}

\Addresses

\end{document}